\documentclass[12pt,a4paper]{iopart}
\makeatletter
\expandafter\let\csname equation*\endcsname\relax
\expandafter\let\csname endequation*\endcsname\relax
\makeatother
\usepackage[T1]{fontenc}
\usepackage{amsmath,amssymb,amsthm}
\usepackage{microtype}
\usepackage[hidelinks]{hyperref}
\usepackage{graphicx}
\usepackage{placeins}
\usepackage{enumitem}

\newtheorem{theorem}{Theorem}[section]
\newtheorem{lemma}[theorem]{Lemma}
\newtheorem{proposition}[theorem]{Proposition}
\newtheorem{corollary}[theorem]{Corollary}
\newtheorem{definition}[theorem]{Definition}
\newtheorem{remark}[theorem]{Remark}
\theoremstyle{definition}

\theoremstyle{plain}

\newcommand{\R}{\mathbb{R}}
\newcommand{\DKL}{D_{\mathrm{KL}}}
\newcommand{\DivJ}{D_{\!J}}
\newcommand{\TR}{T_{\mathrm{R}}}
\newcommand{\Ebar}{E_{\mathrm{mc}}}
\newcommand{\keywordblock}[1]{\par\medskip\noindent Keywords: #1}

\begin{document}

\title[Finite-state Gibbs from a recognition cost]{A Finite-State Gibbs Construction from a Recognition Cost}
\author{M Simons$^1$ and J Washburn$^1$}
\address{$^1$Recognition Science, Recognition Physics Institute, Austin, Texas, USA}
\ead{msimons@recognitionphysics.org}

\begin{abstract}
On a finite outcome space, the canonical Gibbs distribution is usually obtained by maximizing Shannon entropy at fixed mean of an externally supplied energy functional. This paper studies the finite-state consequences of a ratio-cost construction instead: after adopting the normalized d'Alembert degree-two closure called the Recognition Composition Law (RCL), with unit log-curvature calibration at the reference ratio, the continuous nontrivial positive branch is $J(x)=\tfrac12(x+x^{-1})-1=\cosh(\log x)-1$. Given the induced cost vector $X_\omega=J(r_\omega)$, multinomial counting and convex duality recover the finite-state Gibbs weights and the identity $F_{\mathrm{R}}(q)-F_{\mathrm{R}}(p)=T_{\mathrm{R}}\,D_{\mathrm{KL}}(q\Vert p)$; the entropy-maximization steps are classical once the cost is fixed. New technical content includes a non-asymptotic Stirling bound and soft-shell constrained-type theorems for real-valued costs. A three-state example compares the Gibbs law to squared-log, affinity-as-energy, and Tsallis alternatives at the same cost vector and mean-cost constraint, with sample-size power calculations at fixed RCL ground truth. The framework is conditional on axioms (A1)--(A3) and restricted to finite outcome spaces with strictly positive weights; it does not derive the composition law from a more primitive principle.
\end{abstract}

\ams{82B03, 82B05, 60F10, 94A15}

% Target journal to be chosen after arXiv posting; uncomment and adjust \submitto{...} for IOP submissions.
%\submitto{J. Stat. Mech.}
\maketitle

\keywordblock{classical statistical mechanics, equilibrium and non-equilibrium; entropy; information theory; large deviations in non-equilibrium systems; rigorous results in statistical mechanics.}

\section{Introduction}

On a finite outcome space, adopting the normalized d'Alembert degree-two closure called the Recognition Composition Law (RCL) of Section~\ref{sec:rcl}, together with unit log-curvature at the reference ratio, selects an energy-like cost on the dimensionless ratios that label outcomes. Once this cost is fixed, the standard finite-state machinery of multinomial counting and convex duality gives the Boltzmann--Gibbs measure, the Helmholtz free energy, and the Kullback--Leibler relaxation theorem. The axiomatic commitment is not derived from a more primitive principle; the paper studies its statistical-mechanical consequences.

\subsection{Motivation and scope}
Statistical mechanics rests on the Boltzmann--Gibbs distribution~\cite{Boltzmann1877,Gibbs1902}
\begin{equation}
  p_i=\frac{e^{-E_i/k_{\mathrm{B}}T}}{Z},\qquad
  Z=\sum_i e^{-E_i/k_{\mathrm{B}}T},
  \label{eq:bg}
\end{equation}
where $E_i$ are energy levels, $T$ is temperature, and $k_{\mathrm{B}}$ is Boltzmann's constant. The canonical form~\eqref{eq:bg} is not a postulate: it is derived from more basic postulates---typically the equal \emph{a priori} probability hypothesis, or the microcanonical ensemble in the thermodynamic limit---and the derivation takes the energy functional $E_i$ as given. Rigorous mathematical foundations for this derivation exist~\cite{Ruelle1969}. Jaynes's maximum-entropy program~\cite{Jaynes1957,Jaynes1957II,PresseGhoshLeeDill2013} reframes the Gibbs distribution as the result of maximizing Shannon entropy subject to an energy constraint, but the energy functional is still taken as given. The Shore--Johnson axiomatization~\cite{ShoreJohnson1980,Caticha2012} fixes Shannon entropy as the unique self-consistent inference rule under linear constraints, but again leaves the choice of constraint functional open. Generalizations such as Tsallis non-extensive statistics~\cite{Tsallis1988} and Naudts's $\phi$-exponential families~\cite{Naudts2011} replace the entropy functional itself; for a modern survey of the large-deviation framework that organizes much of this material, see~\cite{Touchette2009}.

On a finite configuration space, the canonical construction can be phrased entirely in terms of constrained empirical compositions: a type plays the role of a macrostate, its multinomial weight determines an entropy density, and the Gibbs measure appears as the maximizer at fixed mean cost. This is the variational picture that underlies equivalence of ensembles and large-deviation rate functions in classical lattice and mean-field models. Here it is specialized to a single affine cost shell on a finite, strictly positive probability space, with a cost $J$ tied to ratio fluctuations rather than to a spatial Hamiltonian.

\paragraph{What is fixed and what is left as input.}
We work on a finite outcome space throughout, and the framework is restricted to that setting; continuum configuration spaces, hard-core or singular interactions, quantum reduced states, spatially extended Gibbs measures, ensemble equivalence in the sense of Ruelle~\cite{Ruelle1969} or Lanford~\cite{Lanford1973}, and the Boltzmann $H$-theorem for many-particle systems lie outside its scope. Within that setting, the framework supplies a single notational thread $X_\omega=J(r_\omega)$ connecting (a)~a multiplicative--reciprocal functional equation on positive ratios, (b)~the resulting closed-form cost, (c)~the multinomial counting that yields Shannon entropy as the rate function, and (d)~the canonical ensemble on the corresponding affine cost shell. The paper's nonstandard input is the cost-determining functional equation and its finite-state consequences; the remaining variational and large-deviation steps are the standard backbone shared with Jaynes's program. Modeling freedom is thereby relocated from the choice of cost functional to the choice of input ratios and a single calibration constant, not eliminated: specifying $\{r_\omega\}$ is itself a modeling commitment, and Section~\ref{sec:other-axiomatics} shows that the choice of the d'Alembert composition law over its Cauchy, Tsallis, or Naudts alternatives is a further modeling commitment with empirically distinguishable consequences (Section~\ref{sec:tsallis-compare}).

Section~\ref{sec:axioms} states the contribution and its expository and methodological components explicitly; Table~\ref{tab:contribution-status} provides a roadmap of where each formal result enters the chain.

\paragraph{What is new here.}
In this sense, the novelty is not a new entropy principle or a new proof of the Gibbs variational principle, but the relocation of the modeling choice to dimensionless ratio data and a reciprocal composition law for their scalar cost. Section~\ref{sec:axioms} and Table~\ref{tab:contribution-status} record the technical components that follow once that commitment is adopted.

\subsection{Recognition Composition Law and fixed cost}
\label{sec:rcl}

Throughout the paper we abbreviate ``Recognition Composition Law'' as RCL: it is the multiplicative--reciprocal functional equation~\eqref{eq:composition} below, which prescribes how the scalar cost $J$ of a positive dimensionless ratio $x$ combines with that of a second ratio $y$ when the two are merged ($xy$) and compared ($x/y$).

\subsubsection{Physical setting: ratios as primitive bookkeeping variables.}
The RCL arises in a ratio-cost framework~\cite{WashburnZlatanovic2026} in which positive dimensionless ratios, rather than energy levels, serve as the primitive bookkeeping variable for the outcomes of a finite system. The natural physical instances are reversible Markov chains and chemical reaction networks: each assigns a positive dimensionless ratio to every microscopic process or state, and each composes those ratios multiplicatively rather than additively.

Two concrete instances illustrate the point. In a reversible continuous-time Markov chain on a state space $\Omega$ with stationary distribution $\pi$ and transition rates $W_{ij}$, attach to each state $a\in\Omega$ a dimensionless ratio $r_a:=\pi_a/\pi_{a_0}$ relative to a chosen reference state $a_0$ (so $r_{a_0}=1$). Detailed balance $\pi_a W_{ab}=\pi_b W_{ba}$ then assigns the edge ratio $r_b/r_a=\pi_b/\pi_a=W_{ab}/W_{ba}$ to the edge $a\!\leftrightarrow\!b$. When two such edges are concatenated, the corresponding odds ratios multiply: the product $r_b/r_a\cdot r_c/r_b=r_c/r_a$ is the natural composition. In a reaction network, equilibrium constants $K_{\mathrm{eq}}=[B]/[A]$ for $A\rightleftharpoons B$ play the same role---combining two reactions $A\rightleftharpoons B$ and $B\rightleftharpoons C$ multiplies their equilibrium constants because affinities $\Delta\mu/k_{\mathrm{B}}T=\log K_{\mathrm{eq}}$ add additively while the dimensionless ratios themselves multiply.
As a reminder, the structural features (S1)--(S3) below are modeling assumptions imported from the bookkeeping conventions used alongside detailed balance and chemical equilibrium, and they are not implied by detailed balance itself.

\paragraph{Modeling conditions on a ratio-cost.}
We adopt three structural features of detailed-balance and equilibrium-constant bookkeeping as modeling assumptions on the scalar cost $J:\R_{>0}\to\R$; they are not implied by detailed balance or any more primitive physical principle, but summarize the symmetries we import from that bookkeeping context:
\begin{enumerate}
\item[(S1)] Reciprocity (orientation invariance). The same edge $a\!\leftrightarrow\!b$ can be read in either direction, so the ratios $r$ and $r^{-1}$ describe the same reversible process; we require $J(x)=J(x^{-1})$.
\item[(S2)] Reference vanishing. The reference ratio $x=1$ corresponds to balanced odds, so $J(1)=0$.
\item[(S3)] Closure under products and quotients. Given two positive ratios $x,y$, the algebraic operations on $\R_{>0}$ produce two further positive ratios---the product $xy$ (path composition) and the quotient $x/y$ (relative odds of two parallel edges). We require that the joint cost $J(xy)+J(x/y)$ be a function only of $J(x)$ and $J(y)$, symmetric under $(x,y)\mapsto(y,x)$ and $(x,y)\mapsto(x^{-1},y^{-1})$.
\end{enumerate}

Conditions (S1)--(S3) do not determine a unique $J$: many functions satisfy them, and (S3) only names a class of composition rules without selecting one. The d'Alembert law~\eqref{eq:composition} below is a separate postulate.

\paragraph{Choice of composition rule.}
We single out the d'Alembert composition law
\[
  J(xy)+J(x/y)=2J(x)J(y)+2J(x)+2J(y),
\]
i.e.\ equation~\eqref{eq:composition} below, by adding one further postulate: $J(xy)+J(x/y)$ depends polynomially on $J(x)$ and $J(y)$ with total degree at most two, and the only constant term is the one forced by (S2). Symmetry of $J(xy)+J(x/y)$ under $(x,y)\mapsto(y,x)$ is automatic from~(S1) and is not a separate postulate. This degree-two closure is a substantive modeling commitment, not a consequence of (S1)--(S3) or of detailed balance alone. The motivation for adopting it is twofold. First, if the composition law is required to be jointly analytic in $J(x)$ and $J(y)$ near the reference state $x=y=1$, the constant, linear, and bilinear terms collected in~\eqref{eq:composition} are exactly the terms that survive after imposing (S1)--(S2). Degree-two closure thus selects the simplest analytic truncation that is compatible with (S1)--(S3) and admits a non-constant solution with $J(1)=0$. Second, the resulting two-variable polynomial dependence is observable: the joint cost $J(xy)+J(x/y)$ depends only on the individual costs $J(x),J(y)$ and their product $J(x)J(y)$, not on any further function of the underlying ratios.

\paragraph{Why degree two rather than higher order.}
The degree-two closure is the minimal analytic closure that includes a genuine pairwise interaction between two ratio-cost observations. A purely additive closure in $J(x)$ and $J(y)$ gives the squared-log branch in log-coordinates; adding the bilinear term $J(x)J(y)$ is the first non-additive correction compatible with a two-ratio composition rule. Operationally, this means that the combined cost of the two observable ratios $xy$ and $x/y$ is determined by $J(x)$, $J(y)$, and their pairwise product, with no auxiliary state, memory variable, or higher-order coupling. The closure is also testable as a finite relation among four observable costs, $J(xy)$, $J(x/y)$, $J(x)$, and $J(y)$. Higher-degree or non-polynomial closures are possible and are not ruled out by (S1)--(S3); the present paper treats degree two as the minimal non-additive analytic closure and studies the consequences of that disciplined modeling choice. A generic degree-three closure compatible with (S1), (S2), $J(1)=0$, and a non-constant continuous solution is tightly constrained, and no closed-form analogue of $\cosh$ is known for the cubic case; that is the operational reason the manuscript restricts to degree two rather than opening a separate classification project.

Among the natural low-degree alternatives considered here are the trivial branch $J\equiv 0$, the additive d'Alembert/Cauchy branch in log-coordinates $G(u+v)+G(u-v)=2G(u)+2G(v)$ (whose continuous solutions $J=c(\log x)^2$ recover the squared-log surrogate revisited in Section~\ref{sec:ancestor}), and the multiplicative Cauchy law $J(xy)=J(x)J(y)$ (incompatible with $J(1)=0$ unless $J\equiv 0$). The list is illustrative rather than exhaustive once the d'Alembert postulate is relaxed; Section~\ref{sec:other-axiomatics} collects a fuller tabulation of alternatives and their maximum-entropy distributions.

In log-coordinates the closure has a transparent form: writing $G(u):=J(e^u)$, the d'Alembert composition becomes the d'Alembert functional equation~\eqref{eq:dAlembert-G}, and Theorem~\ref{thm:wz-uniqueness} below selects $G(u)=\cosh u-1$. The squared-log surrogate $G(u)=u^2/2$ corresponds to the additive Cauchy law and agrees with the d'Alembert solution to second order at $u=0$; Lemma~\ref{lem:ancestor} shows that the d'Alembert solution dominates pointwise everywhere else.

\subsubsection{\texorpdfstring{The composition law and its $y=1$ reduction}{The composition law and its y=1 reduction}.}
The starting point is the d'Alembert-type composition law on $\R_{>0}$ used by Washburn and Zlatanović~\cite{WashburnZlatanovic2026} (their equation~(3)); multiplicative functional equations of this family are classical~\cite{Aczel1966,Kuczma2009}.
The degree-two condition by itself leaves the normalization of the bilinear interaction as a modeling choice; in this paper the RCL is the normalized d'Alembert closure obtained by fixing that coefficient as in equation~\eqref{eq:composition} below, after which the unit log-curvature calibration fixes the remaining scale of the continuous solution.
\begin{equation}
  J(xy)+J(x/y)=2J(x)J(y)+2J(x)+2J(y),
  \label{eq:composition}
\end{equation}
for an unknown cost $J:\R_{>0}\to\R$. The left-hand side aggregates the cost of the merged ratio $xy$ with the cost of the compared ratio $x/y$; the right-hand side combines the individual costs $J(x)$ and $J(y)$ with their multiplicative interaction $J(x)J(y)$. We take~\eqref{eq:composition} as the composition rule adopted in the Recognition framework~\cite{WashburnZlatanovic2026} and study its calibrated consequences on a finite outcome space. An extension of this framework to additional cost structures appears in~\cite{WashburnRahnamai2026}.

Setting $y=1$ in~\eqref{eq:composition} gives
\[
  2J(x) = 2J(x)J(1) + 2J(x) + 2J(1).
\]
Subtracting $2J(x)$ from both sides leaves $0 = 2J(x)J(1) + 2J(1)$, and factoring $2J(1)$ on the right yields
\begin{equation}
  0 \;=\; 2J(x)\,J(1) + 2J(1) \;=\; 2J(1)\bigl(J(x) + 1\bigr),
\label{eq:y1-reduction}
\end{equation}
valid for every $x > 0$. Thus either $J(1) = 0$ or $J(x) = -1$ for every $x>0$ (hence $J\equiv -1$)~\cite{WashburnZlatanovic2026}; in the former case, reciprocity $J(x)=J(x^{-1})$ follows by setting $x=1$ in~\eqref{eq:composition} and using $J(1)=0$~\cite{WashburnZlatanovic2026}. On the nontrivial branch selected below, $G(u)=J(e^u)$ satisfies $G(u)=\cosh u-1$ near $u=0$, hence $G'(0)=0$ and the cost vanishes to first order at the reference ratio.

\subsubsection{Calibration and classification.}
The composition law~\eqref{eq:composition} together with $J(1)=0$ admits the one-parameter family $x\mapsto\cosh(\lambda\log x)-1$ of admissible solutions, indexed by the strength $\lambda\ge 0$ of the cost penalty. To fix this remaining scale, we calibrate $J$ at the reference state by its second derivative in log-coordinates. Define the log-curvature of $J$ at the reference state by
\[
  \kappa(J):=\lim_{t\to 0}\frac{2J(e^t)}{t^2}
\]
whenever the limit exists; equivalently $\kappa(J)=\lim_{x\to 1} 2J(x)/(\log x)^2$. The condition $\kappa(J)=1$ excludes the spurious solution $J\equiv -1$ and fixes the scale in the one-parameter family $x\mapsto \cosh(\lambda\log x)-1$ of solutions of~\eqref{eq:composition} with $J(1)=0$ selected by Theorem~\ref{thm:wz-uniqueness}. The second-order Taylor coefficient is the natural calibration handle because~(S1)--(S2) force the first-order coefficient to vanish. The choice $\kappa(J)=1$ is the natural normalization: it is the unique scale at which $\DivJ(q\Vert p)$ reduces, near $q=p$, to the standard half-Neyman $\chi^2$ statistic $\tfrac12\chi^2(q\Vert p)$ on the simplex (Proposition~\ref{prop:divj-basic}\,(ii)), and equivalently the unique scale at which the squared-log surrogate $\tfrac12(\log x)^2$ from Lemma~\ref{lem:ancestor} agrees with $J$ to second order at $x=1$. Any other choice $\kappa(J)=\lambda^2$ would amount to an overall rescaling of $\beta$ and $\TR$ in Sections~\ref{sec:gibbs}--\ref{sec:fe} without altering the Gibbs measure or the free-energy--Kullback--Leibler identity.

Theorem~\ref{thm:wz-uniqueness} below is the central algebraic input to the rest of the paper. It matches the classification proved in the formal foundations of Washburn and Zlatanović~\cite{WashburnZlatanovic2026} under measurability and a local integrability hypothesis on logarithmic coordinates.

\begin{theorem}[Classification of measurable solutions of the RCL; selection by calibration]
\label{thm:wz-uniqueness}
Let $J:\R_{>0}\to\R$ be Lebesgue measurable and satisfy the composition law~\eqref{eq:composition}. Define $G(u):=J(e^u)$ for $u\in\R$, and assume the \emph{local integrability} condition
\begin{equation}
  \int_{-\varepsilon}^{\varepsilon}\lvert G(u)\rvert\,\mathrm{d}u<\infty
  \qquad\text{for some }\varepsilon>0,
  \label{eq:J-local-int}
\end{equation}
as in Washburn and Zlatanović~\cite{WashburnZlatanovic2026}. Then either $J\equiv -1$, or $J(1)=0$ and there exists $\lambda\ge 0$ with
\[
  J(x)=\cosh(\lambda\log x)-1\quad\text{or}\quad J(x)=\cos(\mu\log x)-1\quad(\mu\ge 0).
\]
Imposing in addition the calibration $\kappa(J)=1$ excludes the spurious branch $J\equiv -1$, the cosine branch (which has $\kappa=-\mu^2\le 0$), and the trivial $\lambda=0$ representative; the remaining solution is $\lambda=1$, giving
\begin{equation}
  J(x)=\tfrac{1}{2}\bigl(x+x^{-1}\bigr)-1.
  \label{eq:J-closed}
\end{equation}
\end{theorem}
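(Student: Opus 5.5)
The plan is to reduce the composition law to the classical d'Alembert functional equation and then quote the standard classification of its measurable solutions. First, the $y=1$ reduction~\eqref{eq:y1-reduction} shows that either $J\equiv -1$ or $J(1)=0$; from now on we are in the second case. Put $f(u):=G(u)+1=J(e^u)+1$. Substituting $x=e^u$, $y=e^v$ into~\eqref{eq:composition} gives
\[
  G(u+v)+G(u-v)=2G(u)G(v)+2G(u)+2G(v).
\]
Adding $2$ to both sides, this becomes
\[
  f(u+v)+f(u-v)=2f(u)f(v),
\]
with $f(0)=1$. This is the d'Alembert equation on $\R$. Measurability of $J$ transfers to $f$, because $u\mapsto e^u$ is a diffeomorphism. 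Hypothesis~\eqref{eq:J-local-int} makes $f$ integrable on $[-\varepsilon,\varepsilon]$.

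The main step is regularity: showing that a measurable, locally integrable solution is continuous, and in fact smooth. Here I would try the standard mollification trick. Choose a smooth bump $\phi$ supported in $(-\varepsilon,\varepsilon)$ with $c:=\int f\phi\neq 0$. Such a $\phi$ exists, because otherwise $f=0$ almost everywhere near $0$. That would force $f(2u)+1=2f(u)^2$ to vanish almost everywhere near $0$, which would give $f(2u)=-1$ almost everywhere there, contradicting $f=0$ almost everywhere there. Integrating the equation against $\phi(v)$ in $v$ gives
\[
  2c\,f(u)=\int f(u+v)\phi(v)\,dv+\int f(u-v)\phi(v)\,dv .
\]
The right-hand side is a convolution of $f$ with smooth compact kernels, so it is continuous wherever $f$ is locally integrable near $u$. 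The equation itself propagates local integrability: the relation $f(2u)=2f(u)^2-1$ together with the convolution identity extends integrability to $[-2\varepsilon,2\varepsilon]$, and by iteration to all of $\R$. Hence $f$ is continuous. Bootstrapping the same identity then gives $f\in C^\infty$. This propagation-of-integrability step is where I expect the most care to be needed; if it resists, I would instead cite the classical measurable d'Alembert theorem in Acz\'el~\cite{Aczel1966} or Kuczma~\cite{Kuczma2009}, as Washburn and Zlatanovi\'c do.

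With $f$ known to be $C^2$, differentiate twice in $v$ and set $v=0$. This gives $2f''(u)=2f(u)f''(0)$, that is, $f''=kf$ with $k:=f''(0)$. Setting $u=0$ in the original equation gives $f(v)+f(-v)=2f(v)$, so $f$ is even; hence $f'(0)=0$, and we also have $f(0)=1$. The solution of this initial-value problem splits into three cases:
\begin{itemize}
\item if $k=\lambda^2>0$, then $f=\cosh(\lambda u)$;
\item if $k=0$, then $f\equiv 1$, which is the $\lambda=0$ representative;
\item if $k=-\mu^2<0$, then $f=\cos(\mu u)$.
\end{itemize}
Translating back through $J(x)=f(\log x)-1$ yields exactly the two families in the statement.

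Finally, calibration selects the branch. The constant $J\equiv-1$ has $2J(e^t)/t^2\to-\infty$, so $\kappa$ is not $1$. The remaining branches have Taylor expansions
\[
  \cosh(\lambda t)-1=\tfrac12\lambda^2t^2+O(t^4),\qquad \cos(\mu t)-1=-\tfrac12\mu^2t^2+O(t^4),
\]
so $\kappa=\lambda^2$ on the cosh branch and $\kappa=-\mu^2$ on the cosine branch. Requiring $\kappa=1$ forces the cosh branch with $\lambda=1$. Then
\[
  J(x)=\cosh(\log x)-1=\tfrac12(x+x^{-1})-1,
\]
which is~\eqref{eq:J-closed}.
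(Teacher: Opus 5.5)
Your proof is correct and has the same skeleton as the paper's: the $y=1$ dichotomy, the shift $f=G+1$ to the classical d'Alembert equation, a regularity step, the classification, and selection by $\kappa(J)=1$. The difference is in the two analytic steps, which the paper treats as black boxes. It cites Washburn--Zlatanovi\'c for the fact that a measurable solution with $f(0)=1$ satisfying~\eqref{eq:J-local-int} agrees with a continuous one. It cites Acz\'el and Kuczma--Choczewski--Ger for the list $\cos(\mu u)$, $\cosh(\lambda u)$.

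You prove both steps. The first goes through the test-function identity $2c\,f(u)=\int f(w)\bigl(\phi(w-u)+\phi(u-w)\bigr)\,\mathrm{d}w$. The second goes through the initial-value problem $f''=f''(0)\,f$, $f(0)=1$, $f'(0)=0$. Your version is self-contained and shows exactly where local integrability enters: it makes $c$ and the convolutions well defined. The paper's version is shorter but relies entirely on the citations.

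Two tightenings for your regularity step:
\begin{itemize}
\item Take $\phi$ supported in $(-\varepsilon/2,\varepsilon/2)$. Then for $|u|<\varepsilon/2$ both integrals involve $f$ only on $(-\varepsilon,\varepsilon)$, where integrability is known.
\item The functional equation holds for \emph{every} $u$, not just almost every $u$. So the identity shows that $f$ itself, not merely an a.e.\ representative, is $C^\infty$ on $(-\varepsilon/2,\varepsilon/2)$. The doubling relation $f(u)=2f(u/2)^2-1$ then carries smoothness from $(-\delta,\delta)$ to $(-2\delta,2\delta)$, and by iteration to all of $\R$.
\end{itemize}
With this, no separate propagation of integrability or bootstrapping is needed. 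The doubling formula could not propagate bare $L^1$-integrability anyway, since $f\in L^1$ does not imply $f^2\in L^1$. What it propagates is continuity and smoothness.
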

\begin{proof}
First set $G(u):=J(e^u)$. With $x=e^u,y=e^v$, equation~\eqref{eq:composition} becomes
\begin{equation}
  G(u+v)+G(u-v)=2G(u)G(v)+2G(u)+2G(v),
  \qquad u,v\in\R,
  \label{eq:dAlembert-G}
\end{equation}
and $G$ is measurable because $J$ is. Setting $u=v=0$ gives $G(0)^2+G(0)=0$, so $G(0)\in\{0,-1\}$. If $G(0)=-1$, then~\eqref{eq:y1-reduction} forces $G(u)\equiv -1$, and $\kappa(J)=\lim_{u\to 0}-2/u^2=-\infty$, contradicting the calibration $\kappa(J)=1$. Hence $G(0)=0$.

Now set $H(u):=G(u)+1$; then $H(0)=1$ and substituting $G=H-1$ into~\eqref{eq:dAlembert-G} gives
\begin{equation}
  H(u+v)+H(u-v)=2H(u)H(v),\qquad u,v\in\R,
  \label{eq:dAlembert-H}
\end{equation}
the classical d'Alembert equation~\cite[Ch.~3, \S2]{Aczel1966}. Under measurability of $H$ (inherited from $J$) together with~\eqref{eq:J-local-int}, Washburn and Zlatanović~\cite{WashburnZlatanovic2026} show that every solution of~\eqref{eq:dAlembert-H} with $H(0)=1$ coincides pointwise with a continuous solution; in particular $H$ takes one of the classical forms
\begin{equation}
  H(u)=\cos(\mu u)\quad\text{or}\quad H(u)=\cosh(\lambda u),
  \qquad\mu,\lambda\ge 0,
  \label{eq:dAlembert-soln}
\end{equation}
as in the continuous classification of Aczél~\cite[Theorem~5, p.~177]{Aczel1966} and Kuczma--Choczewski--Ger~\cite[\S13.1]{Kuczma2009}. (Without such a regularity hypothesis, pathological non-measurable solutions of~\eqref{eq:composition} exist~\cite{WashburnZlatanovic2026}.)

The calibration selects the hyperbolic branch. For the cosine branch, $\kappa(J)=\lim_{u\to 0}2(\cos\mu u-1)/u^2=-\mu^2\le 0$, which cannot equal~$1$. For the hyperbolic branch,
\[
  1=\kappa(J)=\lim_{u\to 0}\frac{2(\cosh\lambda u-1)}{u^2}=\lambda^2,
\]
so $\lambda=1$, $G(u)=\cosh u-1$, and $J(e^u)=\cosh u-1$. Substituting $u=\log x$ and using $\cosh(\log x)=\tfrac12(x+x^{-1})$ gives~\eqref{eq:J-closed}.
\end{proof}

The measurability and local integrability hypotheses align the present statement with the Washburn--Zlatanović formal treatment~\cite{WashburnZlatanovic2026}; they also exclude the non-measurable pathologies recalled there. Continuity is not needed separately: the selected branch is smooth on $\R_{>0}$ once~\eqref{eq:J-closed} is identified.

The function $J$ measures the cost of a ratio $x$ deviating from unity. It satisfies $J(x)\ge 0$ with equality if and only if $x=1$, and $J(x)=J(1/x)$.

\subsection{Assumptions, contributions, and summary of results}
\label{sec:axioms}
The primitive modeling assumptions are:
\begin{enumerate}
\item[(A1)] The composition law~\eqref{eq:composition} for $J:\R_{>0}\to\R$, with motivation discussed in Section~\ref{sec:rcl}.
\item[(A2)] Unit log-curvature: $\kappa(J)=1$.
\item[(A3)] Lebesgue measurability of $J$ on $\R_{>0}$ together with the local integrability condition~\eqref{eq:J-local-int} on $G(u)=J(e^u)$.
\end{enumerate}
Theorem~\ref{thm:wz-uniqueness} shows that, conditional on (A1)--(A3), the selected cost is $J(x)=\tfrac12(x+x^{-1})-1$. The integrability condition in~\eqref{eq:J-local-int} is the same standing hypothesis used in Washburn and Zlatanović~\cite{WashburnZlatanovic2026} to rule out pathological non-measurable solutions while retaining measurability in place of continuity.

\paragraph{What is and what is not derived.} The axioms (A1)--(A3) are not derived from a more primitive principle. (A1) is the d'Alembert composition~\eqref{eq:composition}; we further restrict to it from the broader (S1)--(S3) class by adopting an analytic-degree-2 closure (Section~\ref{sec:rcl}). (A2) fixes a normalization. (A3) is a measurability and local-integrability regularity package matching~\cite{WashburnZlatanovic2026}. Joint adoption of these gives the closed form $J(x)=\tfrac12(x+x^{-1})-1$ and from there the standard exponential-family / Gibbs structure. The framework is conditional on the modeling commitment (A1)--(A3); it does not derive that commitment, and Section~\ref{sec:other-axiomatics} catalogues the alternatives that become admissible if any one of (S1)--(S3) or the degree-2 closure is dropped.

\paragraph{Contributions and expository components.}
The construction keeps the cost-selection, counting, and variational steps explicit. Its components are:
\begin{enumerate}[align=left, widest={(N2-soft, methodological)}, labelsep=0.55em, leftmargin=*, itemsep=2pt]
\item[(N1, expository)] The cost-classification theorem under (A1)--(A3) (Theorem~\ref{thm:wz-uniqueness}), proved in the measurability plus local integrability regime of Washburn and Zlatanović~\cite{WashburnZlatanovic2026}. The proof reduces~\eqref{eq:composition} to the d'Alembert functional equation in log-coordinates~\cite{Aczel1966,Kuczma2009} and invokes their classification of measurable solutions before applying the calibration $\kappa(J)=1$.
\item[(N2, methodological)] We record an explicit non-asymptotic Stirling bound for the multinomial weights (Proposition~\ref{prop:stirling-rate}), with all constants tracked through Robbins' two-sided Stirling inequalities~\cite{Robbins1955}. This is a direct application of a 1955 inequality, not a new estimate; its purpose is to make the regime of validity of the leading-order rate equality of Theorem~\ref{thm:types} (i.e.\ $N\gg K\log N$) checkable at finite $N,K$.
\item[(N2-soft, methodological)] The soft-shell constrained multinomial theorem (Theorem~\ref{thm:types}) for real-valued costs and shrinking tolerance windows, together with its multi-constraint extension (Proposition~\ref{prop:soft-shell-multi}), removes the exact rational-cost restriction needed for the exact-shell corollary (Corollary~\ref{cor:types-exact}) in the experimentally relevant setting in which measured ratios and costs are not rationally engineered.
\item[(N3, expository)] We unify the four steps under a common notation $\{r_\omega\}\to J\to X\to(\text{Gibbs},F_{\mathrm{R}},\DKL)$ and add (i) a tabulation of alternative composition laws and their maximum-entropy distributions (Section~\ref{sec:other-axiomatics}, Table~\ref{tab:composition-comparison}) and (ii) a side-by-side numerical comparison with Tsallis $q_T$-exponentials at the same cost vector and constraint (Section~\ref{sec:tsallis-compare}). The alternative-composition tabulation lets a reader judge which parts of the cost selection come from (S1)--(S3) alone and which parts come from the additional degree-two closure.
\end{enumerate}
The practical role of (N3) is inferential rather than thermodynamic: when the primary microscopic data are dimensionless ratios---odds ratios in a reversible Markov chain estimated from trajectories, equilibrium constants in a chemical network measured spectroscopically, detailed-balance ratios in a single-molecule experiment---the present chain converts those ratios directly into a finite-state canonical ensemble without an intervening Hamiltonian assignment. Modeling freedom is relocated from the choice of energy functional to the choice of ratio assignment together with the d'Alembert composition law (Section~\ref{sec:rcl}); whether that relocation is useful depends on whether ratios or energies are the more directly accessible objects in the application at hand. Sections~\ref{sec:ledger-bridge} and~\ref{sec:scope-and-applications} discuss representative settings, Section~\ref{sec:discriminating-observables} quantifies the experimental sensitivity required to distinguish the RCL prediction from naive log-cost (``affinity-as-energy'') and Tsallis alternatives, and Section~\ref{sec:stoch-thermo} relates the RCL cost $J(r)$ to the antisymmetric affinity $\log r$ of stochastic thermodynamics on the same edge data.

After $J$ is fixed, the remaining statistical-mechanical machinery is standard: the method-of-types and large-deviation steps (Theorem~\ref{thm:types}), the finite-state Gibbs variational principle (Proposition~\ref{thm:maxent}), the free-energy--KL identity (Proposition~\ref{thm:fe-kl} and its corollaries), and detailed-balance KL monotonicity (Proposition~\ref{prop:KL-markov}) are used here in their classical finite-state forms and credited explicitly to~\cite{Ellis1985,Schnakenberg1976,CoverThomas2006,CsiszarKorner1981,DemboZeitouni2010,Touchette2009,LevinPeresWilmer2017}. Lemma~\ref{lem:ancestor} and Proposition~\ref{prop:divj-basic} record elementary consequences of the specific generator $J$. Table~\ref{tab:contribution-status} lists where each result enters the chain.

The reader who is uninterested in the cost-determining functional equation can read Sections~\ref{sec:gibbs}--\ref{sec:fe} as a self-contained treatment of the canonical ensemble on a finite space with an arbitrary fixed cost vector $X_\omega$. The role of Sections~\ref{sec:rcl}--\ref{sec:divj} is to show how, in ratio-bookkeeping settings such as detailed balance and chemical equilibrium, the cost vector is selected once one adopts the multiplicative--reciprocal closure and calibration on the underlying ratios.

\paragraph{Recognition temperature and free energy; positive-temperature scope.}
We refer to the dimensionless quantity $\TR:=1/\beta$, with $\beta$ the Lagrange multiplier for the mean-cost constraint, as the recognition temperature, and to $F_{\mathrm{R}}=\langle X\rangle_q-\TR H(q)$ as the recognition free energy. Both names are by analogy with the canonical ensemble: $\TR$ and $F_{\mathrm{R}}$ are the Legendre-conjugate variables to mean cost and entropy, and obey the standard Helmholtz algebra in the cost notation $X_\omega=J(r_\omega)$. $\TR$ has dimensions of cost, not Kelvin, and the identification $\TR=k_{\mathrm{B}}T$ for a concrete material requires a microscopic model that the present finite-state framework does not supply.\footnote{All quantities in Sections~\ref{sec:gibbs}--\ref{sec:fe} are dimensionless: $r_\omega$ is a positive number, $J(r_\omega)$ is a positive number, and so are $X_\omega$, $\Ebar$, $\beta$, $\TR=1/\beta$, $Z$, $F_{\mathrm{R}}$, and $H$. To map this dimensionless framework onto a model in which $X_\omega$ represents a physical energy $E_\omega$ in joules, fix an energy scale $\varepsilon_0$ (with units of joules) and set $E_\omega:=\varepsilon_0 X_\omega$, $T:=\varepsilon_0\TR/k_{\mathrm{B}}$ (so $k_{\mathrm{B}}T=\varepsilon_0\TR$, a numerical equality between two energies in joules), and $F^{\mathrm{SI}}:=\varepsilon_0 F_{\mathrm{R}}$. The Boltzmann factor $e^{-E_\omega/k_{\mathrm{B}}T}=e^{-\beta X_\omega}$ is invariant; all identities of Sections~\ref{sec:gibbs}--\ref{sec:fe} carry over unchanged after this rescaling. The bridge constant $\varepsilon_0$ must be supplied by a microscopic model and is not part of the present finite-state framework.} Sections~\ref{sec:gibbs}--\ref{sec:fe} restrict attention to $\beta>0$ (equivalently $\Ebar$ below the uniform-cost mean, see Proposition~\ref{prop:inv-beta}); the negative-temperature regime $\beta<0$ is treated alongside the cosine branch of~\eqref{eq:dAlembert-soln} in~\cite{WashburnZlatanovic2026}, with a recent $H$-theorem for bounded-spectrum systems in this regime in~\cite{Lucente2025}, and is not pursued here.

\paragraph{Roadmap.}
Section~\ref{sec:ancestor} establishes the quadratic lower bound on $J$ in log-ratio coordinates. Section~\ref{sec:divj} defines the $J$-cost divergence $\DivJ$ and its half-Neyman form. Section~\ref{sec:counting} recovers Shannon entropy from multinomial counting and proves the soft-shell constrained multinomial theorem (Theorem~\ref{thm:types}), its exact-shell corollary for rational data (Corollary~\ref{cor:types-exact}), and the non-asymptotic Stirling bound of (N2); Section~\ref{sec:extensions} records the multi-constraint soft-shell extension (N2-soft). Section~\ref{sec:gibbs} produces the finite-state Gibbs law via Lagrange optimization. Section~\ref{sec:fe} proves the free-energy--Kullback--Leibler identity and the detailed-balance monotonicity of $\DKL$. Section~\ref{sec:example} works through a three-state example, including a numerical relaxation trajectory under a detailed-balance Markov dynamics, the Tsallis comparison, and the sample-size power calculation at fixed RCL ground truth (Section~\ref{sec:discriminating-observables}). Section~\ref{sec:discussion} addresses context and limitations, including the connection to stochastic thermodynamics (Section~\ref{sec:stoch-thermo}) and to the maximum-entropy and large-deviation literature; Section~\ref{sec:conclusion} closes.

\begin{table}[htbp]
  \centering
  \small
  \setlength{\tabcolsep}{5pt}
  \renewcommand{\arraystretch}{1.15}
  \begin{tabular}{p{0.34\textwidth} p{0.56\textwidth}}
    \hline
    Result & Role in the chain\\\hline
    Theorem~\ref{thm:wz-uniqueness} (classification of $J$ under measurability and local integrability;~\cite{WashburnZlatanovic2026}) & Reduces (A1)--(A3) to the d'Alembert functional equation~\cite{Aczel1966,Kuczma2009} in log-coordinates and selects the hyperbolic branch; supplies the closed cost $J(x)=\tfrac12(x+x^{-1})-1$ used everywhere downstream.\\
    Lemma~\ref{lem:ancestor} (squared-log lower bound on $J$) & Hyperbolic-cosine identity $J(e^t)=\cosh t-1\ge t^2/2$; feeds the squared-log bound on $\DivJ$ and the surrogate comparison in the example of Section~\ref{sec:example}.\\
    Proposition~\ref{prop:divj-basic} ($\DivJ$ properties; half-Neyman form) & $f$-divergence with generator $f=J$; supplies a closed Neyman-$\chi^2$ form on the simplex used in the numerical example.\\
    Theorem~\ref{thm:types}, Corollary~\ref{cor:types-exact}, Propositions~\ref{prop:stirling-rate}, \ref{prop:soft-shell-multi} (constrained types; exact rational shell; non-asymptotic and multi-constraint soft-shell bounds) & Method-of-types statement on the affine cost shell selected by $X_\omega=J(r_\omega)$, with explicit finite-$N,K$ control via Robbins' inequalities; soft-shell and multi-constraint extensions for irrational measured costs.\\
    Proposition~\ref{thm:maxent} (Gibbs variational principle) & Lagrangian construction in the cost notation $X_\omega=J(r_\omega)$; supplies the canonical Gibbs weights and the dual parameter $\beta$.\\
    Proposition~\ref{thm:fe-kl}, Corollary~\ref{cor:fe-min} (free-energy--KL identity) & Algebraic identity $F_{\mathrm{R}}(q)-F_{\mathrm{R}}(p)=\TR\DKL(q\Vert p)$ in the recognition variables, with the corresponding minimum-free-energy dual. Remark~\ref{cor:entropy-max} records that the maximum-entropy dual is logically identical to Proposition~\ref{thm:maxent}.\\
    Proposition~\ref{prop:KL-markov}, Corollary~\ref{cor:monoFE-markov} (monotone $\DKL$ under detailed balance) & Closes the chain at the level of detailed-balance relaxation: $t\mapsto\DKL(q(t)\Vert p)$ is nonincreasing along reversible Markov dynamics, hence so is $F_{\mathrm{R}}(q(t))$.\\
    \hline
  \end{tabular}
  \caption{Where each formal result enters the chain $\{r_\omega\}\to J\to X\to(\text{Gibbs},F_{\mathrm{R}},\DKL)$. Contributions (N1)--(N3) are summarized in Section~\ref{sec:axioms}; (N1) and (N3) are expository, (N2) records Robbins' non-asymptotic bound, and (N2-soft) gives the soft-shell full-sequence theorem in single- and multi-constraint form.}
  \label{tab:contribution-status}
\end{table}
\FloatBarrier

\section{\texorpdfstring{The RCL cost $J$ and its quadratic lower bound}{The RCL cost J and its quadratic lower bound}}
\label{sec:ancestor}

For every $x>0$, the closed-form cost~\eqref{eq:J-closed} is bounded below by its second-order Taylor expansion in log-coordinates: $J(x)\ge (\log x)^2/2$, with equality iff $x=1$ (Lemma~\ref{lem:ancestor}). The proof is a one-line consequence of the power series $\cosh t-1=\sum_{k\ge 1}t^{2k}/(2k)!$. We record the bound because it is applied pointwise to likelihood ratios in the lower bound on $\DivJ$ in Section~\ref{sec:divj}, and because it sets up the squared-log surrogate cost vector $X^{\mathrm{SQ}}_\omega=(\log r_\omega)^2/2$ used in the discriminating-observable analysis of Section~\ref{sec:example}.

\paragraph{Log-coordinate form.}
In logarithmic coordinates the cost simplifies considerably. Writing any ratio $x>0$ as $x=e^t$ with $t=\log x$, the RCL cost becomes $J(e^t)=\cosh t-1$, so multiplicative departures of $x$ from $1$ are represented by additive deviations $t=\log x$. The reciprocal symmetry $J(x)=J(1/x)$ reduces in these coordinates to the evenness $\cosh(-t)=\cosh t$, and the vanishing of $J$ at $x=1$ to $\cosh 0-1=0$.

\paragraph{The quadratic lower bound.}
Since $\cosh t-1=\sum_{k\ge 1} t^{2k}/(2k)!$, every Taylor coefficient at $t=0$ beyond second order is nonnegative, so $\cosh t-1\ge t^2/2$ for all real~$t$, with equality only at $t=0$. Equivalently, $2(\cosh t-1)=(e^{t/2}-e^{-t/2})^2$ reduces the bound to $\lvert\sinh u\rvert\ge\lvert u\rvert$ for $u=t/2$. In the variable $x$ this is the following statement; Section~\ref{sec:divj} applies it pointwise to likelihood ratios when bounding $\DivJ$ (Proposition~\ref{prop:divj-basic}).

\begin{lemma}[Quadratic lower bound on $J$]
\label{lem:ancestor}
For all $x>0$,
\begin{equation}
  J(x)\ge \frac{(\log x)^2}{2},
  \label{eq:ancestor}
\end{equation}
with equality if and only if $x=1$.
\end{lemma}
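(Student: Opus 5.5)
The plan is to pass to logarithmic coordinates and reduce the claim to an elementary inequality for the hyperbolic cosine. Write $x=e^t$ with $t=\log x\in\R$, which is a bijection $\R_{>0}\to\R$ with $x=1$ corresponding to $t=0$. By the closed form~\eqref{eq:J-closed} selected in Theorem~\ref{thm:wz-uniqueness}, $J(e^t)=\tfrac12(e^t+e^{-t})-1=\cosh t-1$. Therefore~\eqref{eq:ancestor} is equivalent to showing, for all real $t$,
\[
  \phi(t):=\cosh t-1-\tfrac{t^2}{2}\ge 0,
\]
with equality only at $t=0$.

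The primary argument will be the power series. Since $\cosh t-1-t^2/2=\sum_{k\ge 2}t^{2k}/(2k)!$, every term is a nonnegative multiple of an even power of $t$, so the sum is $\ge 0$. For $t\neq 0$ the $k=2$ term $t^4/24$ is strictly positive, which gives strictness away from $t=0$. Equality at $t=0$ is immediate, and that settles the ``if and only if'' clause.

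As a check that avoids series manipulation, I would also record a calculus route. We have $\phi(0)=0$ and $\phi'(t)=\sinh t-t$, which has the sign of $t$ because $\sinh t-t$ is odd, vanishes at $0$, and has derivative $\cosh t-1\ge 0$ with equality only at $t=0$. Hence $\phi$ is strictly decreasing on $(-\infty,0]$ and strictly increasing on $[0,\infty)$, so $t=0$ is its unique minimum. An equivalent one-liner uses $2(\cosh t-1)=4\sinh^2(t/2)$ together with $\lvert\sinh u\rvert\ge\lvert u\rvert$, with equality iff $u=0$.

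There is no real obstacle here. The only point needing care is the strict-equality clause, which follows from the strictly positive quartic term, or equivalently from the strict monotonicity of $\sinh u-u$. Finally, translating back via $t=\log x$ gives $J(x)\ge(\log x)^2/2$, with equality iff $\log x=0$, that is, iff $x=1$.
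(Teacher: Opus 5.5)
Your proposal is correct and follows essentially the same route as the paper: substitute $t=\log x$, write $J(e^t)=\cosh t-1$, and read off $\cosh t-1-t^2/2=\sum_{k\ge 2}t^{2k}/(2k)!\ge 0$, with strictness for $t\neq 0$. Your alternative $2(\cosh t-1)=4\sinh^2(t/2)$ with $\lvert\sinh u\rvert\ge\lvert u\rvert$ is the same reformulation the paper records just before the lemma.
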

\begin{proof}
Set $t=\log x$. Then $J(x)=\cosh t-1=\sum_{k\ge 1} t^{2k}/(2k)!\ge t^2/2=(\log x)^2/2$, with equality if and only if $t=0$.
\end{proof}

The bound~\eqref{eq:ancestor} is used twice downstream: pointwise on likelihood ratios in Proposition~\ref{prop:divj-basic}\,(iii) to control $\DivJ$ from below by a $p$-weighted squared logarithm, and at the level of single ratios in Section~\ref{sec:example} to compare the RCL cost vector with the squared-log surrogate $X^{\mathrm{SQ}}_\omega=(\log r_\omega)^2/2$ on the same three-state example (Table~\ref{tab:discriminate}).

Figure~\ref{fig:ancestor} compares the two functions. The RCL cost $J$ and the squared-log surrogate $(\log x)^2/2$ share a common second-order expansion at $x=1$ (so the bound is sharp there: replacing $1/2$ by any $c>1/2$ would violate~\eqref{eq:ancestor} in a neighborhood of $x=1$). Away from $x=1$, the gap is the strictly positive tail $\sum_{k\ge 2}t^{2k}/(2k)!$ of the hyperbolic-cosine series; in particular, no constant $c>1/2$ satisfies $J(x)\ge c(\log x)^2$ for all $x>0$, because $J(e^t)=\cosh t-1$ grows like $\tfrac12 e^{|t|}$ in the log-coordinate $t=\log x$ (equivalently, $J(x)$ grows linearly in $\max(x,1/x)$ as a function of the ratio variable $x$), while $(\log x)^2$ grows only quadratically in $|\log x|$. Thus $J(x)/(\log x)^2\to\infty$ as $x\to\infty$, while $J(x)/x\to 1/2$: $J$ grows faster than any power of $|\log x|$, but only linearly in the ratio variable $x$ itself.

\begin{figure}[htbp]
  \centering
  \includegraphics[width=0.6\textwidth]{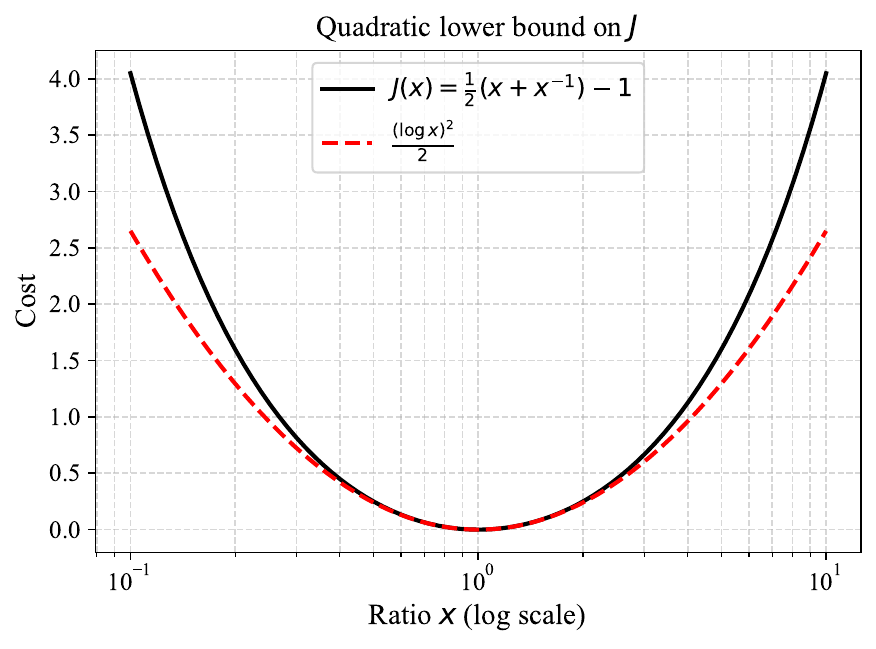}
  \caption{Quadratic lower bound on $J$. The RCL cost $J(x)=\tfrac{1}{2}(x+x^{-1})-1$ (solid curve) and its quadratic surrogate $(\log x)^2/2$ (dashed curve). The two curves share a common second-order expansion at $x=1$; the gap grows for $x\ll 1$ or $x\gg 1$, where $J(e^t)$ grows like $\tfrac12 e^{|t|}$ in the log-coordinate $t=\log x$ (equivalently, $J(x)$ grows linearly in $\max(x,1/x)$), while $(\log x)^2/2$ grows only quadratically in $|t|$.}
  \label{fig:ancestor}
\end{figure}

\paragraph{Interpretation: squared-log fluctuations.}
Lemma~\ref{lem:ancestor} is a pointwise comparison between two functions of a single ratio; it does not identify $J$ with Shannon entropy, Fisher information, or a relative-entropy functional. Shannon entropy enters only later, through the multinomial counting of Section~\ref{sec:counting}; it lives on probability vectors, not on ratios, and is supplied by combinatorics rather than by the functional equation~\eqref{eq:composition}.

Lemma~\ref{lem:ancestor} shows that the RCL-selected cost controls the squared logarithmic deviation of each ratio from the reference state. Two regimes are easy to describe. Near equilibrium ($x\approx 1$), $J(x)\approx (x-1)^2/2\approx (\log x)^2/2$, so any near-equilibrium expansion of a $J$-based quantity reproduces, at second order, the corresponding squared-log expression. Far from equilibrium ($x\gg 1$ or $x\ll 1$), the strict convexity of $J(x)=\tfrac12(x+x^{-1})-1$ penalizes large multiplicative deviations more sharply than the quadratic surrogate, since $J(e^t)$ grows like $\tfrac12 e^{|t|}$ in the log-coordinate $t=\log x$ (equivalently, $J(x)$ grows linearly in $\max(x,1/x)$), whereas $(\log x)^2/2$ grows only quadratically in $|t|$. Thus $J$ sits above a squared-log fluctuation measure in ratio coordinates, while $H$ remains the combinatorial rate functional for empirical compositions, and any far-from-equilibrium dynamics built on $J$ would penalize large multiplicative deviations more sharply than a quadratic log-ratio surrogate.

\section{\texorpdfstring{The $J$-cost divergence}{The J-cost divergence}}
\label{sec:divj}
Throughout this section, $\Omega$ denotes a finite set and $p,q$ are strictly positive probability vectors on~$\Omega$, i.e.\ $p,q:\Omega\to(0,1]$ with $\sum_{\omega\in\Omega} p(\omega)=\sum_{\omega\in\Omega} q(\omega)=1$; in this section $p$ plays the role of a generic reference distribution and $q$ that of a generic competitor, with no Gibbs interpretation. From Section~\ref{sec:gibbs} onwards, $p$ is reserved for the Gibbs reference~\eqref{eq:gibbs} (parametric family $p^{(\beta)}$ when the dependence on $\beta$ is exhibited, and $p^\star=p^{(\beta^\star)}$ when the constraint-matching value $\beta^\star$ from Proposition~\ref{prop:inv-beta} is fixed), while $q$ continues to denote a generic competitor distribution. Strict positivity avoids boundary issues peripheral to the arguments below; the divergence extends to the boundary by lower-semicontinuity, but that extension is not used here. The Kullback--Leibler (relative-entropy) divergence~\cite{KullbackLeibler1951}
\[
  \DKL(q\Vert p):=\sum_{\omega\in\Omega} q(\omega)\log\frac{q(\omega)}{p(\omega)}
\]
will be used at one point in Remark~\ref{rem:divj-vs-kl} for comparison with $\DivJ$; its main role is in Section~\ref{sec:fe}, where it is reintroduced with the same convention $0\log 0:=0$.

The pointwise comparison in Lemma~\ref{lem:ancestor} concerns two functions of a single positive ratio. A probabilistic version requires aggregating across outcomes: at each outcome $\omega$ we form the likelihood ratio $q(\omega)/p(\omega)$, evaluate the RCL cost $J(\cdot)$ at this ratio, and average with respect to the reference distribution~$p$. The result is a divergence in the standard probabilistic sense---a nonnegative functional of the pair $(q,p)$ that vanishes precisely at coincidence---and it is the natural ratio-level companion of the cost fixed in Section~\ref{sec:ancestor}.

\subsection{\texorpdfstring{Definition and half-Neyman $\chi^2$ form}{Definition and half-Neyman chi-squared form}}
Define the $J$-cost divergence by
\begin{equation}
  \DivJ(q\Vert p)=\sum_{\omega\in\Omega} p(\omega)\,J\!\left(\frac{q(\omega)}{p(\omega)}\right).
  \label{eq:DJ-def}
\end{equation}
This also places $J$ near familiar ratio-based contrasts, since the symmetrized Itakura--Saito divergence on a single positive ratio is proportional to $r+r^{-1}-2$.
This is the standard recipe for converting an even, convex, single-variable cost on positive ratios into a divergence on a finite simplex; in the language of information theory, $\DivJ$ is the Csisz\'ar $f$-divergence $D_f(q\Vert p)=\sum_\omega p_\omega f(q_\omega/p_\omega)$ with $f=J$~\cite{Csiszar1967,CsiszarKorner1981,AliSilvey1966}. Here $J$ is not an arbitrary convex $f$ but is fixed by the functional equation~\eqref{eq:composition} together with the unit log-curvature calibration, so the generator in~\eqref{eq:DJ-def} is not a separate modeling choice.

Since $J(x)=\tfrac{1}{2}(x+x^{-1})-1$ is a rational function of $x$, the sum~\eqref{eq:DJ-def} expands explicitly. Substituting $J(q_\omega/p_\omega)=\tfrac12\bigl(q_\omega/p_\omega+p_\omega/q_\omega\bigr)-1=(q_\omega-p_\omega)^2/(2 p_\omega q_\omega)$ into~\eqref{eq:DJ-def} and multiplying by $p_\omega$ produces the equivalent half-Neyman representation
\begin{equation}
  \DivJ(q\Vert p)=\frac{1}{2}\sum_{\omega\in\Omega}\frac{\bigl(p(\omega)-q(\omega)\bigr)^2}{q(\omega)},
  \label{eq:DJ-chisq}
\end{equation}
identifying $\DivJ$ with one-half of Neyman's $\chi^2$ divergence $\sum_\omega (p_\omega-q_\omega)^2/q_\omega$ with arguments oriented as in~\eqref{eq:DJ-def}. Here $q$ appears in the denominator of each summand; this is Neyman's orientation, as opposed to Pearson's $\chi^2$, which places $p$ in the denominator~\cite{Csiszar1967}. The representation makes it manifest that $\DivJ\ge 0$ with equality only at $p\equiv q$; furthermore $J''(x)=x^{-3}>0$ on $\R_{>0}$, so $J$ is strictly convex and the abstract theory of $f$-divergences applies without modification.

\subsection{Positivity and squared-log domination}
Lemma~\ref{lem:ancestor} feeds directly into a lower bound on $\DivJ$. Applying~\eqref{eq:ancestor} pointwise to the likelihood ratio $q_\omega/p_\omega$, multiplying by $p_\omega$, and summing over $\omega$ yields control of $\DivJ$ by a squared logarithmic deviation. The following proposition combines the closed-form representation~\eqref{eq:DJ-chisq} with this bound; both are used in the example of Section~\ref{sec:example}.

\begin{proposition}[Basic properties of $\DivJ$]
\label{prop:divj-basic}
For all positive probability distributions $p,q$ on~$\Omega$:
\begin{enumerate}
\item[(i)] $\DivJ(q\Vert p)\ge 0$, with equality if and only if $q=p$;
\item[(ii)] $\DivJ(q\Vert p)$ is given by~\eqref{eq:DJ-chisq};
\item[(iii)] one has the squared-log lower bound
\begin{equation}
  \DivJ(q\Vert p)\ge \frac{1}{2}\sum_{\omega\in\Omega} p(\omega)\,\bigl(\log(q(\omega)/p(\omega))\bigr)^2.
  \label{eq:divj-dom}
\end{equation}
\end{enumerate}
\end{proposition}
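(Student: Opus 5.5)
The plan is to establish (ii) first, because the closed form \eqref{eq:DJ-chisq} makes (i) immediate, and then to derive (iii) from Lemma~\ref{lem:ancestor}.

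For (ii), fix $\omega$ and write $a=p(\omega)>0$, $b=q(\omega)>0$. Using the closed form \eqref{eq:J-closed} supplied by Theorem~\ref{thm:wz-uniqueness},
\[
  a\,J(b/a)=a\Bigl(\tfrac12\bigl(\tfrac{b}{a}+\tfrac{a}{b}\bigr)-1\Bigr)=\frac{b^2+a^2-2ab}{2b}=\frac{(a-b)^2}{2b}.
\]
Summing over the finite set $\Omega$ gives \eqref{eq:DJ-chisq} with $q$ in the denominator. This is exactly the algebra displayed just before the proposition, so the only care needed is to keep the orientation straight: the weight is $p$, the argument of $J$ is $q/p$, and therefore $q$ ends up in the denominator.

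For (i), each summand $(p(\omega)-q(\omega))^2/(2q(\omega))$ in \eqref{eq:DJ-chisq} is nonnegative because $q(\omega)>0$. Hence $\DivJ(q\Vert p)\ge 0$. A finite sum of nonnegative terms vanishes if and only if every term vanishes, which happens if and only if $p(\omega)=q(\omega)$ for all $\omega$, that is, $q=p$. Alternatively, one could argue from $J\ge 0$ with equality only at $1$, which gives $q(\omega)/p(\omega)=1$ for every $\omega$ because $p(\omega)>0$; I would mention this route only as a remark.

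For (iii), apply Lemma~\ref{lem:ancestor} pointwise with $x=q(\omega)/p(\omega)>0$, giving $J(q(\omega)/p(\omega))\ge\frac12(\log(q(\omega)/p(\omega)))^2$. Multiply by $p(\omega)>0$ and sum over $\omega$ to obtain \eqref{eq:divj-dom}. By the equality case of Lemma~\ref{lem:ancestor}, equality in (iii) holds exactly when $q=p$; I would note this, although it is not required.

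No step is a real obstacle, since everything is finite-sum algebra on strictly positive vectors. The one point needing attention is the Neyman orientation in (ii), which I would double-check by testing on a two-point example.
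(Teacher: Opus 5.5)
Your proposal is correct and matches the paper's proof in substance: (ii) uses the same per-outcome algebra, and (iii) is the same pointwise application of Lemma~\ref{lem:ancestor}, weighted by $p(\omega)$ and summed. The only cosmetic difference is in (i), where the paper argues directly from $J\ge 0$ with equality only at $x=1$ (the route you relegate to a remark), whereas you read it off the half-Neyman form; both are immediate.
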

\begin{proof}
For~(i), each summand in $\DivJ$ is nonnegative because $J\ge 0$, and $J(x)=0$ if and only if $x=1$, so the sum vanishes if and only if $q_\omega/p_\omega=1$ for all~$\omega$. Part~(ii) is the algebraic expansion already given in~\eqref{eq:DJ-chisq}. For~(iii), apply Lemma~\ref{lem:ancestor} pointwise to $x=q_\omega/p_\omega$, multiply by $p(\omega)$, and sum over $\omega\in\Omega$.
\end{proof}

The closed form~\eqref{eq:DJ-chisq} is convenient for numerical work---it is what is used in the three-state example of Section~\ref{sec:example}---but $\DivJ$ is not a substitute for the Kullback--Leibler divergence. It does not generate the free-energy identity that organizes Section~\ref{sec:fe}; rather, it is the $f$-divergence naturally associated with the ratio-level cost $J$, and Proposition~\ref{prop:divj-basic}\,(iii) is the divergence-level reflection of Lemma~\ref{lem:ancestor}. The KL geometry that drives the variational principle re-enters at the level of logarithms of probabilities, not at the level of ratios, and Section~\ref{sec:fe} keeps $\DivJ$ and $\DKL$ formally separate for this reason.

The right-hand side of~\eqref{eq:divj-dom} is itself a squared-log quantity. To leading quadratic order near $q=p$, it is one half of the usual Pearson chi-square quadratic form, $\tfrac12\sum_\omega (q_\omega-p_\omega)^2/p_\omega$, but the two expressions are not identical away from coincidence. Proposition~\ref{prop:divj-basic}\,(iii) provides one-sided pointwise control, not a two-sided comparison with $\DKL$.

\subsection{Relation to KL geometry}
Both $\DivJ$ and $\DKL$ are $f$-divergences on the simplex (with $f=J$ and $f(x)=x\log x-x+1$ respectively), both vanish iff $q=p$, and both are jointly convex in $(q,p)$. They are nevertheless not interchangeable as ordering rules.

\begin{remark}[$\DivJ$ is not interchangeable with $\DKL$]
\label{rem:divj-vs-kl}
Distinct $f$-divergences need not induce the same ordering among alternatives to a fixed reference. For $p=(0.3,0.7)$, $q^{(1)}=(0.1,0.9)$, $q^{(2)}=(0.6,0.4)$ on $\lvert\Omega\rvert=2$, one finds $\DKL(q^{(1)}\Vert p)\approx 0.116<\DKL(q^{(2)}\Vert p)\approx 0.192$ but $\DivJ(q^{(1)}\Vert p)\approx 0.222>\DivJ(q^{(2)}\Vert p)\approx 0.188$. Any criterion---gradient flow, model selection, rate-function minimization---that uses one divergence as a contrast may therefore prefer a different alternative under the other.
\end{remark}

The variational principle for the Gibbs distribution (Section~\ref{sec:gibbs}) and the free-energy identity (Section~\ref{sec:fe}) both pass through $\DKL$, since the log-affine form of the Gibbs measure and Shannon entropy on probability vectors couple to log-probabilities, not to ratios; $\DKL$ is the $f$-divergence aligned with that structure on the simplex. From Section~\ref{sec:counting} onward, $\DKL$ is the operative divergence; $\DivJ$ serves only as the ratio-level companion of the cost $J$, where it appears in Proposition~\ref{prop:divj-basic} and in the squared-log comparison of Table~\ref{tab:three}.

To summarize the role of this section: \eqref{eq:DJ-chisq} gives a closed-form numerical expression for $\DivJ$ that is used directly in Section~\ref{sec:example-numerics}, while~\eqref{eq:divj-dom} feeds the squared-log surrogate cost vector $X^{\mathrm{SQ}}_\omega$ used in the discriminating-observable analysis (Table~\ref{tab:discriminate}). The KL divergence $\DKL$ takes over from Section~\ref{sec:counting} onward and is the divergence that organizes the free-energy identity in Section~\ref{sec:fe}.

\section{Finite-state sample model and entropy from counting}
\label{sec:counting}
\paragraph{Notation used in Sections~\ref{sec:counting}--\ref{sec:fe}.}
For convenience we collect the recurring symbols. $\Omega=\{\omega_1,\ldots,\omega_K\}$ is a finite outcome space of size $K\ge 2$. For each outcome $\omega$, the positive ratio $r_\omega>0$ (the input ratio for the modeling problem) determines the cost $X_\omega:=J(r_\omega)\in\R$ via the closed form~\eqref{eq:J-closed}. A probability distribution on $\Omega$ is denoted $p$ when it is the Gibbs reference measure $p^{(\beta)}_\omega\propto e^{-\beta X_\omega}$ and $q$ when it is a generic alternative. The constraint value $\Ebar\in\R$ is the prescribed mean of the cost vector $X$ (target for $\langle X\rangle_p$ under the variational problem); it is not a physical energy, which appears as $E_\omega$ only in the SI bridge in the footnote to the recognition-temperature discussion in Section~\ref{sec:axioms}. The Lagrange multiplier $\beta\in\R$ is conjugate to that mean; $\TR=1/\beta$ for $\beta>0$; $Z(\beta)=\sum_\omega e^{-\beta X_\omega}$ is the partition function. The empirical type $\hat p_k=n_k/N$ is reserved for finite-sample frequency vectors in this section. $H$ denotes Shannon entropy on $\Delta_K$; $\DKL$ and $\DivJ$ are as in Section~\ref{sec:divj}; $F_{\mathrm{R}}=\langle X\rangle-\TR H$ is the recognition free energy. Throughout this section and Sections~\ref{sec:gibbs}--\ref{sec:fe} we assume $X$ is non-constant on $\Omega$ and $\min_\omega X_\omega<\Ebar<\max_\omega X_\omega$; without this the Gibbs construction is degenerate.

While $J$ is selected by the composition law and calibration, the Shannon entropy functional used in the next section is not an additional axiom. It is recovered by counting microstates: for an array of $N$ independent outcomes drawn from $K$ types, enumerate the integer compositions $\mathbf{n}=(n_1,\ldots,n_K)$ compatible with a prescribed mean of an intensive observable, take the logarithm of the resulting multinomial coefficient, and pass to the large-$N$ limit. The leading term is $N\,H(\hat p)$ with $\hat p_k=n_k/N$ the empirical frequency vector. The constrained multinomial entropy theorem of this section is the finite-state, method-of-types instance of Sanov's theorem, in the sense of Cover and Thomas~\cite{CoverThomas2006}. Density hypotheses on admissible system sizes are required because on a discrete sample space the affine constraint slice may not be reached by integer compositions for every~$N$; Section~\ref{sec:constrained-types} states the lifting and density properties needed.

\subsection{Sample of observations and empirical types}
Consider a sample of $N$ independent observations, each reporting a ratio $r_i\in\{r^{(1)},\ldots,r^{(K)}\}$ drawn from $K$ distinct values. The ordered sequence $(r_1,\ldots,r_N)$ is a sample, and the unordered tally $\mathbf{n}=(n_1,\ldots,n_K)$, with $n_k\ge 0$ counting how many of the $N$ observations report value $r^{(k)}$ and with $\sum_k n_k=N$, is its empirical type. Two samples with the same type differ only by the order of the observations; we count both.

The number of microstates compatible with a given macrostate $\mathbf{n}$ is the multinomial coefficient
\begin{equation}
  W(\mathbf{n})=\frac{N!}{n_1!\,n_2!\cdots n_K!},
  \label{eq:multinomial}
\end{equation}
counting the arrangements of $N$ distinguishable events into $K$ classes of sizes $n_1,\ldots,n_K$. The macrostate supports the empirical distribution $\hat p_k=n_k/N\in[0,1]$ on $\{1,\ldots,K\}$.

\subsection{Multinomial weights and Shannon entropy}
Applying Stirling's approximation $\log(m!)=m\log m-m+O(\log m)$ to each factorial in~\eqref{eq:multinomial} and rewriting in terms of $\hat p$ gives an asymptotic identity in which the implicit constant can be taken uniform over macrostates with $n_k\ge 1$ for all $k$ (boundary cases where some $n_k$ vanishes contribute only lower-order corrections under the convention $0\log 0:=0$):
\begin{equation}
  \frac{1}{N}\log W(\mathbf{n})= -\sum_{k=1}^K \hat p_k\log \hat p_k + O\Bigl(\frac{K\log N}{N}\Bigr)
  =: H_{\mathrm{type}}(\hat p)+O\Bigl(\frac{K\log N}{N}\Bigr).
  \label{eq:stirling}
\end{equation}
$H_{\mathrm{type}}$ denotes Shannon entropy at the empirical frequency vector $\hat p\in\Delta_K$. The same functional on a generic finite outcome space $\Omega$ is denoted $H$ in Sections~\ref{sec:gibbs}--\ref{sec:fe}; the subscript ``type'' is retained only here, where $\hat p$ is an integer-rationally-quantized empirical type. The identity~\eqref{eq:stirling} is the dimensionless Boltzmann principle $S=k_{\mathrm{B}}\log W$~\cite{Boltzmann1877,Ellis1985,CoverThomas2006} on the per-event scale, with $k_{\mathrm{B}}$ absorbed into the choice of nat units; it is the standard finite-state derivation of Shannon entropy as the rate function of a multinomial array, restricted to the affine cost shell selected by the constraint of Section~\ref{sec:constrained-types}.

The asymptotic statement~\eqref{eq:stirling} hides a non-asymptotic Stirling bound that is needed for assessing the regime where $K$ grows with $N$. We record the explicit bound below; the proof reduces to Robbins' two-sided Stirling inequalities~\cite{Robbins1955} applied to each factorial.

\begin{proposition}[Non-asymptotic Stirling bound on $\log W$]
\label{prop:stirling-rate}
For all $N\ge 1$, all $K\ge 2$, and all $\mathbf{n}\in\mathbb{Z}_{\ge 0}^K$ with $\sum_k n_k=N$ and $K^\dagger:=\#\{k:n_k\ge 1\}$ the number of nonempty types, the multinomial weight $W(\mathbf{n})=N!/\prod_k n_k!$ satisfies
\begin{equation}
  \Bigl\lvert\frac{1}{N}\log W(\mathbf{n})-H(\hat p)\Bigr\rvert
  \le \frac{K^\dagger-1}{2N}\log N + \frac{R^\star(K^\dagger)}{N},
  \label{eq:stirling-rate}
\end{equation}
where
\[
  R^\star(K^\dagger):=\frac{K^\dagger-1}{2}\log(2\pi)+\frac{1}{12}+\frac{K^\dagger}{12}.
\]
In particular, when $K^\dagger\le K$ is fixed, the right-hand side of~\eqref{eq:stirling-rate} is $(K-1)\log N/(2N)+O(K/N)$, recovering~\eqref{eq:stirling} with explicit constants. When $K^\dagger$ grows with $N$, the bound degrades: \eqref{eq:stirling-rate} is non-trivial only when $K\log N/N\to 0$, equivalently $N/(K\log N)\to\infty$.
\end{proposition}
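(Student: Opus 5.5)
The plan is to write each factorial exactly via Robbins' two-sided Stirling inequality~\cite{Robbins1955}: for every integer $m\ge 1$,
\[
  \log(m!)=m\log m-m+\tfrac12\log(2\pi m)+\theta_m,\qquad \frac{1}{12m+1}<\theta_m<\frac{1}{12m}.
\]
Empty types ($n_k=0$) contribute $\log 0!=0$ to $\log W(\mathbf{n})$ and $0$ to $H(\hat p)$ under the convention $0\log 0:=0$. We may therefore restrict every sum to the index set $S:=\{k:n_k\ge 1\}$, with $|S|=K^\dagger$ and $\sum_{k\in S}n_k=N$. Applying Robbins to $N!$ and to each $n_k!$ with $k\in S$ splits $\log W(\mathbf{n})$ into three groups, which I will treat separately.

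\emph{Leading group.} The leading terms are
\[
  N\log N-N-\sum_{k\in S}(n_k\log n_k-n_k).
\]
Because $\sum_{k\in S} n_k=N$, the linear pieces cancel and this equals $-\sum_{k\in S}n_k\log(n_k/N)=N\,H(\hat p)$ exactly.

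\emph{Half-logarithm group.} This group is
\[
  \tfrac12\log(2\pi N)-\sum_{k\in S}\tfrac12\log(2\pi n_k)=-\frac{K^\dagger-1}{2}\log(2\pi)+\tfrac12\Bigl(\log N-\sum_{k\in S}\log n_k\Bigr).
\]

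\emph{Remainder group.} This is $\theta_N-\sum_{k\in S}\theta_{n_k}$. Since each $\theta_m\in(0,1/12]$, its absolute value is at most $\tfrac1{12}+\tfrac{K^\dagger}{12}$.

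The only step that needs care, and the one I expect to be the main obstacle, is the bracket $\log N-\sum_{k\in S}\log n_k$, because its sign is not fixed. I would bound it from both sides using $1\le n_k\le N$, which gives $0\le\sum_{k\in S}\log n_k\le K^\dagger\log N$. Hence
\[
  -(K^\dagger-1)\log N\le \log N-\sum_{k\in S}\log n_k\le \log N .
\]
For $K^\dagger\ge2$ the upper bound $\log N$ is at most $(K^\dagger-1)\log N$, so the absolute value of the bracket is at most $(K^\dagger-1)\log N$. For $K^\dagger=1$ we have $n_k=N$ for the single occupied type, so the bracket vanishes identically and the same bound holds trivially.

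Combining the three groups with the triangle inequality gives
\[
  \bigl|\log W(\mathbf{n})-N H(\hat p)\bigr|\le \frac{K^\dagger-1}{2}\log N+\frac{K^\dagger-1}{2}\log(2\pi)+\frac1{12}+\frac{K^\dagger}{12}.
\]
Dividing by $N$ yields~\eqref{eq:stirling-rate} with the stated $R^\star(K^\dagger)$.

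The two ``in particular'' remarks then follow by inspection. Using $K^\dagger\le K$, the right-hand side is at most $(K-1)\log N/(2N)+O(K/N)$, which recovers~\eqref{eq:stirling}. If $K$ is allowed to grow with $N$, this bound tends to zero exactly when $K\log N/N\to0$.
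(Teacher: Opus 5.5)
Your proposal is correct and follows essentially the same route as the paper: Robbins' inequality applied to $N!$ and to the nonempty $n_k!$, exact cancellation of the leading terms to $N H(\hat p)$, the two-sided estimate $0\le\sum_{k\in S}\log n_k\le K^\dagger\log N$ for the half-logarithm bracket, and the crude bound $\tfrac1{12}+\tfrac{K^\dagger}{12}$ on the remainders. Your explicit treatment of $K^\dagger=1$, where the bracket vanishes identically, is a small improvement: the paper's one-line deduction of $|\log N-\sum_{k\in S}\log n_k|\le (K^\dagger-1)\log N$ from the two-sided estimate tacitly needs $K^\dagger\ge 2$.
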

\begin{proof}
Let $S=\{k:n_k\ge 1\}$, so $|S|=K^\dagger$. Robbins' inequalities~\cite{Robbins1955} give, for every $m\ge 1$,
\begin{equation}
  \log m! = m\log m-m+\tfrac12\log(2\pi m)+\varepsilon_m,\qquad 0<\varepsilon_m<\frac{1}{12m}.
  \label{eq:robbins}
\end{equation}
Applying~\eqref{eq:robbins} to $N!$ and to $n_k!$ for $k\in S$, and using $\log 0!=0$ for empty types, gives
\[
  \log W(\mathbf{n})
  =N\log N-\sum_{k\in S} n_k\log n_k
  +\tfrac12\Bigl(\log(2\pi N)-\sum_{k\in S}\log(2\pi n_k)\Bigr)
  +\varepsilon_N-\sum_{k\in S}\varepsilon_{n_k}.
\]
Since
\[
  H(\hat p)=\log N-\frac{1}{N}\sum_{k\in S} n_k\log n_k,
\]
we obtain
\[
  \frac{1}{N}\log W(\mathbf{n})-H(\hat p)
  =\frac{1}{2N}\Bigl(\log(2\pi N)-\sum_{k\in S}\log(2\pi n_k)\Bigr)
  +\frac{1}{N}\Bigl(\varepsilon_N-\sum_{k\in S}\varepsilon_{n_k}\Bigr).
\]
The logarithmic term is bounded in absolute value by
\[
  \frac{K^\dagger-1}{2N}\log N+\frac{K^\dagger-1}{2N}\log(2\pi),
\]
because (i) $1\le n_k\le N$ for $k\in S$ implies $0\le\sum_{k\in S}\log n_k\le K^\dagger\log N$, so $|\log N-\sum_{k\in S}\log n_k|\le (K^\dagger-1)\log N$, and (ii) the constant pieces collect into $|{-(K^\dagger-1)\log(2\pi)}|=(K^\dagger-1)\log(2\pi)$. The Robbins remainders satisfy
\[
  \Bigl|\varepsilon_N-\sum_{k\in S}\varepsilon_{n_k}\Bigr|
  \le \frac{1}{12}+\frac{K^\dagger}{12}.
\]
Combining these estimates yields the stated bound.
\end{proof}

For the three-state example of Section~\ref{sec:example} ($K=3$) at $N=100$, taking $K^\dagger=3$ gives the leading term $(K^\dagger-1)\log N/(2N)=\log(100)/100\approx 0.046$ and the Robbins remainder $R^\star(3)/N=(\log(2\pi)+1/12+1/4)/100\approx 0.022$, so the right-hand side of~\eqref{eq:stirling-rate} is at most $\approx 0.068$ nats per observation. This is a uniform absolute error bound on the Stirling approximation to $N^{-1}\log W(\mathbf{n})$, not a relative error bound on the Gibbs weights or on the rate function in any state-dependent sense; thus the rate-function approximation $\frac1N\log W(\mathbf{n})\approx H(\hat p)$ is controlled to within this additive margin at $N=100$. The bound is informative in the fixed-$K$ regime $N\gg K\log N$ and degrades when $K\log N$ approaches $N$ (Remark~\ref{rem:curse}). As a working rule of thumb in the fixed-$K$ regime, $N\gtrsim 30 K\log K$ keeps the right-hand side of~\eqref{eq:stirling-rate} below $\log K/10$.

To leading order in $N$, the number of microstates supporting a given empirical composition $\hat p$ thus grows exponentially at rate $H(\hat p)$. Without constraints, every composition is admissible and the maximizer of $W$ is the maximizer of $H$, namely the uniform distribution. Imposing a linear constraint on the empirical mean of an intensive quantity restricts admissible macrostates to a strict subset, and the maximum of $H$ over this subset selects an empirical distribution that is generally non-uniform. The constrained multinomial entropy theorem below is the finite-state formalization of this picture and the bridge to the Gibbs variational principle of Section~\ref{sec:gibbs}.

\begin{remark}[Regime of validity: fixed $K$, $N\to\infty$]
\label{rem:curse}
The asymptotic results of this section, including Theorem~\ref{thm:types} and Proposition~\ref{prop:stirling-rate}, are stated in the limit $N\to\infty$ at fixed finite outcome space $\Omega$ (equivalently fixed $K=\lvert\Omega\rvert$). This is the inferential regime in which the number of independent observations grows while the underlying system, including its state count, remains fixed; it differs from the classical thermodynamic limit, in which the system size and hence the state count grow, and from large-deviation regimes where $K=K(N)\to\infty$ jointly with $N$. The bound~\eqref{eq:stirling-rate} is non-vacuous precisely when $N\gg K\log N$, equivalently $N\gtrsim K\log K$ for the validity of~\eqref{eq:stirling} as a leading-order asymptotic, and degrades to vacuity once $K\log N$ approaches $N$. As a consequence, for models in which $K$ is exponential in a system size $L$ (e.g., $K=2^L$ for an $L$-spin system) the regime $N\gg K\log N$ requires $N\gtrsim L\,2^L$, far beyond what a realistic sample can support; Theorem~\ref{thm:types} therefore applies to coarse-grained or effective-state-count models with $K$ fixed (or with the active type count $K^\dagger$ controlled), but not as a tight bound for spatially extended lattice models.
\end{remark}

\subsection{Constrained types and admissible system sizes}
\label{sec:constrained-types}
The unconstrained maximizer of $H$ is uniform; the next two subsections specify the constrained problem (this subsection) and prove that constrained multinomial weights are again maximized by an entropy maximizer, now on a restricted slice of the simplex (Section~\ref{sec:constrained-multinomial}). Two ingredients are needed: a definition of the set of integer macrostates compatible with the constraint at fixed system size~$N$, and a verification that this set is nonempty often enough to support a meaningful $N\to\infty$ limit. The first is straightforward; the second is a Diophantine subtlety addressed by Definition~\ref{def:denom-lattice} below.

Heuristically: the continuous constraint set $\Gamma\subset\Delta_K$ is an affine slice, and rational types $p\in\mathbb{Q}^K\cap\Gamma$ are dense in it; but ``$N$ events with type $p$'' requires the integers $Np_k$ to actually be integers. The basic arithmetic observation recorded after~\eqref{eq:macro-feasible-set} below gives the lift, and Definition~\ref{def:denom-lattice} specifies the subsequence of $N$'s along which lifts of arbitrarily entropy-near-maximal rational types are simultaneously available.

The relevant constraint is the standard one of equilibrium statistical mechanics: a fixed average value of an intensive scalar cost. With cost values $g_k:=J(r^{(k)})$, the empirical mean cost at macrostate $\mathbf{n}$ is $\sum_k \hat p_k g_k=N^{-1}\sum_k n_k g_k$, and fixing this mean to a target value $\Ebar$ defines an affine slice of the simplex.

Concretely, fix $g=(g_1,\ldots,g_K)\in\R^K$ and $\Ebar\in\R$. For each $N\in\mathbb{N}$, define the set of nonnegative integer compositions on the affine cost shell by
\begin{equation}
\label{eq:macro-feasible-set}
  \mathcal{F}_N(g,\Ebar):=\Bigl\{\mathbf{n}\in\mathbb{Z}_{\ge 0}^K:\sum_{k=1}^K n_k=N,\ \sum_{k=1}^K n_k g_k=N\Ebar\Bigr\}.
\end{equation}
This is the discrete object that the constrained multinomial theorem of Section~\ref{sec:constrained-multinomial} will count. The pair $(g,\Ebar)$ is macro-feasible if $\mathcal{F}_N(g,\Ebar)\neq\emptyset$ for at least one positive integer~$N$; equivalently, setting $p^\dagger_k=n_k/N$ for any feasible $\mathbf{n}$, there exists a rational type $p^\dagger\in\Delta_K\cap\mathbb{Q}^K$ with $\sum_k p^\dagger_k g_k=\Ebar$. Macro-feasibility is therefore a Diophantine condition on $(g,\Ebar)$, not a geometric one on the constraint slice.

Although the continuous constraint slice $\Gamma=\{p\in\Delta_K:\sum_k p_k g_k=\Ebar\}$ may be a nonempty subset of the simplex, the integer equation $\sum_k n_k g_k=N\Ebar$ need not be solvable for every~$N$, so $\mathcal{F}_N(g,\Ebar)$ can be empty. A naive limit $N\to\infty$ may wander through values of $N$ at which there is nothing to count, so the exact-shell asymptotic statement in Corollary~\ref{cor:types-exact} below is phrased along subsequences for which integer compositions of the right type both exist and accumulate near the entropy maximizer. The relevant lifting and density properties are stated in Definition~\ref{def:denom-lattice} below; this is the only Diophantine input to the corollary. As a basic arithmetic observation: if $q\in\Delta_K\cap\mathbb{Q}^K$ has common denominator $Q\in\mathbb{N}$ (so $Qq_k\in\mathbb{Z}_{\ge 0}$ for every $k$) and $\sum_k q_k g_k=\Ebar$ with $g\in\mathbb{Q}^K,\Ebar\in\mathbb{Q}$, then for any $N$ divisible by $Q$ the composition $n_k:=Nq_k$ is an element of $\mathcal{F}_N(g,\Ebar)$, because $\sum_k n_k=N\sum_k q_k=N$ and $\sum_k n_k g_k=N\sum_k q_k g_k=N\Ebar$.

\begin{definition}[Denominator-exhaustive sequence]
\label{def:denom-lattice}
With $h_\star:=\max_{p\in\Gamma}H(p)$ and $\mathcal{I}:=\{N\in\mathbb{N}:\mathcal{F}_N\neq\emptyset\}$, a sequence $(N_j)_{j\ge 1}\subset\mathcal{I}$ is denominator-exhaustive if there exists a countable family $\{q^{(m)}\}_{m\ge 1}\subset\Gamma\cap\mathbb{Q}^K$ with $H(q^{(m)})\to h_\star$ as $m\to\infty$ and with common denominators $Q^{(m)}$ (so $Q^{(m)} q^{(m)}_k\in\mathbb{Z}_{\ge 0}$ for every $k$) such that, for every fixed $m$, $Q^{(m)}$ divides $N_j$ for all sufficiently large $j$ (i.e.\ there exists $J(m)\in\mathbb{N}$ with $Q^{(m)}\mid N_j$ for every $j\ge J(m)$).
\end{definition}

Eventual divisibility is the strengthening relative to the weaker ``divides infinitely many $N_j$'' formulation: it ensures that the lifted compositions $\mathbf{n}^{(j,m)}=N_j q^{(m)}\in\mathcal{F}_{N_j}$ are available for every $j\ge J(m)$, so the lower bound below can be obtained along the full tail of the sequence rather than only along a sparser sub-subsequence. Concrete admissible patterns include $N_j=j!$ (eventually divisible by every fixed denominator) and arithmetic progressions $N_j=jQ_0$ chosen so that $Q_0$ is a common multiple of the denominators in an approximating family $\{q^{(m)}\}$. For rational $(g,\Ebar)$, periodicity of $\mathcal{I}$ can also be read from the Smith normal form of the $2\times K$ constraint matrix~\cite{Schrijver1998}; this observation is not needed elsewhere.

\subsection{Constrained multinomial theorem}
\label{sec:constrained-multinomial}
Among macrostates compatible with a prescribed average of the costs $g_k=J(r^{(k)})$, the multinomial weight is asymptotically maximized by empirical compositions that maximize Shannon entropy on the corresponding affine slice. The principal asymptotic statement is Theorem~\ref{thm:types} below, which allows arbitrary real $(g,\Ebar)$ by relaxing the hard equality $\sum_k n_k g_k=N\Ebar$ to a shrinking tolerance window (the \emph{soft shell}). The exact integer equality case for rational data is recovered as Corollary~\ref{cor:types-exact}. The proofs use Stirling's formula uniformly in types together with continuity of $H$ on the simplex~\cite{CoverThomas2006,CsiszarKorner1981,Sanov1957,DemboZeitouni2010,Touchette2009}, in the form needed for Section~\ref{sec:gibbs}.

For real-valued $(g,\Ebar)$, define the soft-shell feasible set
\begin{equation}
  \mathcal{F}_N^{\delta_N}(g,\Ebar):=\Bigl\{\mathbf{n}\in\mathbb{Z}_{\ge 0}^K:\sum_k n_k=N,\ \bigl\lvert N^{-1}\sum_k n_k g_k-\Ebar\bigr\rvert\le\delta_N\Bigr\},
  \label{eq:soft-shell}
\end{equation}
with tolerance $\delta_N>0$.

\begin{theorem}[Constrained multinomial weights; soft shell]
\label{thm:types}
Fix $K\ge 2$, $g=(g_1,\ldots,g_K)\in\R^K$, and $\Ebar\in\R$. Assume the affine slice
\[
  \Gamma:=\{p\in\Delta_K:\sum_{k=1}^K p_k g_k=\Ebar\}
\]
is nonempty, and set $h_\star:=\max_{p\in\Gamma}H(p)$. Let $\delta_N>0$ satisfy $\delta_N\to 0$ and $N\delta_N\to\infty$---the minimal tolerance scaling under which the argument below yields the limit~\eqref{eq:soft-shell-limit} along the full sequence $N\to\infty$ (rather than merely along favorable subsequences). Then $\mathcal{F}_N^{\delta_N}(g,\Ebar)$ is nonempty for all sufficiently large $N$, and
\begin{equation}
  \lim_{N\to\infty}\frac{1}{N}\log\max_{\mathbf{n}\in\mathcal{F}_N^{\delta_N}(g,\Ebar)}W(\mathbf{n})=h_\star.
  \label{eq:soft-shell-limit}
\end{equation}
\end{theorem}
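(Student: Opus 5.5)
The proof splits into an upper bound and a lower bound for $N^{-1}\log\max W$, both reduced to entropy via Proposition~\ref{prop:stirling-rate}. By that proposition, uniformly over all compositions $\mathbf{n}$ with $\sum_k n_k=N$,
\[
\Bigl|\tfrac1N\log W(\mathbf{n})-H(\hat p)\Bigr|\le\frac{(K-1)\log N}{2N}+\frac{R^\star(K)}{N}=:\eta_N\to 0 .
\]
It therefore suffices to show two things. First, $\max\{H(\hat p):\mathbf{n}\in\mathcal{F}_N^{\delta_N}\}\to h_\star$. Second, $\mathcal{F}_N^{\delta_N}$ is eventually nonempty. Throughout, $\Gamma_\delta:=\{p\in\Delta_K:|\langle p,g\rangle-\Ebar|\le\delta\}$ denotes the thickened slice; it is compact, and $\Gamma_0=\Gamma$.

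\textbf{Upper bound.} Every type in $\mathcal{F}_N^{\delta_N}$ lies in $\Gamma_{\delta_N}$, so the maximum of $H$ over these types is at most $h(\delta_N):=\max_{\Gamma_{\delta_N}}H$. I claim $h(\delta)\downarrow h_\star$ as $\delta\downarrow0$, and I would prove it by compactness. Pick maximizers $p_\delta\in\Gamma_\delta$. Any limit point $\bar p$ as $\delta\to0$ satisfies $\langle \bar p,g\rangle=\Ebar$, so $\bar p\in\Gamma$. Continuity of $H$ on $\Delta_K$ then gives $\limsup_{\delta\to0}h(\delta)\le H(\bar p)\le h_\star$. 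Combined with $\eta_N\to0$, this yields $\limsup_N N^{-1}\log\max W\le h_\star$.

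\textbf{Lower bound and nonemptiness.} This is the step where the condition $N\delta_N\to\infty$ is used. Fix $\epsilon>0$ and choose $p^\epsilon\in\Gamma$ in the relative interior of $\Delta_K$ with $H(p^\epsilon)>h_\star-\epsilon$. Such a point exists: if $\Gamma$ meets the open simplex, take a small convex combination of the maximizer with an interior point of $\Gamma$; if not, $\Gamma$ is a face-type slice and the argument below still works without needing interiority. Round $Np^\epsilon$ to an integer composition $\mathbf{n}^{(N)}$ by taking floors and distributing the remainder. This gives $|n_k-Np^\epsilon_k|\le1$, hence $\|\hat p-p^\epsilon\|_1\le K/N$. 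The cost mean then satisfies
\[
\bigl|N^{-1}\textstyle\sum_k n_kg_k-\Ebar\bigr|\le \frac{K\max_k|g_k|}{N}.
\]
This is at most $\delta_N$ for all large $N$, because $N\delta_N\to\infty$. So $\mathbf{n}^{(N)}\in\mathcal{F}_N^{\delta_N}$, which simultaneously proves eventual nonemptiness. Uniform continuity of $H$ on $\Delta_K$ gives $H(\hat p^{(N)})\to H(p^\epsilon)$. Therefore $\liminf_N N^{-1}\log\max W\ge h_\star-\epsilon-\limsup\eta_N=h_\star-\epsilon$, and letting $\epsilon\downarrow0$ gives the matching lower bound.

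The rounding step must also respect zero coordinates. Where $p^\epsilon_k=0$, set $n_k=0$; the error bound is unaffected, which is why interiority is not actually needed. The main obstacle is just this bookkeeping: keeping the rounding error of order $1/N$ so that it is dominated by $\delta_N$. The rest is compactness together with the uniform Stirling control of Proposition~\ref{prop:stirling-rate}.
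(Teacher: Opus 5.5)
Your proof is correct and follows essentially the same route as the paper. Both use the uniform Stirling control of Proposition~\ref{prop:stirling-rate}, the compactness argument giving $\sup_{\Gamma_\delta}H\to h_\star$ for the upper bound, and rounding a point of $\Gamma$ to a type within $\ell^1$-distance $K/N$, so that $N\delta_N\to\infty$ gives both eventual nonemptiness and the lower bound.

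The only difference is that the paper rounds the exact maximizer $p^\star\in\Gamma$, which exists by compactness. Your $\epsilon$-near-maximizer and the relative-interior detour are therefore unnecessary, as you note yourself at the end.
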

\begin{proof}
For $\delta\ge 0$, write
\[
  \Gamma_\delta:=\{p\in\Delta_K:|\langle g\rangle_p-\Ebar|\le\delta\},
  \qquad \langle g\rangle_p:=\sum_k p_k g_k.
\]
The sets $\Gamma_\delta$ are compact and decrease to $\Gamma$ as $\delta\downarrow 0$. Since $H$ is continuous on the compact simplex, 
\[
  \sup_{p\in\Gamma_\delta}H(p)\longrightarrow h_\star
  \qquad(\delta\downarrow 0).
\]
Indeed, if this failed, there would be $\eta>0$, $\delta_m\downarrow 0$, and $p^{(m)}\in\Gamma_{\delta_m}$ with $H(p^{(m)})\ge h_\star+\eta$; a convergent subsequence would have a limit $p^\infty\in\Gamma$ by compactness and continuity of $\langle g\rangle_p$, and continuity of $H$ would give $H(p^\infty)\ge h_\star+\eta$, contradicting the definition of $h_\star$.

For the upper bound, any $\mathbf{n}\in\mathcal{F}_N^{\delta_N}(g,\Ebar)$ has empirical type $\hat p=\mathbf{n}/N\in\Gamma_{\delta_N}$. Proposition~\ref{prop:stirling-rate} gives, uniformly in $\mathbf{n}$ at fixed $K$,
\[
  \frac1N\log W(\mathbf{n})\le H(\hat p)+o(1)
  \le \sup_{p\in\Gamma_{\delta_N}}H(p)+o(1)=h_\star+o(1),
\]
so the limsup in~\eqref{eq:soft-shell-limit} is at most $h_\star$.

For the lower bound, choose $p^\star\in\Gamma$ with $H(p^\star)=h_\star$; such a maximizer exists by compactness of $\Gamma$. For each $N$, choose an integer vector $\mathbf{n}^{(N)}$ with $\sum_k n^{(N)}_k=N$ and
\[
  \left\|\frac{\mathbf{n}^{(N)}}{N}-p^\star\right\|_1\le \frac{K}{N},
\]
obtained, for example, by taking floors of $Np^\star_k$ and distributing the remaining mass among the largest fractional parts. Then
\[
  \left|\left\langle g\right\rangle_{\mathbf{n}^{(N)}/N}-\Ebar\right|
  =\left|\left\langle g\right\rangle_{\mathbf{n}^{(N)}/N}-\left\langle g\right\rangle_{p^\star}\right|
  \le \|g\|_\infty\,\left\|\frac{\mathbf{n}^{(N)}}{N}-p^\star\right\|_1
  \le \frac{K\|g\|_\infty}{N}.
\]
Since $N\delta_N\to\infty$, the last quantity is at most $\delta_N$ for all sufficiently large $N$. Thus $\mathbf{n}^{(N)}\in\mathcal{F}_N^{\delta_N}(g,\Ebar)$ eventually. By continuity of $H$ and Proposition~\ref{prop:stirling-rate},
\[
  \frac1N\log W(\mathbf{n}^{(N)})
  =H(\mathbf{n}^{(N)}/N)+o(1)\longrightarrow H(p^\star)=h_\star.
\]
This gives the matching liminf and proves~\eqref{eq:soft-shell-limit}.
\end{proof}

\begin{corollary}[Exact affine shell under rational costs]
\label{cor:types-exact}
Fix $K\ge 2$ and ratio values $r^{(1)},\ldots,r^{(K)}$ with $g_k:=J(r^{(k)})$ not all equal. Write $\mathcal{F}_N:=\mathcal{F}_N(g,\Ebar)$ as in~\eqref{eq:macro-feasible-set}. Assume $(g,\Ebar)$ is macro-feasible, that $\min_k g_k<\Ebar<\max_k g_k$, and that $g_1,\ldots,g_K,\Ebar\in\mathbb{Q}$, so that $\Gamma:=\{p\in\Delta_K:\sum_k p_k g_k=\Ebar\}$ is a rational affine slice and $\Gamma\cap\mathbb{Q}^K$ is dense in $\Gamma$. Set
\[
  h_\star:=\max_{p\in\Gamma} H(p),
\]
and let $\mathcal{I}:=\{N\in\mathbb{N}:\mathcal{F}_N\neq\emptyset\}$. Let $(N_j)_{j\ge 1}\subset\mathcal{I}$ be a denominator-exhaustive sequence (Definition~\ref{def:denom-lattice}). For each $N_j$, let $\mathbf{n}^\star\in\mathcal{F}_{N_j}$ maximize $W(\mathbf{n})$ on $\mathcal{F}_{N_j}$, and set $\hat p^{(N_j)}_k=n^\star_k/N_j$. Then every subsequential limit $\hat p^\infty$ of $(\hat p^{(N_j)})$ in $\Delta_K$ satisfies $H(\hat p^\infty)=h_\star$, and
\[
  \lim_{j\to\infty}\frac{1}{N_j}\log\max_{\mathbf{n}\in\mathcal{F}_{N_j}}W(\mathbf{n})=h_\star.
\]
\end{corollary}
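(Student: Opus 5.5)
The argument runs in two halves, an upper bound and a lower bound on $N_j^{-1}\log\max_{\mathcal{F}_{N_j}}W$, followed by a compactness argument for the subsequential limits. The upper bound reuses the exact-shell analogue of the first half of the proof of Theorem~\ref{thm:types}. Every $\mathbf{n}\in\mathcal{F}_{N_j}$ has empirical type $\mathbf{n}/N_j\in\Gamma$ exactly. Proposition~\ref{prop:stirling-rate}, with $K^\dagger\le K$ fixed, then gives
\[
  \frac{1}{N_j}\log W(\mathbf{n})\le H(\mathbf{n}/N_j)+\frac{(K-1)\log N_j}{2N_j}+\frac{R^\star(K)}{N_j}\le h_\star+o(1)
\]
uniformly in $\mathbf{n}$. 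Hence the limsup of $N_j^{-1}\log\max_{\mathcal{F}_{N_j}}W$ is at most $h_\star$.

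For the lower bound I would use the denominator-exhaustive family $\{q^{(m)}\}$ of Definition~\ref{def:denom-lattice}. Fix $\eta>0$ and choose $m$ with $H(q^{(m)})\ge h_\star-\eta$. For $j\ge J(m)$ we have $Q^{(m)}\mid N_j$, so by the arithmetic observation after~\eqref{eq:macro-feasible-set} the lift $\mathbf{n}^{(j,m)}:=N_jq^{(m)}$ lies in $\mathcal{F}_{N_j}$. Its type is exactly $q^{(m)}$, independent of $j$, so Proposition~\ref{prop:stirling-rate} gives $N_j^{-1}\log W(\mathbf{n}^{(j,m)})\ge H(q^{(m)})-o(1)\ge h_\star-\eta-o(1)$. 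Taking the liminf in $j$ and then letting $\eta\downarrow0$ yields the matching lower bound, and the limit statement follows.

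For the subsequential limits, let $\hat p^{(N_j)}$ be the maximizers' types. Each lies in the closed set $\Gamma$, so any subsequential limit $\hat p^\infty$ also lies in $\Gamma$, which gives $H(\hat p^\infty)\le h_\star$. Conversely, combining the lower bound with the Stirling upper bound of Proposition~\ref{prop:stirling-rate} applied to $\mathbf{n}^\star$ gives
\[
  H(\hat p^{(N_j)})\ge \frac{1}{N_j}\log W(\mathbf{n}^\star)-o(1)\ge h_\star-o(1).
\]
Continuity of $H$ on $\Delta_K$ then gives $H(\hat p^\infty)\ge h_\star$, so $H(\hat p^\infty)=h_\star$. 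Since $H$ is strictly concave and $\Gamma$ is convex, this also identifies $\hat p^\infty$ as the unique maximizer, although the statement does not require this.

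The main obstacle is the availability of exact-shell lifts close to the maximizer, which the soft shell avoided. Here it is handled entirely by the denominator-exhaustive hypothesis, so the only delicate point is ensuring that the rational points $q^{(m)}$ with entropy near $h_\star$ actually exist in $\Gamma$. Rationality of $(g,\Ebar)$ makes $\Gamma\cap\mathbb{Q}^K$ dense in $\Gamma$, using $\min g<\Ebar<\max g$ so that $\Gamma$ is a nondegenerate rational polytope. Continuity of $H$ then supplies such points, consistent with the existence of the family assumed in Definition~\ref{def:denom-lattice}. Boundary types with zero coordinates need no special care, because Proposition~\ref{prop:stirling-rate} is stated with $K^\dagger$ and is uniform.
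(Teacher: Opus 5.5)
Your proof is correct and follows the paper's structure; in particular, the lower bound via the lifts $N_jq^{(m)}$ of the denominator-exhaustive family is exactly the paper's argument. The differences are small refinements. You get the upper bound directly from Proposition~\ref{prop:stirling-rate} and the exact membership $\mathbf{n}/N_j\in\Gamma$, whereas the paper embeds $\mathcal{F}_{N_j}$ in the soft shell and invokes Theorem~\ref{thm:types}. You also spell out the step $H(\hat p^{(N_j)})\to h_\star$ (the Stirling upper bound on $W(\mathbf{n}^\star)$ combined with the closedness of $\Gamma$), which the paper compresses into a bare appeal to continuity of $H$.
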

\begin{proof}
Fix any sequence $\delta_N\downarrow 0$ with $N\delta_N\to\infty$. For every $N$ and every $\mathbf{n}\in\mathcal{F}_N$, the exact equality $\sum_k n_k g_k=N\Ebar$ implies $\mathbf{n}\in\mathcal{F}_N^{\delta_N}(g,\Ebar)$, hence
\[
  \max_{\mathbf{n}\in\mathcal{F}_N}W(\mathbf{n})\le \max_{\mathbf{n}\in\mathcal{F}_N^{\delta_N}(g,\Ebar)}W(\mathbf{n}).
\]
Applying Theorem~\ref{thm:types} along the subsequence $(N_j)$ gives
\[
  \limsup_{j\to\infty}\frac{1}{N_j}\log\max_{\mathbf{n}\in\mathcal{F}_{N_j}}W(\mathbf{n})\le h_\star.
\]

For the lower bound, write $H_{N_j}:=\max_{\mathbf{n}\in\mathcal{F}_{N_j}}\frac{1}{N_j}\log W(\mathbf{n})$ and fix $\varepsilon>0$. Under the rationality hypothesis, $\Gamma$ is a rational polytope with nonempty relative interior whenever $\min_k g_k<\Ebar<\max_k g_k$, so $\Gamma\cap\mathbb{Q}^K$ is dense in $\Gamma$ and continuity of $H$ on $\Delta_K$ guarantees the existence of a rational near-maximizer $q^{(\varepsilon)}\in\Gamma\cap\mathbb{Q}^K$ with $H(q^{(\varepsilon)})\ge h_\star-\varepsilon$. By Definition~\ref{def:denom-lattice}, choosing $m$ large enough that $H(q^{(m)})\ge h_\star-\varepsilon$, the denominator $Q^{(m)}$ divides $N_j$ for all $j\ge J(m)$, so for every such $j$ the lifted composition $\mathbf{n}^{(j)}:=N_j q^{(m)}\in\mathcal{F}_{N_j}$ has $\tilde p^{(j)}=\mathbf{n}^{(j)}/N_j\equiv q^{(m)}$ exactly. Continuity of $H$ gives $H(\tilde p^{(j)})=H(q^{(m)})$, and Stirling's formula yields, for all $j\ge J(m)$,
\[
  \frac{1}{N_j}\log W(\mathbf{n}^{(j)}) = H(q^{(m)}) + o(1) \ge h_\star - \varepsilon + o(1).
\]
Since $\mathbf{n}^\star$ maximizes $W$ on $\mathcal{F}_{N_j}$, $H_{N_j}\ge h_\star-\varepsilon+o(1)$ for all $j\ge J(m)$. Because eventual divisibility holds at every $m$, letting $\varepsilon_m\downarrow 0$ along the family $\{q^{(m)}\}$ gives $\liminf_{j\to\infty} H_{N_j}\ge h_\star$ along the full sequence $(N_j)$. Combined with the upper bound, $\lim_j H_{N_j}=h_\star$. Any subsequential limit of maximizers attains $h_\star$ because $H$ is continuous on $\Delta_K$ (with $0\log 0:=0$).
\end{proof}

\begin{remark}
\label{rem:types-general}
Theorem~\ref{thm:types} is the full-sequence soft-shell method-of-types statement for arbitrary real $(g,\Ebar)$ with nonempty $\Gamma$. Corollary~\ref{cor:types-exact} is the exact-shell specialization: it holds along any sequence $(N_j)\subset\mathcal{I}$ for which there exist $\mathbf{n}^{(j)}\in\mathcal{F}_{N_j}$ with $H(\mathbf{n}^{(j)}/N_j)\to h_\star$, and denominator-exhaustive sequences (Definition~\ref{def:denom-lattice}) supply the Diophantine lower bound in the rational case. Without such an attainability hypothesis, sparse subsequences exist for which feasible types fail to approach the entropy maximizer, and convergence along the full sequence $N\to\infty$ with $\mathcal{F}_N\neq\emptyset$ cannot be claimed for the exact shell.
\end{remark}

Theorem~\ref{thm:types} is the finite-$K$ method-of-types statement that maximizing the multinomial weight at approximately fixed average cost is equivalent to maximizing Shannon entropy on the constraint slice~\cite{CoverThomas2006,CsiszarKorner1981,Sanov1957,DemboZeitouni2010}. Corollary~\ref{cor:types-exact} packages the classical exact rational slice; the eventual-divisibility bookkeeping of Definition~\ref{def:denom-lattice} is needed only there. Quantitative finite-$N$, finite-$K$ control is given by Proposition~\ref{prop:stirling-rate}.

In summary, the cost functional $J$ is fixed by the RCL once the ratios $r^{(k)}$ are specified, the entropy functional $H$ is supplied by combinatorial counting under Stirling's formula, and the constrained variational problem of maximizing $H(p)$ over $\Gamma=\{p\in\Delta_K:\langle g\rangle_p=\Ebar\}$ is the asymptotic shadow of maximizing $W(\mathbf{n})$ over $\mathcal{F}_N(g,\Ebar)$. Section~\ref{sec:gibbs} converts this equivalence into an explicit Gibbs measure via Lagrange optimization.

\subsection{Modeling inputs: ratios, costs, and physical interpretation}
\label{sec:ledger-bridge}
The finite-state framework treats the microscopic labels $r_\omega$ as inputs: they are supplied by a ratio-cost model rather than derived from a Hamiltonian or any other underlying dynamical postulate. Each outcome $\omega$ is assigned a positive dimensionless ratio $r_\omega$ relative to a chosen reference scale, and the RCL then fixes the scalar cost as $X_\omega=J(r_\omega)$.

Once the ratios are specified, the energy-like cost in the canonical variational problem is determined; mapping the ratios $r_\omega$ to a material Hamiltonian, a transition rate, an affinity, or an experimentally measurable imbalance is model-specific and lies outside the present scope. In a reversible two-state chain, detailed balance gives the input ratio $r_b=W_{ab}/W_{ba}=\pi_b/\pi_a$~\cite{LevinPeresWilmer2017}, so the corresponding cost vector is $(0,J(r_b))$; the three-state example of Section~\ref{sec:example} illustrates the construction in a nontrivial setting, including the Gibbs distribution at a prescribed mean cost and its comparison with the chain's stationary law.

\section{Variational derivation of the finite-state Gibbs law}
\label{sec:gibbs}
The previous sections supply two ingredients: the closed-form cost $J(x)=\tfrac12(x+x^{-1})-1$ from Theorem~\ref{thm:wz-uniqueness}, and Shannon entropy as the asymptotic rate function for multinomial weights on the affine cost shell (Theorem~\ref{thm:types}). The convex-analytic step assembles $H$ and $X_\omega=J(r_\omega)$ into a Lagrangian, identifies its unique critical point on the cost shell, and identifies that critical point as the canonical Gibbs measure. The argument is the standard convex duality of the canonical ensemble, applied here with $X_\omega$ selected upstream from the input ratios once the RCL assumptions are adopted. The section makes no use of the explicit form of $J$ from~\eqref{eq:J-closed}: it works for any non-constant cost vector $X_\omega$ on a finite outcome space, and the role of the RCL is only to fix that vector. A reader interested only in the canonical ensemble on a finite space may skip the dependence on $J$ and read $X$ as a generic real-valued observable.

\subsection{Setup: entropy, mean cost, and inverse temperature}
\textbf{Standing assumptions for Sections~\ref{sec:gibbs}--\ref{sec:fe}:} $\Omega$ is a finite set with $\lvert\Omega\rvert\ge 2$; positive ratios $r_\omega>0$ are specified for each $\omega\in\Omega$; the cost is $X_\omega:=J(r_\omega)$; and $X$ is non-constant on $\Omega$ (i.e., not all $X_\omega$ are equal). The constraint value $\Ebar$ satisfies $\min_\omega X_\omega<\Ebar<\max_\omega X_\omega$. The thermodynamic convention $\TR>0$ (equivalently $\beta>0$) is in force, so $\Ebar$ lies below the uniform-distribution mean $\langle X\rangle_{p^{(0)}}$ (Proposition~\ref{prop:inv-beta}).

Two functionals on this set play a central role. The first is Shannon entropy,
\[
  H(q) := -\sum_{\omega\in\Omega} q_\omega\log q_\omega,
\]
written with the same symbol $H$ as in Section~\ref{sec:counting} (there applied to empirical types $\hat p\in\Delta_K$). The second is the expected cost,
\[
  \langle X\rangle_q := \sum_{\omega\in\Omega} q_\omega X_\omega,
\]
which plays the role of $\sum_k \hat p_k g_k$ from Section~\ref{sec:counting}. The variational problem is to maximize $H(q)$ subject to a fixed value of $\langle X\rangle_q=\Ebar$ together with the normalization $\sum_\omega q_\omega=1$.

The inverse temperature $\beta$ is the Lagrange multiplier conjugate to the mean-cost constraint. In the canonical (positive-temperature) regime $\beta>0$, set the recognition temperature $\TR:=1/\beta$ (dimensionless unless an additional physical scale is supplied). The candidate maximizer then takes the familiar Gibbs form
\begin{equation}
  p_\omega=\frac{e^{-X_\omega/\TR}}{Z},\qquad
  Z=\sum_{\omega'\in\Omega} e^{-X_{\omega'}/\TR},
  \label{eq:gibbs}
\end{equation}
where $Z$ is the partition function. The next two subsections verify that~\eqref{eq:gibbs} is the unique maximizer and that $\TR$ is uniquely determined by the constraint value~$\Ebar$.

\subsection{Existence and uniqueness of the dual parameter}
Working with $\beta$ rather than $\TR$ allows a uniform treatment of the positive- and negative-temperature regimes. For each $\beta\in\R$, define the partition function
\[
  Z(\beta):=\sum_{\omega\in\Omega} e^{-\beta X_\omega}
\]
and the parametric Gibbs law $p^{(\beta)}_\omega:=e^{-\beta X_\omega}/Z(\beta)$. The family $\beta\mapsto p^{(\beta)}$ is the exponential family canonically associated with the sufficient statistic~$X$, and the cumulant generating function $\log Z(\beta)$ encodes its differential structure.

\begin{lemma}[Strict convexity of $\log Z$]
\label{lem:logZ}
Under the standing assumption that $X$ is non-constant on $\Omega$, $\log Z$ is strictly convex on $\R$, and
\[
  \frac{\mathrm{d}}{\mathrm{d}\beta}\log Z(\beta)=-\langle X\rangle_{p^{(\beta)}},\qquad
  \frac{\mathrm{d}^2}{\mathrm{d}\beta^2}\log Z(\beta)=\mathrm{Var}_{p^{(\beta)}}(X)>0.
\]
Consequently $\beta\mapsto\langle X\rangle_{p^{(\beta)}}$ is strictly decreasing, with
\[
  \lim_{\beta\to-\infty}\langle X\rangle_{p^{(\beta)}}=\max_{\omega\in\Omega}X_\omega,
  \qquad
  \lim_{\beta\to+\infty}\langle X\rangle_{p^{(\beta)}}=\min_{\omega\in\Omega}X_\omega.
\]
\end{lemma}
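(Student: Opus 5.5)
The plan is to differentiate $\log Z(\beta)=\log\sum_\omega e^{-\beta X_\omega}$ directly. This is a finite sum of smooth functions of $\beta$, so termwise differentiation is legitimate.

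The first derivative is
\[
Z'(\beta)/Z(\beta)=-\sum_\omega X_\omega e^{-\beta X_\omega}/Z(\beta)=-\langle X\rangle_{p^{(\beta)}}.
\]
Differentiating once more with the quotient rule gives
\[
\frac{Z''}{Z}-\Bigl(\frac{Z'}{Z}\Bigr)^2=\langle X^2\rangle_{p^{(\beta)}}-\langle X\rangle_{p^{(\beta)}}^2=\mathrm{Var}_{p^{(\beta)}}(X).
\]
This variance is strictly positive. Every weight $p^{(\beta)}_\omega$ is strictly positive, and $X$ is non-constant by the standing assumption, so $X$ is not $p^{(\beta)}$-almost surely constant. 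A positive second derivative on all of $\R$ gives strict convexity of $\log Z$. Since $\frac{\mathrm{d}}{\mathrm{d}\beta}\langle X\rangle_{p^{(\beta)}}=-\mathrm{Var}_{p^{(\beta)}}(X)<0$, the map $\beta\mapsto\langle X\rangle_{p^{(\beta)}}$ is strictly decreasing.

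For the limits, set $X_{\min}:=\min_\omega X_\omega$ and let $M:=\{\omega:X_\omega=X_{\min}\}$. Factor out $e^{-\beta X_{\min}}$ from numerator and denominator:
\[
p^{(\beta)}_\omega=\frac{e^{-\beta(X_\omega-X_{\min})}}{|M|+\sum_{\omega'\notin M}e^{-\beta(X_{\omega'}-X_{\min})}}.
\]
As $\beta\to+\infty$, each exponent with $X_\omega>X_{\min}$ tends to $-\infty$. Hence $p^{(\beta)}$ converges to the uniform distribution on $M$, and by finiteness of $\Omega$ we get $\langle X\rangle_{p^{(\beta)}}\to X_{\min}$. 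The case $\beta\to-\infty$ follows by the same argument with $X_{\max}$, or simply by applying the first case to $-X$ at parameter $-\beta$.

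There is no real obstacle here. The only point needing care is justifying strict positivity of the variance: it uses strict positivity of all Gibbs weights together with non-constancy of $X$.
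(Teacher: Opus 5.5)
Your proposal is correct and follows the same route as the paper: differentiate $\log Z$ to get $-\langle X\rangle_{p^{(\beta)}}$ and $\mathrm{Var}_{p^{(\beta)}}(X)>0$ (positive weights plus non-constant $X$), deduce strict convexity and strict monotonicity, and obtain the limits from concentration of $p^{(\beta)}$ on the argmin/argmax of $X$. You spell out the concentration step explicitly, including the reduction of $\beta\to-\infty$ to the other case via $X\mapsto -X$, where the paper only asserts it.
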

\begin{proof}
Differentiate $\log Z$ in $\beta$; the variance formula is the standard exponential-family identity~\cite{Ellis1985}. Strict positivity of the variance for non-constant $X$ implies strict convexity of $\log Z$ and hence a strictly decreasing derivative $-\langle X\rangle_{p^{(\beta)}}$. The limiting values follow from the concentration of $p^{(\beta)}$ on the argmax (respectively argmin) of $X$ as $\beta\to-\infty$ (respectively $\beta\to+\infty$).
\end{proof}

As in the cumulant-generating-function picture for exponential families~\cite{Ellis1985}, $\log Z$ is a strictly convex potential whose Legendre transform encodes the entropy--mean-cost tradeoff. Strict convexity makes the conjugate variable $\langle X\rangle_{p^{(\beta)}}$ a strictly monotone function of $\beta$, taking each value in the open interval $(\min X,\max X)$ exactly once. Each admissible constraint value $\Ebar$ therefore corresponds to a unique Lagrange multiplier~$\beta$, as stated in the proposition below.

\begin{proposition}[Existence and uniqueness of the dual parameter $\beta$]
\label{prop:inv-beta}
Assume $X$ is not constant on $\Omega$. For every $\Ebar$ strictly between $\min_\omega X_\omega$ and $\max_\omega X_\omega$, there exists a unique $\beta\in\R$ such that $\langle X\rangle_{p^{(\beta)}}=\Ebar$. Moreover, $\beta>0$ if and only if $\Ebar<\langle X\rangle_{p^{(0)}}$ (the mean cost under the uniform distribution), and $\beta<0$ if and only if the reverse strict inequality holds. In the thermodynamic convention $\TR>0$ used in Sections~\ref{sec:gibbs}--\ref{sec:fe}, we restrict attention to $\beta>0$, equivalently $\Ebar$ below the uniform average.
\end{proposition}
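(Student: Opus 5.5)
The plan is to read everything off Lemma~\ref{lem:logZ}. Define $m(\beta):=\langle X\rangle_{p^{(\beta)}}=-\frac{\mathrm{d}}{\mathrm{d}\beta}\log Z(\beta)$. Since $Z$ is a finite sum of exponentials, $m$ is continuous (indeed smooth) on $\R$. By Lemma~\ref{lem:logZ}, $m'(\beta)=-\mathrm{Var}_{p^{(\beta)}}(X)<0$, so $m$ is strictly decreasing. The same lemma gives $m(\beta)\to\max_\omega X_\omega$ as $\beta\to-\infty$ and $m(\beta)\to\min_\omega X_\omega$ as $\beta\to+\infty$.

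\emph{Existence.} Fix $\Ebar$ with $\min_\omega X_\omega<\Ebar<\max_\omega X_\omega$. By the limits, there are $\beta_-<\beta_+$ with $m(\beta_-)>\Ebar>m(\beta_+)$. The intermediate value theorem on $[\beta_-,\beta_+]$ then gives some $\beta$ with $m(\beta)=\Ebar$.

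\emph{Uniqueness.} Since $m$ is strictly decreasing, it is injective, so this $\beta$ is unique.

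\emph{Sign of $\beta$.} At $\beta=0$ the law $p^{(0)}$ is uniform, so $m(0)=\langle X\rangle_{p^{(0)}}$. Strict monotonicity gives $\beta>0 \iff m(\beta)<m(0)$, that is, $\Ebar<\langle X\rangle_{p^{(0)}}$. In the same way, $\beta<0 \iff \Ebar>\langle X\rangle_{p^{(0)}}$.

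\emph{Uniform mean lies in the admissible range.} We also note that $\langle X\rangle_{p^{(0)}}$ lies strictly inside $(\min X,\max X)$ because $X$ is non-constant. Hence both sign regimes are nonempty, and the case $\Ebar=\langle X\rangle_{p^{(0)}}$ corresponds exactly to $\beta=0$. The final sentence of the proposition about the convention $\TR>0$ is a restriction of scope, not a claim needing proof.

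\emph{Main obstacle.} The only step that needs any care is justifying the limits in Lemma~\ref{lem:logZ}, which we are taking as given. If one wanted to verify them directly, write $p^{(\beta)}_\omega=e^{-\beta(X_\omega-\min X)}/\sum_{\omega'}e^{-\beta(X_{\omega'}-\min X)}$. As $\beta\to+\infty$, every weight with $X_\omega>\min X$ tends to $0$, while the denominator stays at least $1$. So $p^{(\beta)}$ concentrates on $\arg\min X$ and $m(\beta)\to\min X$. The case $\beta\to-\infty$ is symmetric, with $\arg\max X$ in place of $\arg\min X$.
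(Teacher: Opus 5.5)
Your proposal is correct and follows essentially the same route as the paper: both read existence, uniqueness, and the sign criterion off Lemma~\ref{lem:logZ}, namely that $\beta\mapsto\langle X\rangle_{p^{(\beta)}}$ is continuous and strictly decreasing with limits $\max X$ and $\min X$, and both locate the sign of $\beta$ by comparing $\Ebar$ with the uniform mean $\langle X\rangle_{p^{(0)}}$. You spell out the intermediate-value step, check that the uniform mean lies strictly inside $(\min X,\max X)$, and verify the limits directly; the paper compresses these into ``hence bijects $\R$ onto $(\min X,\max X)$''.
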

\begin{proof}
By Lemma~\ref{lem:logZ}, $\beta\mapsto\langle X\rangle_{p^{(\beta)}}$ is continuous and strictly decreasing with the stated limits, hence bijects $\R$ onto $(\min X,\max X)$. The sign of $\beta$ is determined by comparing $\Ebar$ to $\langle X\rangle_{p^{(0)}}$, which is the uniform average.
\end{proof}

\subsection{Entropy maximization and Gibbs weights}
With $\beta$ uniquely fixed by Proposition~\ref{prop:inv-beta}, the variational principle can be stated and proved. The argument is the classical Lagrange-multiplier calculation for maximum entropy at fixed mean cost: the candidate maximizer produced by stationarity coincides with the parametric Gibbs law $p^{(\beta)}$ from Lemma~\ref{lem:logZ}, and strict concavity of Shannon entropy promotes this critical point to the unique global maximizer.

\begin{proposition}[Finite-state Gibbs variational principle; classical finite-state form]
\label{thm:maxent}
Assume $X$ is not constant on $\Omega$. Let $\Ebar$ lie strictly between $\min_\omega X_\omega$ and $\max_\omega X_\omega$, and let $\beta\in\R$ be the unique dual parameter from Proposition~\ref{prop:inv-beta} such that $\langle X\rangle_{p^{(\beta)}}=\Ebar$. Among all probability distributions on $\Omega$ with $\langle X\rangle_q=\Ebar$ and strictly positive weights, the Gibbs law $p^{(\beta)}_\omega=e^{-\beta X_\omega}/Z(\beta)$ uniquely maximizes the Shannon entropy $H(q)$. In the thermodynamic convention $\TR>0$ used in~\eqref{eq:gibbs} and Section~\ref{sec:fe}, we restrict attention to $\beta>0$, equivalently $\Ebar<\langle X\rangle_{p^{(0)}}$; in this regime $p^{(\beta)}$ coincides with~\eqref{eq:gibbs} for $\TR=1/\beta$.
\end{proposition}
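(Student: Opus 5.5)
The cleanest route avoids Lagrange-multiplier calculus and uses the nonnegativity of Kullback--Leibler divergence (Gibbs' inequality). Throughout, take $\beta$ to be the dual parameter furnished by Proposition~\ref{prop:inv-beta}, so that $\langle X\rangle_{p^{(\beta)}}=\Ebar$. Note also that $p^{(\beta)}$ is itself admissible: it is strictly positive because every weight $e^{-\beta X_\omega}$ is positive and $Z(\beta)$ is finite, and it satisfies the constraint by the choice of $\beta$.

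The key step is an exact identity. Let $q$ be any strictly positive distribution with $\langle X\rangle_q=\Ebar$. Using $\log p^{(\beta)}_\omega=-\beta X_\omega-\log Z(\beta)$, expand
\[
  \DKL(q\Vert p^{(\beta)})=\sum_\omega q_\omega\log q_\omega-\sum_\omega q_\omega\log p^{(\beta)}_\omega
  =-H(q)+\beta\langle X\rangle_q+\log Z(\beta).
\]
Evaluating the same expression at $q=p^{(\beta)}$, where the left side is zero, gives $H(p^{(\beta)})=\beta\Ebar+\log Z(\beta)$. For every admissible $q$ we have $\langle X\rangle_q=\Ebar$, so subtracting the two identities yields
\[
  H(p^{(\beta)})-H(q)=\DKL(q\Vert p^{(\beta)}).
\]
This identity is the whole content of the maximization.

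It remains to show $\DKL(q\Vert p^{(\beta)})\ge 0$, with equality only when $q=p^{(\beta)}$. I would derive this from Jensen's inequality applied to the strictly concave function $\log$:
\[
  -\DKL(q\Vert p)=\sum_\omega q_\omega\log\frac{p_\omega}{q_\omega}\le\log\sum_\omega p_\omega=0.
\]
Equality in Jensen forces the ratio $p_\omega/q_\omega$ to be constant on the support of $q$, which is all of $\Omega$. Since both vectors are normalized, that constant must be $1$, so $q=p^{(\beta)}$. This establishes both that $p^{(\beta)}$ maximizes $H$ and that the maximizer is unique.

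An alternative route finds the stationary point of the Lagrangian $H(q)-\beta(\langle X\rangle_q-\Ebar)-\alpha(\sum_\omega q_\omega-1)$ and then invokes strict concavity of $H$ on the affine slice. The KL argument is preferable because it needs no separate justification for boundary behaviour. For the final clause, when $\beta>0$, which by Proposition~\ref{prop:inv-beta} is equivalent to $\Ebar<\langle X\rangle_{p^{(0)}}$, substituting $\beta=1/\TR$ identifies $p^{(\beta)}$ with~\eqref{eq:gibbs}.

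I do not expect a genuine obstacle here. The one point needing care is the equality case of Jensen's inequality, and the strict-positivity hypothesis on $q$ settles it directly.
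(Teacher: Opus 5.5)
Your argument is correct, but it is not the route the paper takes for this proposition. The paper forms the Lagrangian $H(q)-\beta(\langle X\rangle_q-\Ebar)-\lambda(\sum_\omega q_\omega-1)$ and solves the stationarity equations, which show that an interior critical point must have the form $e^{-\beta X_\omega-c}$. It normalizes to get $p^{(\beta)}$, then invokes strict concavity of $H$ to make this critical point the unique maximizer on the constraint slice. It adds a further appeal to concavity and continuity of $H$ on the compact simplex for boundary competitors.

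You instead take the candidate $p^{(\beta)}$ from Proposition~\ref{prop:inv-beta} and certify it directly. Your tools are the exact identity $H(p^{(\beta)})-H(q)=\DKL(q\Vert p^{(\beta)})$ on the slice and the equality case of Gibbs' inequality. The paper records this mechanism only afterwards, in Remark~\ref{cor:entropy-max}, as a consequence of Proposition~\ref{thm:fe-kl} and Corollary~\ref{cor:fe-min}. Your version, written in $\beta$ rather than $\TR=1/\beta$, is slightly more general. The Remark's sign step relies on $\TR>0$, whereas your identity holds for every $\beta\in\R$, including the $\beta\le 0$ cases covered by the main claim.

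The Lagrange route is explanatory: it shows the exponential form is forced by stationarity rather than posited and then checked. Your route gives a quantitative entropy deficit and a uniqueness proof that needs no concavity theory or constraint qualification. It also extends to the closed simplex essentially for free. With $0\log 0:=0$ the identity persists, and running Jensen over $\mathrm{supp}(q)$ gives $-\DKL(q\Vert p^{(\beta)})\le\log\sum_{\omega\in\mathrm{supp}(q)}p^{(\beta)}_\omega\le 0$. Since $p^{(\beta)}>0$, equality forces $\mathrm{supp}(q)=\Omega$ and then $q=p^{(\beta)}$.
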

\begin{proof}
Form the Lagrangian
\begin{equation}
  \mathcal{L}(q)=H(q)
  -\beta\Bigl(\sum_{\omega} q_\omega X_\omega-\Ebar\Bigr)
  -\lambda\Bigl(\sum_{\omega} q_\omega-1\Bigr),
\end{equation}
where $\lambda$ is the multiplier conjugate to the normalization constraint. Stationarity in each $q_\omega$ yields the equations $-\log q_\omega-1-\beta X_\omega-\lambda=0$, i.e.\ $\log q_\omega=-\beta X_\omega-(1+\lambda)$ for every $\omega\in\Omega$. Setting $c:=1+\lambda$ (a single $\omega$-independent constant absorbing the multiplier and the constant $-1$ from $\partial H/\partial q_\omega$) and exponentiating gives
\[
  q_\omega=e^{-\beta X_\omega-c},\qquad c\in\R;
\]
imposing $\sum_\omega q_\omega=1$ then forces $e^{-c}=1/Z(\beta)$ and hence $q_\omega=e^{-\beta X_\omega}/Z(\beta)=p^{(\beta)}_\omega$. Strict concavity of $H$ on the interior of the simplex then implies that this interior critical point is the unique maximizer among strictly positive distributions $q$ with $\langle X\rangle_q=\Ebar$~\cite{CoverThomas2006}; the extension to the full closed simplex (where some $q_\omega$ may vanish) when $X$ is non-constant and $\Ebar\in(\min X,\max X)$ follows from the strict concavity of $H$ on $\Delta_\Omega$ together with the fact that $H$ is continuous on the closed simplex (under the convention $0\log 0:=0$): an interior critical point of a strictly concave continuous function on a compact convex set is the unique global maximum on that set whenever the constraint slice has nonempty interior~\cite[\S II.4]{Ellis1985}. Uniqueness of $\beta$ itself is Proposition~\ref{prop:inv-beta}.
\end{proof}

Proposition~\ref{thm:maxent} is the classical Gibbs variational principle on a finite space, transcribed into the cost notation used here. The Gibbs weights are therefore not introduced as an additional assumption: they arise as the unique interior critical point of a strictly concave functional on a linear constraint slice. Lemma~\ref{lem:logZ} is the cumulant-generating-function viewpoint on the canonical ensemble~\cite{Ellis1985}, with the identification $\TR=1/\beta$ used throughout in the positive-temperature regime.

The model-specific input is the cost vector $X_\omega=J(r_\omega)$ fixed by axioms (A1)--(A3) from the ratios $\{r_\omega\}$. Given~$X$, the Gibbs law at prescribed mean cost and the free-energy--Kullback--Leibler identities of Section~\ref{sec:fe} are the standard finite-state objects determined by that cost; Section~\ref{sec:discussion} explains how this assignment fits classical maximum-entropy and large-deviation templates.

\section{Free energy, KL identity, and relaxation}
\label{sec:fe}
The Gibbs law of Section~\ref{sec:gibbs} also admits a second, equally fundamental characterization: it is the unique minimizer of a thermodynamic free-energy functional, and the gap between the free energy at an arbitrary distribution and at the Gibbs reference is proportional to a Kullback--Leibler divergence. This identity is purely algebraic---a direct expansion of definitions---and holds for any probability distribution $q$ on $\Omega$, independently of any variational principle. The connection to relaxation dynamics enters through Proposition~\ref{prop:KL-markov}, which provides a master equation for which $\DKL(\cdot\Vert p)$ decreases monotonically.

The free-energy--Kullback--Leibler identity (Proposition~\ref{thm:fe-kl}) and the Gibbs minimization property (Corollary~\ref{cor:fe-min}) are standard finite-state statements of equilibrium statistical mechanics~\cite{Ellis1985,CoverThomas2006,Touchette2009}; the master-equation monotonicity of $\DKL$ under detailed balance (Proposition~\ref{prop:KL-markov}) is also classical~\cite{Schnakenberg1976,CoverThomas2006,Voigt1981}. The proofs given below fix conventions in the cost notation $X_\omega=J(r_\omega)$ used in Sections~\ref{sec:example} and~\ref{sec:discussion}.

\subsection{Free energy and relative entropy}
The recognition free energy at dimensionless temperature $\TR>0$, in direct analogy with the Helmholtz functional $F=E-TS$ of equilibrium thermodynamics~\cite{Ellis1985}, is
\[
  F_{\mathrm{R}}(q)=\langle X\rangle_q-\TR H(q)
  =\sum_{\omega} q_\omega X_\omega+\TR\sum_{\omega} q_\omega\log q_\omega,
\]
the standard combination of mean cost and temperature-weighted entropy, with the cost $X_\omega=J(r_\omega)$ supplied by the RCL. The directed Kullback--Leibler divergence~\cite{KullbackLeibler1951} between probability distributions $q$ and $p$ with $p_\omega>0$ for all $\omega$ is
\[
  \DKL(q\Vert p)=\sum_{\omega} q_\omega\log\frac{q_\omega}{p_\omega},
\]
under the convention $0\cdot\log 0=0$. The orientation $\DKL(q\Vert p)$, with $q$ in the first argument, is the orientation that arises when $F_{\mathrm{R}}(q)$ is expanded around a Gibbs reference $p$, and is kept fixed below.

\subsection{Free-energy--KL identity}
The identity that ties $F_{\mathrm{R}}$ and $\DKL$ together is purely algebraic. It depends only on the log-affinity of the Gibbs measure in the cost $X$ and requires no appeal to the variational principle of Section~\ref{sec:gibbs}.

\begin{proposition}[Free Energy--KL Identity; classical, see e.g.~\cite{Ellis1985,CoverThomas2006,Touchette2009}]
\label{thm:fe-kl}
Fix $\TR>0$, let $p$ denote the Gibbs measure~\eqref{eq:gibbs} with this $\TR$ and partition function $Z$, and let $F_{\mathrm{R}}$ be defined with the same~$\TR$. Then for any probability distribution $q$ on~$\Omega$,
\begin{equation}
  F_{\mathrm{R}}(q)-F_{\mathrm{R}}(p)=\TR\,\DKL(q\Vert p),
\end{equation}
where $F_{\mathrm{R}}(p)=-\TR\log Z$.
\end{proposition}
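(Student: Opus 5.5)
The identity is purely algebraic, so the plan is a direct expansion of definitions with no appeal to the variational principle of Proposition~\ref{thm:maxent}. The key input is the log-affine form of the Gibbs measure~\eqref{eq:gibbs}: for every $\omega\in\Omega$,
\[
  \log p_\omega=-\frac{X_\omega}{\TR}-\log Z,
  \qquad\text{equivalently}\qquad
  X_\omega=-\TR\log p_\omega-\TR\log Z .
\]
Since $Z>0$, every $p_\omega>0$, so $\DKL(q\Vert p)$ is finite for every probability vector $q$, including those with some zero entries.

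First, compute $F_{\mathrm{R}}(p)$. Substituting the expression for $X_\omega$ into $\langle X\rangle_p$ and using $\sum_\omega p_\omega=1$ gives $\langle X\rangle_p=\TR H(p)-\TR\log Z$. Hence $F_{\mathrm{R}}(p)=\langle X\rangle_p-\TR H(p)=-\TR\log Z$, which is the stated value.

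Second, take an arbitrary $q$ and substitute the same expression for $X_\omega$ into $\langle X\rangle_q$. This gives $\langle X\rangle_q=-\TR\sum_\omega q_\omega\log p_\omega-\TR\log Z$, using normalization of $q$. Then
\[
  F_{\mathrm{R}}(q)=\TR\sum_\omega q_\omega\log q_\omega-\TR\sum_\omega q_\omega\log p_\omega-\TR\log Z
  =\TR\,\DKL(q\Vert p)+F_{\mathrm{R}}(p).
\]
Subtracting $F_{\mathrm{R}}(p)$ yields the claim.

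The only point needing care is the boundary case where some $q_\omega=0$. There the convention $0\log 0:=0$ applies consistently to both $H(q)$ and $\DKL(q\Vert p)$. The term $q_\omega\log p_\omega$ is simply $0$, because $\log p_\omega$ is finite. With these conventions the termwise manipulation is valid over the support of $q$, so I do not expect any real obstacle.
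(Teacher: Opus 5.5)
Your proposal is correct and follows essentially the same route as the paper: both use the log-affine form $\log p_\omega=-X_\omega/\TR-\log Z$ to obtain $F_{\mathrm{R}}(p)=-\TR\log Z$ and then expand. The only cosmetic differences are that the paper expands $\TR\,\DKL(q\Vert p)$ and recognizes $F_{\mathrm{R}}(q)+\TR\log Z$, whereas you expand $F_{\mathrm{R}}(q)$ directly, and that your handling of zero entries of $q$ makes explicit what the paper leaves implicit in the convention $0\log 0=0$.
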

\begin{proof}
By definition, $F_{\mathrm{R}}(q)=\sum_\omega q_\omega X_\omega+\TR\sum_\omega q_\omega\log q_\omega$. Since $p_\omega=e^{-X_\omega/\TR}/Z$ by~\eqref{eq:gibbs}, taking logarithms gives $\log p_\omega=-X_\omega/\TR-\log Z$, so $-\TR\log p_\omega=X_\omega+\TR\log Z$. Substituting this identity into the definition of $F_{\mathrm{R}}(p)$ gives
\[
  F_{\mathrm{R}}(p)=\sum_\omega p_\omega X_\omega+\TR\sum_\omega p_\omega\log p_\omega
  =\sum_\omega p_\omega X_\omega-\sum_\omega p_\omega X_\omega-\TR\log Z=-\TR\log Z.
\]
Substituting,
\begin{align*}
  \TR\,\DKL(q\Vert p)
  &=\TR\sum_{\omega} q_\omega(\log q_\omega-\log p_\omega)\\
  &=\TR\sum_{\omega} q_\omega\log q_\omega
  +\sum_{\omega} q_\omega X_\omega+\TR\log Z\\
  &=F_{\mathrm{R}}(q)+\TR\log Z
  =F_{\mathrm{R}}(q)-F_{\mathrm{R}}(p),
\end{align*}
using the identity $F_{\mathrm{R}}(p)=-\TR\log Z$ in the last step.
\end{proof}

The proof is a direct expansion: the cost-linear terms reorganize because $\log p$ is affine in $X$ for the Gibbs reference, and the remainder assembles into $\DKL$. Proposition~\ref{thm:fe-kl} is therefore a structural identity for the exponential family on~$\Omega$, requiring no convex optimization. Two immediate consequences are recorded below: free-energy minimization and, as a remark, the equivalent maximum-entropy statement at fixed cost.

\subsection{Variational consequences}
The first corollary is the free-energy minimization property: the Gibbs measure $p$ minimizes $F_{\mathrm{R}}$, with the gap to any competitor given exactly by $\TR\,\DKL$. This is the dual of Proposition~\ref{thm:maxent} and a one-line consequence of Proposition~\ref{thm:fe-kl}.

\begin{corollary}[Gibbs minimizes free energy]
\label{cor:fe-min}
$F_{\mathrm{R}}(p)\le F_{\mathrm{R}}(q)$ for all probability distributions~$q$, with equality if and only if $q=p$.
\end{corollary}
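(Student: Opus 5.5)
The plan is to read the corollary directly off Proposition~\ref{thm:fe-kl}. That result gives, for every probability distribution $q$ on $\Omega$, the identity $F_{\mathrm{R}}(q)-F_{\mathrm{R}}(p)=\TR\,\DKL(q\Vert p)$ with $\TR>0$. So the inequality $F_{\mathrm{R}}(p)\le F_{\mathrm{R}}(q)$, together with its equality case, reduces to Gibbs' inequality for the Kullback--Leibler divergence: $\DKL(q\Vert p)\ge 0$, with equality if and only if $q=p$.

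I will prove Gibbs' inequality directly rather than cite it, because $q$ is allowed to sit on the boundary of the simplex while $p$ is strictly positive. Let $S=\{\omega:q_\omega>0\}$. By the convention $0\log 0=0$,
\[
  -\DKL(q\Vert p)=\sum_{\omega\in S}q_\omega\log\frac{p_\omega}{q_\omega}.
\]
Since $\log$ is strictly concave, Jensen's inequality gives
\[
  \sum_{\omega\in S}q_\omega\log\frac{p_\omega}{q_\omega}\le\log\sum_{\omega\in S}p_\omega\le\log 1=0.
\]
Alternatively, summing the pointwise bound $\log t\le t-1$ over $S$ gives the same conclusion. Multiplying by $\TR>0$ then yields the stated inequality for $F_{\mathrm{R}}$.

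The only step needing care is the equality case, and it splits into two conditions. First, equality in the last step, $\sum_{\omega\in S}p_\omega=1$, forces $S=\Omega$, because every $p_\omega>0$. Second, equality in Jensen's inequality (or in $\log t=t-1$) forces $p_\omega/q_\omega$ to be constant on $S$; since both $p$ and $q$ are normalized, that constant is $1$. Together these give $q=p$. The converse, that $q=p$ gives zero divergence, is immediate.
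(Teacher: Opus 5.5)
Your proposal is correct and follows the paper's proof: both read the inequality off the identity $F_{\mathrm{R}}(q)-F_{\mathrm{R}}(p)=\TR\,\DKL(q\Vert p)$ with $\TR>0$. The only difference is that the paper cites $\DKL\ge 0$ with equality iff $q=p$, whereas you prove this Gibbs inequality via Jensen, correctly handling boundary $q$ and the equality case.
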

\begin{proof}
By Proposition~\ref{thm:fe-kl}, $F_{\mathrm{R}}(q)-F_{\mathrm{R}}(p)=\TR\,\DKL(q\Vert p)\ge 0$, since $\TR>0$ and $\DKL\ge 0$. The KL divergence vanishes if and only if $q=p$.
\end{proof}

The maximum-entropy statement of Section~\ref{sec:gibbs} can also be recovered directly from Corollary~\ref{cor:fe-min}, without invoking Lagrange stationarity: at fixed mean cost the only term in $F_{\mathrm{R}}$ that varies is the entropy term, and the sign of that variation is fixed by the free-energy minimization property.

\begin{remark}[Same maximum-entropy statement via the free-energy--KL identity]
\label{cor:entropy-max}
Among all $q$ with $\langle X\rangle_q=\langle X\rangle_p$,
$F_{\mathrm{R}}(q)-F_{\mathrm{R}}(p)=(\langle X\rangle_q-\TR H(q))-(\langle X\rangle_p-\TR H(p))=-\TR(H(q)-H(p))$, and Corollary~\ref{cor:fe-min} gives $F_{\mathrm{R}}(q)\ge F_{\mathrm{R}}(p)$, hence $H(q)\le H(p)$ with equality iff $q=p$. This is Proposition~\ref{thm:maxent} re-derived through the free-energy--KL identity rather than by direct Lagrange stationarity, and is logically identical to Proposition~\ref{thm:maxent}; it is recorded here only as a consistency check on the algebra of Section~\ref{sec:fe}.
\end{remark}

\subsection{Detailed-balance relaxation and monotonicity}
The variational principle is static. Section~\ref{sec:gibbs} shows that the Gibbs measure is the entropy maximum at fixed mean cost; Proposition~\ref{thm:fe-kl} above shows that it is also the unique global minimum of $F_{\mathrm{R}}$ on the simplex; neither statement involves a dynamics. To couple the static identity to a process, we now consider a continuous-time Markov chain on $\Omega$ whose stationary law is $p$ and whose transition rates satisfy detailed balance with respect to~$p$. Along such a flow, the Kullback--Leibler divergence to the stationary distribution is a Lyapunov function, and Proposition~\ref{thm:fe-kl} translates that monotonicity into the corresponding statement for $F_{\mathrm{R}}$.

\begin{proposition}[Monotonicity of $\DKL$ under detailed balance; classical, cf.~\cite{Schnakenberg1976,Voigt1981,CoverThomas2006}]
\label{prop:KL-markov}
Let $Q=(Q_{ij})_{i,j\in\Omega}$ satisfy $Q_{ij}\ge 0$ for $i\neq j$, row sums $\sum_j Q_{ij}=0$, and detailed balance $p_i Q_{ij}=p_j Q_{ji}$ with respect to a fixed Gibbs measure $p>0$ on~$\Omega$. Suppose $t\mapsto q(t)$ is a differentiable curve in the interior of the simplex $\Delta(\Omega)$ that satisfies the master equation
\begin{equation}
  \dot{q}_j(t)=\sum_{i\in\Omega} q_i(t)\,Q_{ij},\qquad j\in\Omega.
  \label{eq:master}
\end{equation}
Define $x_i(t):=q_i(t)/p_i$. Then
\begin{equation}
  \frac{\mathrm d}{\mathrm dt}\,\DKL\bigl(q(t)\Vert p\bigr)
  =-\frac{1}{2}\sum_{i,j\in\Omega} p_i Q_{ij}\,\bigl(x_j(t)-x_i(t)\bigr)\bigl(\log x_j(t)-\log x_i(t)\bigr)\le 0,
  \label{eq:KL-derivative}
\end{equation}
so $t\mapsto \DKL(q(t)\Vert p)$ is nonincreasing. If, in addition, $Q$ is irreducible on $\Omega$ (equivalently, the off-diagonal graph $\{(i,j):i\neq j,\ Q_{ij}>0\}$ is connected as an undirected graph), then by the Perron--Frobenius theorem applied to the rate matrix~\cite[Theorem~3.5.5]{Norris1997}, $p$ is the unique probability vector with $pQ=0$, and $\DKL(q(t)\Vert p)\to 0$ as $t\to\infty$; equivalently $q(t)\to p$ in any norm on $\R^\Omega$. If $Q$ is reducible, the total probability in each closed communicating class is conserved; the limit is obtained by rescaling $p$ on each class to match those conserved masses, and $\DKL(q(t)\Vert p)$ relaxes to the corresponding constrained minimum rather than to zero.
\end{proposition}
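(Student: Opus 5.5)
The plan is to differentiate $\DKL(q(t)\Vert p)=\sum_j q_j\log(q_j/p_j)$ directly along the master equation~\eqref{eq:master} and symmetrize using detailed balance.

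\textbf{Derivative.} Since $q(t)$ stays in the interior, every term is differentiable, and
\[
  \frac{\mathrm d}{\mathrm dt}\DKL=\sum_j \dot q_j\log x_j+\sum_j\dot q_j .
\]
The second sum vanishes because row sums of $Q$ are zero: $\sum_j\dot q_j=\sum_i q_i\sum_jQ_{ij}=0$. Hence
\[
  \frac{\mathrm d}{\mathrm dt}\DKL=\sum_{i,j}q_iQ_{ij}\log x_j=\sum_{i,j}p_iQ_{ij}\,x_i\log x_j .
\]
Using the row-sum identity once more, I subtract $\sum_{i,j}p_iQ_{ij}x_i\log x_i=0$ to obtain
\[
  S:=\sum_{i,j}p_iQ_{ij}\,x_i(\log x_j-\log x_i).
\]

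\textbf{Symmetrization.} Relabel $i\leftrightarrow j$ and use detailed balance $p_jQ_{ji}=p_iQ_{ij}$. This gives
\[
  S=\sum_{i,j}p_iQ_{ij}\,x_j(\log x_i-\log x_j).
\]
Averaging the two expressions yields
\[
  S=-\tfrac12\sum_{i,j}p_iQ_{ij}(x_j-x_i)(\log x_j-\log x_i),
\]
which is~\eqref{eq:KL-derivative}. For $i\ne j$ we have $p_iQ_{ij}\ge0$, and $\log$ is increasing, so each factor $(x_j-x_i)(\log x_j-\log x_i)$ is nonnegative. Diagonal terms vanish. Therefore $S\le0$, and $t\mapsto\DKL(q(t)\Vert p)$ is nonincreasing.

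\textbf{Irreducible case: uniqueness and convergence.} This is the main obstacle. Uniqueness of $p$ as a stationary vector follows from the cited Perron--Frobenius result~\cite[Theorem~3.5.5]{Norris1997}. For convergence I would argue as follows.
\begin{itemize}
\item The ODE is linear, so $q(t)=q(0)e^{tQ}$. Because the generator of an irreducible chain has kernel spanned by $p$ and all other eigenvalues have negative real part, $q(t)\to p$.
\item Detailed balance makes $Q$ self-adjoint in $\ell^2(1/p)$, so its spectrum is real. This gives a clean spectral-gap argument.
\item Continuity of $\DKL$ then gives $\DKL(q(t)\Vert p)\to0$.
\end{itemize}
An alternative is LaSalle. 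The dissipation vanishes only when $x_i=x_j$ on every edge, i.e.\ when $x$ is constant on the connected graph, i.e.\ when $q=p$. Precompactness of the trajectory in the simplex then forces convergence.

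\textbf{Reducible case.} Under detailed balance, every communicating class is closed. I would apply the same argument on each class separately: the mass on each class is conserved because $Q$ restricted to the class has zero row sums. The limit is the renormalized $p$ on each class, and $\DKL$ converges to the corresponding minimum, $\sum_C m_C\log(m_C/p(C))$.
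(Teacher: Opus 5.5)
Your proposal is correct and follows the paper's proof essentially step for step: differentiate, remove $\sum_j\dot q_j$ and the diagonal contribution using the zero row sums, then symmetrize with detailed balance to obtain $-\tfrac12\sum_{i,j}p_iQ_{ij}(x_j-x_i)(\log x_j-\log x_i)\le 0$. The only difference is in the convergence part: the paper just cites the finite-state ergodic theorem \cite[Theorem~3.5.5]{Norris1997}, whereas you sketch a spectral argument (self-adjointness in $\ell^2(1/p)$) or a LaSalle argument, and you also justify the reducible-case limit $\sum_C m_C\log(m_C/p(C))$, which the paper only asserts.
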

\begin{proof}
Differentiating $\DKL(q(t)\Vert p)=\sum_i q_i(t)\log x_i(t)$ and using the master equation~\eqref{eq:master} gives $\frac{\mathrm d}{\mathrm dt}\DKL=\sum_i\dot q_i\log x_i+\sum_i \dot q_i=\sum_i\dot q_i\log x_i$ (since $\sum_i\dot q_i=0$). Substituting $\dot q_j=\sum_i q_iQ_{ij}$ and reorganizing gives
\[
  \frac{\mathrm d}{\mathrm dt}\DKL=\sum_{i\ne j}q_iQ_{ij}\log x_j-\sum_{i\ne j}q_iQ_{ij}\log x_i,
\]
where the second sum uses $\sum_j Q_{ij}=0$. Detailed balance $p_iQ_{ij}=p_jQ_{ji}$ writes $q_iQ_{ij}=p_iQ_{ij}\cdot x_i=p_jQ_{ji}\cdot x_i$. Symmetrizing the indices in unordered pairs $\{i,j\}$ and writing $F_{ij}:=p_iQ_{ij}=p_jQ_{ji}$ yields
\[
  \frac{\mathrm d}{\mathrm dt}\DKL=-\frac{1}{2}\sum_{i\ne j}F_{ij}(x_j-x_i)(\log x_j-\log x_i)\le 0,
\]
because $(x_j-x_i)(\log x_j-\log x_i)\ge 0$ for any pair of positive numbers and $F_{ij}\ge 0$. Convergence to the unique stationary law under irreducibility is the standard ergodic theorem for finite-state continuous-time Markov chains~\cite[Theorem~3.5.5]{Norris1997}.
\end{proof}

The contribution from each unordered pair $\{i,j\}$ vanishes iff $x_i(t)=x_j(t)$, so the equality conditions of~\eqref{eq:KL-derivative} encode the fixed points of the dynamics. In the reducible case the equality is attained as soon as $x_i$ is constant on each communicating class.

Combining this monotonicity with the algebraic identity of Proposition~\ref{thm:fe-kl} gives the monotonicity of recognition free energy along the same dynamics, via a one-line differentiation.

\begin{corollary}[Monotone recognition free energy under reversible Markov dynamics]
\label{cor:monoFE-markov}
Fix $\TR>0$ and let $p$ be the Gibbs measure from Proposition~\ref{thm:fe-kl}. Under the hypotheses of Proposition~\ref{prop:KL-markov}, along any solution of~\eqref{eq:master} in the interior of the simplex,
\[
  \frac{\mathrm d}{\mathrm dt}\,F_{\mathrm{R}}\bigl(q(t)\bigr)\le 0.
\]
\end{corollary}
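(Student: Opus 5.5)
The plan is to combine Proposition~\ref{thm:fe-kl} with Proposition~\ref{prop:KL-markov}. Proposition~\ref{thm:fe-kl} gives, for every probability vector $q$ on $\Omega$,
\[
  F_{\mathrm{R}}(q)=F_{\mathrm{R}}(p)+\TR\,\DKL(q\Vert p),
  \qquad F_{\mathrm{R}}(p)=-\TR\log Z.
\]
Here $\TR$ and $Z$ are fixed numbers, because $p$ is the fixed Gibbs reference and $\TR$ is held constant along the dynamics. Applying this identity pointwise in $t$ to $q=q(t)$ shows that $t\mapsto F_{\mathrm{R}}(q(t))$ differs from $t\mapsto\TR\,\DKL(q(t)\Vert p)$ only by the constant $-\TR\log Z$.

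Next I check differentiability. The curve $q(t)$ lies in the interior of the simplex and is differentiable. Hence $t\mapsto\DKL(q(t)\Vert p)=\sum_\omega q_\omega(t)\log\bigl(q_\omega(t)/p_\omega\bigr)$ is differentiable as a finite sum of smooth functions of positive arguments. Equivalently, $F_{\mathrm{R}}(q(t))=\sum_\omega q_\omega X_\omega+\TR\sum_\omega q_\omega\log q_\omega$ is differentiable for the same reason. Differentiating the identity then gives
\[
  \frac{\mathrm d}{\mathrm dt}F_{\mathrm{R}}(q(t))=\TR\,\frac{\mathrm d}{\mathrm dt}\DKL(q(t)\Vert p).
\]

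Finally I apply \eqref{eq:KL-derivative} from Proposition~\ref{prop:KL-markov}. Its hypotheses are exactly those assumed here: detailed balance of $Q$ with respect to the same Gibbs $p$, the master equation \eqref{eq:master}, and an interior trajectory. It gives that the right-hand side is $\TR$ times a nonpositive quantity. Since $\TR>0$, the claim follows.

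I expect no real obstacle. The only point needing care is consistency: the $p$ in Proposition~\ref{prop:KL-markov} must be the same Gibbs measure at the same $\TR$ that defines $F_{\mathrm{R}}$. Otherwise the additive constant in Proposition~\ref{thm:fe-kl} would not match, and the reduction to the KL derivative would fail. I will state this identification explicitly.
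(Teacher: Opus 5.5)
Your proposal is correct and is the paper's own proof: differentiate the identity $F_{\mathrm{R}}(q(t))-F_{\mathrm{R}}(p)=\TR\,\DKL(q(t)\Vert p)$ from Proposition~\ref{thm:fe-kl}, then use Proposition~\ref{prop:KL-markov} and $\TR>0$ to sign the derivative. Your explicit checks of differentiability along the interior trajectory, and that both results use the same Gibbs $p$ at the same $\TR$, are details the paper leaves implicit.
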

\begin{proof}
Differentiate the identity $F_{\mathrm{R}}(q(t))-F_{\mathrm{R}}(p)=\TR\,\DKL(q(t)\Vert p)$ from Proposition~\ref{thm:fe-kl} in $t$, then apply Proposition~\ref{prop:KL-markov} to bound the right-hand derivative.
\end{proof}

Thus Proposition~\ref{thm:fe-kl} and Corollary~\ref{cor:fe-min} are static identities on the simplex, while Proposition~\ref{prop:KL-markov} supplies a standard reversible Markov dynamics for which $\DKL(\cdot\Vert p)$ decreases. Corollary~\ref{cor:monoFE-markov} then identifies the same decay as a monotone decrease of $F_{\mathrm{R}}$. Together, these classical statements integrate the RCL-fixed cost vector $X_\omega=J(r_\omega)$ into the standard finite-state free-energy/relative-entropy chain~\cite{Ellis1985,Schnakenberg1976,CoverThomas2006,Touchette2009}. More general paths inherit the same conclusion only when $\DKL(q_t\Vert p)$ is known to be nonincreasing; the physical content then comes from the chosen dynamics, not from the algebraic identity alone, and this kinematic statement should not be conflated with the Boltzmann $H$-theorem for an isolated Hamiltonian gas microstate.

\section{Illustrative example: a three-state reversible system}
\label{sec:example}
We illustrate the results of Sections~\ref{sec:ancestor}--\ref{sec:fe} on a three-state reversible Markov chain---equivalently, a chemical-reaction triangle $A\rightleftharpoons B\rightleftharpoons C\rightleftharpoons A$ at chemical equilibrium---building on the two-state setting recorded in Section~\ref{sec:ledger-bridge}. The example is constructed so that the input ratios $r_\omega$ have a transparent physical meaning: they are detailed-balance odds ratios of the reversible system, and the cost vector $X_\omega=J(r_\omega)$ is then determined by the closed form~\eqref{eq:J-closed}. The variational machinery of Section~\ref{sec:gibbs} produces the unique entropy-maximizing distribution at a chosen mean cost. We then check the free-energy--KL identity of Section~\ref{sec:fe} numerically, illustrate the squared-log bound of Proposition~\ref{prop:divj-basic}, and give a concrete reversible Markov trajectory illustrating the detailed-balance relaxation statement of Proposition~\ref{prop:KL-markov}.

\subsection{Setup: input ratios from detailed balance}
Consider a reversible three-state continuous-time Markov chain on $\Omega=\{1,2,3\}$ with strictly positive stationary law $\pi$ and reference state $1$ (the most populated species, taken as $r_1:=1$). Detailed balance $\pi_i W_{ij}=\pi_j W_{ji}$ identifies, for each non-reference state $\omega\in\{2,3\}$, the dimensionless odds ratio
\[
  r_\omega := \frac{\pi_\omega}{\pi_1}=\frac{W_{1\omega}}{W_{\omega 1}},
\]
exactly as in the two-state setting of Section~\ref{sec:ledger-bridge}. Once ratios are mapped to costs $X_\omega=J(r_\omega)$, Theorem~\ref{thm:types} governs counting asymptotics for arbitrary real costs (hence generic irrational measured ratios); Corollary~\ref{cor:types-exact} records the classical exact-shell refinement in the rational regime. We take the specific values
\[
  r_2 := \tfrac{3-\sqrt5}{2}\approx 0.38197,
  \qquad
  r_3 := 2-\sqrt3\approx 0.26795,
\]
  which are the unique solutions of $J(r_2)=\tfrac12$ and $J(r_3)=1$ in $(0,1)$ under~\eqref{eq:J-closed}, chosen so that the cost vector takes the simple form $X=(0,\tfrac12,1)$; in closed form, $r_2=\varphi^{-2}$ with $\varphi=(1+\sqrt5)/2$ the golden ratio, and $r_3=\tan(\pi/12)=2-\sqrt3$. These choices make the costs rational, so Corollary~\ref{cor:types-exact} (exact integer shell along denominator-exhaustive sequences) applies directly in this subsection. A worked illustration with $r_2=e^{-1}$, $r_3=e^{-3/2}$, for which the exact shell is empty at every $N$ but Theorem~\ref{thm:types} still applies, is given in Section~\ref{sec:soft-shell-numerics}. The chain's own stationary law follows from $\pi_\omega\propto r_\omega$ and normalization:
\[
  \pi=\Bigl(\frac{1}{1+r_2+r_3},\,\frac{r_2}{1+r_2+r_3},\,\frac{r_3}{1+r_2+r_3}\Bigr)
  \approx(0.60609,\ 0.23151,\ 0.16240).
\]
The ratios $r_\omega$ thus encode physically meaningful equilibrium odds: state 1 (the reference) is the most populated, with $r_2\approx 0.38$ giving roughly $38\%$ as much population in state $2$ relative to $1$, and $r_3\approx 0.27$ giving about $27\%$ as much in state $3$. In a reaction-network reading, $r_2$ and $r_3$ are equilibrium constants for the reactions $1\to 2$ and $1\to 3$, both unfavorable (equilibrium constants below unity); the corresponding affinities $\Delta\mu_\omega/k_{\mathrm{B}}T=-\log r_\omega$, evaluated at the convention $k_{\mathrm{B}}T=1$ (so that $\Delta\mu_\omega$ is reported in dimensionless units of $k_{\mathrm{B}}T$), are $\Delta\mu_2\approx 0.962$ and $\Delta\mu_3\approx 1.317$.

\subsection{Gibbs distribution at fixed mean cost}
With the cost vector $X=(0,\tfrac12,1)$ fixed by the input ratios, fix a constraint value $\Ebar=0.45$. This value lies strictly between $\min X=0$ and the uniform mean $\langle X\rangle_{p^{(0)}}=\tfrac13(0+\tfrac12+1)=\tfrac12$, so Proposition~\ref{prop:inv-beta} guarantees a unique solution $\beta>0$ to the moment-matching equation
\begin{equation}
  \frac{0\cdot e^{0}+(1/2)e^{-\beta/2}+1\cdot e^{-\beta}}{1+e^{-\beta/2}+e^{-\beta}}=\Ebar,
  \label{eq:three-state-beta}
\end{equation}
which is $\langle X\rangle_{p^{(\beta)}}=\Ebar$ with $X=(0,\tfrac12,1)$ written out. Numerical solution gives
\[
  \beta\approx 0.30113,\qquad \TR=\beta^{-1}\approx 3.3208,\qquad
  Z\approx 2.6002,
\]
and Gibbs weights
\[
  p\approx (0.38459,\ 0.33083,\ 0.28459),
\]
monotonically decreasing in the cost: the reference state has the largest weight, and the highest-cost state the smallest. For comparison, the uniform reference distribution $q=(\tfrac13,\tfrac13,\tfrac13)$ corresponds to $\beta=0$ and the higher mean cost $\langle X\rangle_q=\tfrac12$.

The Gibbs distribution $p$ is the maximum-entropy distribution among all probability laws on $\Omega$ with the prescribed mean cost $\Ebar=0.45$. It need not coincide with the chain's own stationary distribution $\pi\approx(0.6061,0.2315,0.1624)$ from the previous subsection: $\pi$ has its own realized mean cost $\langle X\rangle_\pi=(r_2/2+r_3)/(1+r_2+r_3)\approx 0.2782$, distinct from $\Ebar$. The two distributions coincide only when the chain's stationary distribution is itself a Gibbs measure for the cost vector $X$ at the matching inverse temperature: $\pi=p^{(\beta)}$ for some $\beta>0$ if and only if the chosen constraint value $\Ebar$ equals $\langle X\rangle_\pi$, and the matching $\beta$ is the unique root from Proposition~\ref{prop:inv-beta}. The Gibbs distribution of Section~\ref{sec:gibbs} is a thermodynamic object selected by entropy maximization at fixed mean cost, while the chain's stationary law is a kinematic object determined by its rate matrix; the two come from compatible but distinct constructions.

\subsection{Numerical values of the divergences}
\label{sec:example-numerics}
With $p$ and $q$ specified, the functionals of Sections~\ref{sec:divj}--\ref{sec:fe} take the closed-form numerical values shown in Table~\ref{tab:three}. The free-energy--KL identity of Proposition~\ref{thm:fe-kl} is an algebraic equality and the agreement of the rows $F_{\mathrm{R}}(q)-F_{\mathrm{R}}(p)$ and $\TR\,\DKL(q\Vert p)$ in Table~\ref{tab:three} only documents that the numerical implementation is consistent with the closed-form identity. The interesting comparison is between $\DivJ$ and the squared-log surrogate of Lemma~\ref{lem:ancestor}: the inequality $\DivJ(q\Vert p)\ge \tfrac12\sum_\omega p_\omega(\log(q_\omega/p_\omega))^2$ from Proposition~\ref{prop:divj-basic}\,(iii) is satisfied with the small positive margin $0.00751-0.00750\approx 1\times 10^{-5}$ visible in the last two rows, as expected near coincidence. The next subsection turns to a genuinely dynamical illustration of the detailed-balance Lyapunov statement.

\begin{table}[htbp]
  \centering
  \begin{tabular}{lr}
    \hline
    Quantity & Value (rounded)\\\hline
    $H(p)$ & $1.0911$\\
    $H(q)$ & $\log 3\approx 1.0986$\\
    $\DKL(q\Vert p)$ & $0.00754$\\
    $F_{\mathrm{R}}(p)$ & $-3.1733$\\
    $F_{\mathrm{R}}(q)$ & $-3.1483$\\
    $F_{\mathrm{R}}(q)-F_{\mathrm{R}}(p)=\TR\,\DKL(q\Vert p)$ & $0.02505$\\
    $\DivJ(q\Vert p)$ & $0.00751$\\
    $\tfrac12\sum_\omega p_\omega\bigl(\log(q_\omega/p_\omega)\bigr)^2$ & $0.00750$\\\hline
    \multicolumn{2}{l}{\footnotesize $p$ denotes the Gibbs measure at mean cost $\Ebar=0.45$;}\\
    \multicolumn{2}{l}{\footnotesize $q=(\tfrac13,\tfrac13,\tfrac13)$ is the uniform reference distribution.}
  \end{tabular}
  \caption{Three-state example with $\Ebar=0.45$. The row $F_{\mathrm{R}}(q)-F_{\mathrm{R}}(p)=\TR\DKL(q\Vert p)$ records the algebraic identity of Proposition~\ref{thm:fe-kl}; the last two rows illustrate the squared-log lower bound of Proposition~\ref{prop:divj-basic}\,(iii).}
  \label{tab:three}
\end{table}

\subsection{Detailed-balance relaxation trajectory}
\label{sec:example-relaxation}
To put Proposition~\ref{prop:KL-markov} on this example, we equip the three states with an irreducible continuous-time Markov chain that has $p$ as its unique stationary law and is reversible with respect to it. The Metropolis-on-a-graph rates~\cite{LevinPeresWilmer2017},
\[
  Q_{ij}=\min\!\Bigl(1,\tfrac{p_j}{p_i}\Bigr)\quad (i\neq j),\qquad
  Q_{ii}=-\sum_{j\neq i}Q_{ij},
\]
satisfy detailed balance $p_i Q_{ij}=p_j Q_{ji}$ by construction, with row sums zero. With the Gibbs reference $p\approx(0.385,0.331,0.285)$ from Section~\ref{sec:example-numerics}, the off-diagonal block evaluates to
\[
  Q_{12}=0.860,\;Q_{13}=0.740,\;Q_{21}=1.000,\;Q_{23}=0.860,\;Q_{31}=Q_{32}=1.000.
\]
Starting from the uniform initial law $q(0)=(\tfrac13,\tfrac13,\tfrac13)$ and integrating $\dot{q}=qQ$ via $q(t)=q(0)\exp(tQ)$ gives the trajectory in Table~\ref{tab:relaxation}.

\begin{table}[htbp]
  \centering
  \begin{tabular}{rrrrrr}
    \hline
    $t$ & $q_1(t)$ & $q_2(t)$ & $q_3(t)$ & $\DKL(q(t)\Vert p)$ & $F_{\mathrm{R}}(q(t))$\\\hline
    $0.0$ & $0.3333$ & $0.3333$ & $0.3333$ & $7.54\times 10^{-3}$ & $-3.14830$\\
    $0.5$ & $0.3706$ & $0.3323$ & $0.2970$ & $5.29\times 10^{-4}$ & $-3.17159$\\
    $1.0$ & $0.3808$ & $0.3314$ & $0.2878$ & $3.74\times 10^{-5}$ & $-3.17322$\\
    $2.0$ & $0.3843$ & $0.3309$ & $0.2848$ & $1.94\times 10^{-7}$ & $-3.17335$\\
    $5.0$ & $0.3846$ & $0.3308$ & $0.2846$ & $<10^{-12}$ & $-3.17335$\\\hline
  \end{tabular}
  \caption{Relaxation under the Metropolis chain reversible with respect to $p$, started from the uniform law. Both $\DKL(q(t)\Vert p)$ and $F_{\mathrm{R}}(q(t))$ are monotonically decreasing in $t$, as predicted by Proposition~\ref{prop:KL-markov} and Corollary~\ref{cor:monoFE-markov}; $q(t)\to p$ exponentially, and $F_{\mathrm{R}}(q(t))$ approaches the equilibrium value $F_{\mathrm{R}}(p)\approx -3.1733$.}
  \label{tab:relaxation}
\end{table}

The trajectory is a direct numerical realization of Proposition~\ref{prop:KL-markov}: $\DKL$ falls from $7.54\times 10^{-3}$ to below $10^{-12}$ over five (arbitrary) Markov-time units, and $F_{\mathrm{R}}(q(t))-F_{\mathrm{R}}(p)=\TR\DKL(q(t)\Vert p)$ tracks the same exponential decay rescaled by $\TR=1/\beta\approx 3.32$. The leading relaxation rate is set by the spectral gap of $Q$, and it is the standard exponential approach to canonical equilibrium for the chosen rates; what changes from one model to another is only the cost vector $X$ that enters $p$ via the variational principle.

\subsection{\texorpdfstring{Comparison with the Tsallis $q$-exponential alternative}{Comparison with the Tsallis q-exponential alternative}}
\label{sec:tsallis-compare}
The numerical agreement above checks an algebraic identity. To turn the example into a non-trivial comparison, we juxtapose the RCL Gibbs distribution with the Tsallis $q$-exponential obtained under an alternative entropy functional on the same cost data. To avoid clashing with the symbol $q$ already used for probability distributions, we write $q_T$ for the Tsallis deformation index throughout this subsection.

The Tsallis program~\cite{Tsallis1988,Naudts2011} replaces Shannon entropy by the $q_T$-deformed entropy
\[
  S_{q_T}(p):=\frac{1}{q_T-1}\Bigl(1-\sum_{\omega\in\Omega}p_\omega^{q_T}\Bigr),\qquad q_T>0,
\]
which reduces to $H$ as $q_T\to 1$. Maximizing $S_{q_T}$ at fixed $\langle X\rangle_p=\Ebar$ on the simplex (constraint-1 formulation; see~\cite{Tsallis1988,Naudts2011}) yields
\[
  p^{(\beta_{q_T})}_\omega \propto \bigl[1-(1-q_T)\beta_{q_T} X_\omega\bigr]_+^{1/(1-q_T)},
\]
with $\beta_{q_T}$ adjusted so that $\sum_\omega p^{(\beta_{q_T})}_\omega X_\omega=\Ebar$. For $q_T=1$ this reduces to the Gibbs form; for $q_T\neq 1$ the tails are power-law rather than exponential. Table~\ref{tab:tsallis-compare} fits $\beta_{q_T}$ to the same constraint $\Ebar=0.45$ as Table~\ref{tab:three} and reports the $\ell^1$ distance to the RCL Gibbs measure $p^{\mathrm{RCL}}\approx(0.385,0.331,0.285)$.

\paragraph{Scope of the comparison.}
This juxtaposition is a controlled deformation of the Gibbs branch within the fixed RCL cost vector $X=(0,\tfrac12,1)$ and the simple-mean (constraint-1) formulation used throughout this paper. It is not a comparison against the full Tsallis statistical-mechanical framework, where the natural cost choice and the escort-mean (constraint-3) formulation may differ from the present setup; conclusions drawn from Table~\ref{tab:tsallis-compare} should not be read as general statements about Tsallis non-extensive statistical mechanics. Its purpose is only to illustrate that, on the same finite cost data and mean-cost constraint, deforming the entropy functional within the $q_T$-family yields distributions that differ from the Gibbs branch in a controlled way and become indistinguishable as $q_T\to 1$.

\begin{table}[htbp]
  \centering
  \begin{tabular}{lrrrrrr}
    \hline
    Distribution & Tsallis index $q_T$ & $\beta_{q_T}$ & $p_1$ & $p_2$ & $p_3$ & $\lVert\cdot-p^{\mathrm{RCL}}\rVert_1$\\\hline
    RCL Gibbs & $1.00$ & $0.3011$ & $0.3846$ & $0.3308$ & $0.2846$ & $0$\\
    Tsallis $q_T$-exp & $0.10$ & $0.2646$ & $0.3835$ & $0.3331$ & $0.2835$ & $0.0045$\\
    Tsallis $q_T$-exp & $0.50$ & $0.2801$ & $0.3840$ & $0.3321$ & $0.2840$ & $0.0025$\\
    Tsallis $q_T$-exp & $1.50$ & $0.3243$ & $0.3852$ & $0.3296$ & $0.2852$ & $0.0025$\\
    Tsallis $q_T$-exp & $2.50$ & $0.3780$ & $0.3864$ & $0.3272$ & $0.2864$ & $0.0073$\\\hline
  \end{tabular}
  \caption{Three-state example with $X=(0,\tfrac12,1)$ and $\Ebar=0.45$. Tsallis $q_T$-exponential distributions match the mean-cost constraint at different shape parameters $\beta_{q_T}$; the resulting weights differ from the RCL Gibbs measure by an $\ell^1$ amount that increases as the deformation moves away from the Gibbs case $q_T=1$. The gap is small at $\Ebar=0.45$ because the constraint sits close to the uniform-cost value $0.5$, where the high-temperature limit flattens all $q_T$-exponential families to the uniform measure.}
  \label{tab:tsallis-compare}
\end{table}

At the matched constraint $\Ebar=0.45$, the Tsallis entropy $S_{q_T}(p^{(\beta_{q_T})})$ and the Helmholtz analogue $F^{(q_T)}=\langle X\rangle_p-T_{q_T}S_{q_T}(p)$ with $T_{q_T}:=1/\beta_{q_T}$ evaluate to $S_{q_T=0.5}=1.4578$ and $F^{(0.5)}\approx -4.7543$; $S_{q_T=1}=H(p^{\mathrm{RCL}})=1.0911$ and $F_{\mathrm{R}}(p^{\mathrm{RCL}})\approx -3.1737$; $S_{q_T=1.5}=0.8388$ and $F^{(1.5)}\approx -2.1365$; and $S_{q_T=2.5}=0.5347$ and $F^{(2.5)}\approx -0.9646$. These values are computed analytically from the Table~\ref{tab:tsallis-compare} weights and the fitted $\beta_{q_T}$ entries.

For $q_T<1$ the population shifts slightly toward the median-cost state and away from the extremes within this fixed-cost-and-constraint setup; for $q_T>1$ the opposite occurs. The separation grows as $\Ebar$ moves away from the uniform value: at $\Ebar=0.20$, $\lVert p^{\mathrm{Tsallis}}_{q_T=0.5}-p^{\mathrm{RCL}}\rVert_1\approx 0.087$ and $\lVert p^{\mathrm{Tsallis}}_{q_T=2.5}-p^{\mathrm{RCL}}\rVert_1\approx 0.127$, an order of magnitude larger than at $\Ebar=0.45$. Within this deformation, the same mean cost is therefore compatible with a one-parameter family of distributions; concretely, the observables that resolve the families are those sensitive to more than the mean of $X$, including the variance $\mathrm{Var}_p(X)$, the high-cost tail probability $p_3$, and any moment $\langle X^k\rangle_p$ for $k\ge 2$. The comparison itself is not an empirical validation of the RCL, and it is not a comparison against the full Tsallis non-extensive statistical-mechanical framework: it only demonstrates that, within a fixed cost vector and simple-mean constraint, deforming the entropy functional within the $q_T$-family yields finite-state distributions distinguishable from the Gibbs branch at the fluctuation level rather than at the mean.

\subsection{Sample-size requirements at fixed RCL ground truth (power calculation)}
\label{sec:discriminating-observables}
This subsection is a power calculation, not a falsifiability test of the RCL: the RCL Gibbs law $p^{\mathrm{RCL}}$ is taken as the generating distribution, and each alternative framework is asked how many effectively independent observations would be needed to resolve its predicted tail weight from that of $p^{\mathrm{RCL}}$ at nominal $2\sigma$ on a single coordinate. An actual falsifiability test would require an experimental three-state system with independently measured rate ratios and stationary populations.

Throughout this subsection and in Table~\ref{tab:discriminate}, the quantity $N_{2\sigma}$ denotes the \emph{effective} number of independent Bernoulli trials equivalent, for inferring the frequency $\hat p_3$, to the Wald two-sided $z=2$ sample size after correcting for temporal autocorrelation in Markovian trajectories (thinning, block averaging, integrated autocorrelation time, or related standard procedures).

An $\ell^1$ distance between two candidate distributions is a useful summary statistic, but it does not directly tell an experimenter whether the two distributions can be distinguished at a specified confidence level. We therefore convert the model gap into a sample-size requirement on the most informative single coordinate.

The Tsallis comparison of Section~\ref{sec:tsallis-compare} reports an $\ell^1$ distance, but does not specify which physical observable would discriminate between the alternatives or how many samples would be needed to do so. The natural finite-state observable is the population frequency $\hat p_\omega$ on each state, accumulated from $N$ independent observations. Among the three frequencies, the highest-cost weight $p_3$ is the most directly informative, since it is the most sensitive to the cost choice in the Gibbs construction.

To put cost frameworks on equal footing, we adopt the following protocol. Take the RCL Gibbs distribution $p^{\mathrm{RCL}}$ at $\Ebar=0.45$ as the generating alternative in a model-discrimination power calculation (Section~\ref{sec:example-numerics}). Each candidate cost framework is asked to explain that physical state by fitting its inverse temperature to the empirical mean of its own cost computed at $p^{\mathrm{RCL}}$. The four candidates on the same ratio data $(r_1,r_2,r_3)=(1,\varphi^{-2},\tan(\pi/12))$ are: the RCL cost $X^{\mathrm{RCL}}_\omega=J(r_\omega)=(0,\tfrac12,1)$ (Section~\ref{sec:example}); the squared-log surrogate $X^{\mathrm{SQ}}_\omega=\tfrac12(\log r_\omega)^2\approx(0,0.463,0.867)$ from Lemma~\ref{lem:ancestor}; the affinity-as-energy assignment $X^{\mathrm{A}}_\omega=-\log r_\omega\approx(0,0.962,1.317)$, which treats the reference-oriented log-ratio directly as the energy-like coordinate (the canonical-ensemble version of the stochastic-thermodynamic comparison in Section~\ref{sec:stoch-thermo}); and the Tsallis $q_T$-exponential at the same RCL cost vector and constraint value (Table~\ref{tab:tsallis-compare}).

Table~\ref{tab:discriminate} reports the predicted $p_3$ for each framework together with $N_{2\sigma}$ as defined in the table caption, required to distinguish that prediction from $p^{\mathrm{RCL}}_3=0.2846$ at the $2\sigma$ level on a single coordinate of the empirical type. This estimate is a deliberately simple power calculation: it uses a prespecified single-coordinate statistic and the binomial standard error $\sigma^2_{\hat p_3}=p_3(1-p_3)/N$ only after $N$ has been reduced to an effective independent-sample count as in the opening paragraph. A multi-coordinate goodness-of-fit test would require the corresponding likelihood-ratio or multiple-comparison calibration. The affinity-as-energy alternative is the most readily distinguishable: at $\Ebar=0.45$ it predicts $p_3\approx 0.295$ versus the RCL value $0.285$, a binomial-distinguishable difference at $N\approx 8\!\times\!10^{3}$---in principle accessible in systems that provide on the order of $10^4$ effectively independent state observations after autocorrelation correction. The squared-log surrogate and the Tsallis $q_T=2.5$ case require $N\sim 10^{5}$--$10^{6}$, which is more demanding and would be realistic only in high-throughput or highly repeatable assays. The Tsallis case at $q_T=0.5$, very close to the Gibbs branch $q_T=1$, is essentially indistinguishable.

\begin{table}[htbp]
  \centering
  \footnotesize
  \begin{tabular}{l c c r r r}
    \hline
    Framework & Cost vector $X_\omega$ & Fitted $\beta$ & $p_3$ & $\Delta p_3$ & $N_{2\sigma}$\\\hline
    RCL Gibbs & $(0,0.500,1.000)$ & $0.301$ & $0.2846$ & $0$ & ---\\
    Squared-log surrogate & $(0,0.463,0.867)$ & $0.346$ & $0.2858$ & $+0.0012$ & $5.7\!\times\!10^{5}$\\
    Affinity-as-energy & $(0,0.962,1.317)$ & $0.210$ & $0.2945$ & $+0.0099$ & $8.3\!\times\!10^{3}$\\
    Tsallis $q_T=0.5$ & $(0,0.500,1.000)$ & $0.280$ & $0.2840$ & $-0.0006$ & $2.1\!\times\!10^{6}$\\
    Tsallis $q_T=2.5$ & $(0,0.500,1.000)$ & $0.378$ & $0.2864$ & $+0.0018$ & $2.4\!\times\!10^{5}$\\\hline
  \end{tabular}
  \caption{Three-state example at $\Ebar=0.45$. Each candidate framework fits its inverse temperature $\beta$ to match the mean of its own cost computed at $p^{\mathrm{RCL}}$, and predicts the heavy-state weight $p_3$. The column $N_{2\sigma}$ is the Wald two-sided $z=2$ binomial sample-size formula $N_{2\sigma}\approx 4p_3(1-p_3)/(\Delta p_3)^2$, i.e.\ the number of \emph{effectively independent} observations of state~$3$ needed to resolve the shift $\Delta p_3$ from $p^{\mathrm{RCL}}_3=0.2846$ at nominal $2\sigma$ on $\hat p_3$ alone, after reducing raw trajectory length to an equivalent independent-sample count when data are Markov-correlated (cf.\ the opening of Section~\ref{sec:discriminating-observables}).}
  \label{tab:discriminate}
\end{table}

The discriminating power increases sharply as $\Ebar$ moves away from the high-temperature limit $\Ebar\to\tfrac12$. At $\Ebar=0.20$ (recognition temperature $\TR\approx 0.47$, deeper into the cost-penalized regime), the same protocol gives $p^{\mathrm{RCL}}_3\approx 0.082$ and predicted differences $\Delta p_3\approx 0.035$ for the affinity-as-energy alternative, $0.032$ for Tsallis $q_T=2.5$, and $0.022$ for Tsallis $q_T=0.5$, requiring binomial samples of only $N_{2\sigma}\approx 250$, $300$, and $640$ respectively---within reach of moderate-throughput dwell-time histograms or equilibrium-population assays in favorable systems~\cite{Seifert2012,RaoEsposito2016}. The squared-log surrogate, which agrees with the RCL cost only to second order at $r=1$, becomes distinguishable at $N_{2\sigma}\approx 1.3\!\times\!10^{4}$ in this regime.

The qualitative content is therefore that the present finite-state framework makes quantitatively distinct predictions from each of the natural alternatives---squared-log, affinity-as-energy, and Tsallis---on the same input ratios, and that the most directly competing framework (affinity-as-energy, which one would naturally adopt by treating the dimensionless edge affinity $\log r_\omega$ as an energy) is distinguishable at sample sizes that are, in principle, accessible after autocorrelation correction over the illustrated parameter range, becoming sharply distinguishable away from the uniform-cost limit. The discriminating observable need not be sophisticated: a single coordinate of the empirical type vector suffices. These estimates measure how distinguishable the RCL prediction is from each alternative at fixed mean cost, conditional on the RCL Gibbs law being the generating distribution. They are not a falsifiability test of the RCL itself; an actual test would require an experimental three-state system with independently measured rate ratios and stationary populations, so that the RCL, affinity-as-energy, and Tsallis predictions could each be compared against measured data.

\subsection{Dependence on the constraint}
Varying $\Ebar$ within its admissible range $(0,\tfrac12)$ while keeping the cost vector $X=(0,\tfrac12,1)$ fixed gives a more global view. As $\Ebar$ increases from $0$ toward the uniform mean $\tfrac12$, the canonical parameter $\beta$ from Proposition~\ref{prop:inv-beta} decreases from $+\infty$ to $0^+$, the recognition temperature $\TR=1/\beta$ rises from $0$ to $+\infty$, and the Gibbs measure deforms continuously from the point mass on the zero-cost outcome to the uniform measure. Table~\ref{tab:Ebar-sweep} lists the canonical parameters at three sample points approaching the midpoint $\Ebar=\tfrac12$ from below; the rapid divergence of $\TR$ and the flattening of the Gibbs weights are visible from the numerical entries.

Figure~\ref{fig:gibbs_weights} displays this dependence graphically. The weight $p_1$ on the zero-cost outcome decays monotonically as $\Ebar$ increases, while $p_3$ on the highest-cost outcome rises monotonically; $p_2$ traces a milder curve between the two. All three curves converge to $\tfrac13$ as $\Ebar\to\tfrac12$, the high-temperature limit in which the cost contribution to the Gibbs weights becomes negligible. This deformation is the finite-state shadow of the standard high-temperature limit of the canonical ensemble.

\begin{table}[htbp]
  \centering
  \begin{tabular}{lrrrr}
    \hline
    $\langle X\rangle$ (target) & $\beta$ & $\TR=1/\beta$ & $Z$ & $p_1$ (rounded)\\\hline
    $0.35$ & $0.932$ & $1.073$ & $2.021$ & $0.495$\\
    $0.45$ & $0.301$ & $3.321$ & $2.600$ & $0.385$\\
    $0.48$ & $0.120$ & $8.328$ & $2.829$ & $0.354$\\\hline
  \end{tabular}
  \caption{Same three-state costs $X=(0,\tfrac12,1)$ as Table~\ref{tab:three}; canonical parameters as the target mean cost $\langle X\rangle$ (the constraint value denoted $\Ebar$ elsewhere) varies (numerical root of $\langle X\rangle_{p^{(\beta)}}=\Ebar$).}
  \label{tab:Ebar-sweep}
\end{table}

\begin{figure}[htbp]
  \centering
  \includegraphics[width=0.6\textwidth]{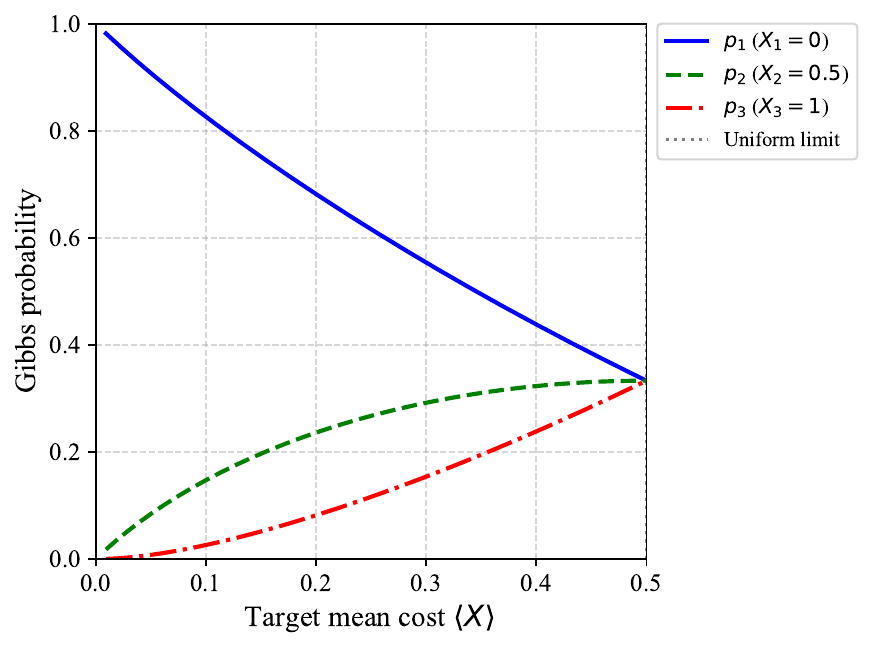}
  \caption{Gibbs weights $p_k$ as a function of the target mean cost $\langle X\rangle$ for the three-state example with costs $X=(0,0.5,1)$, plotted over the positive-temperature range $\Ebar\in(0,\tfrac12)$ in which $\beta>0$ (Proposition~\ref{prop:inv-beta}); $\Ebar$ denotes the same constraint as the column $\langle X\rangle$ (target) in Table~\ref{tab:Ebar-sweep}. The dotted vertical line marks the high-temperature limit $\Ebar\to\tfrac12$ at which $\beta\to 0^+$ and the distribution approaches the uniform measure.}
  \label{fig:gibbs_weights}
\end{figure}

\FloatBarrier
\section{Discussion}
\label{sec:discussion}

The derivation keeps three steps explicit and separate: fix the scalar cost from the RCL, recover Shannon entropy from multinomial counting, and apply convex optimization to obtain Gibbs weights. Once $X_\omega$ is fixed, Section~\ref{sec:gibbs} reproduces the maximum-entropy characterization of the Gibbs measure and Section~\ref{sec:fe} the free-energy--Kullback--Leibler identity for the canonical exponential family on a finite space.

\subsection{Logical summary of the derivation}
\label{sec:logical-summary}
Table~\ref{tab:chain} lists the nine steps of the derivation in order; each row is a consequence of those above together with a single new ingredient (calibration, counting, or convex analysis).

\begin{table}[htbp]
  \centering
  \footnotesize
  \setlength{\tabcolsep}{4pt}
  \begin{tabular}{c p{0.42\textwidth} p{0.40\textwidth}}
    \hline
    Step & Result & Reference\\\hline
    1 & (A1)--(A3) $\Rightarrow$ $J(x)=\tfrac12(x+x^{-1})-1$ & Theorem~\ref{thm:wz-uniqueness}\\
    2 & $J(x)\ge (\log x)^2/2$ & Lemma~\ref{lem:ancestor}\\
    3 & $\DivJ$ properties; half-Neyman $\chi^2$ form & Proposition~\ref{prop:divj-basic}\\
    4 & Shannon entropy from constrained multinomials & Theorem~\ref{thm:types}; Corollary~\ref{cor:types-exact}; non-asymptotic bound, Proposition~\ref{prop:stirling-rate}; multi-constraint soft shell, Proposition~\ref{prop:soft-shell-multi}\\
    5 & Existence and uniqueness of $\beta$ ($\TR=1/\beta$ for $\beta>0$) & Lemma~\ref{lem:logZ}, Proposition~\ref{prop:inv-beta}\\
    6 & Finite-state Gibbs variational principle & Proposition~\ref{thm:maxent}\\
    7 & $F_{\mathrm{R}}(q)-F_{\mathrm{R}}(p)=\TR\,\DKL(q\Vert p)$ & Proposition~\ref{thm:fe-kl}\\
    8 & Gibbs minimizes $F_{\mathrm{R}}$ on the simplex & Corollary~\ref{cor:fe-min}\\
    9 & $\DKL$ nonincreasing under detailed balance & Proposition~\ref{prop:KL-markov}, Corollary~\ref{cor:monoFE-markov}\\\hline
  \end{tabular}
  \caption{Step-by-step derivation chain. Origins of each result and the contribution categories (N1)--(N3) are stated in Section~\ref{sec:axioms}.}
  \label{tab:chain}
\end{table}

\subsection{Relation to Jaynes and standard statistical mechanics}
\label{subsec:jaynes}
In Jaynes's maximum-entropy program~\cite{Jaynes1957,Jaynes1957II,PresseGhoshLeeDill2013}, a scalar cost---typically a Hamiltonian or phenomenological energy---is chosen, its mean is fixed on the simplex, and Shannon entropy is maximized, yielding the Gibbs weights and an inverse temperature as Lagrange multiplier. The remaining modeling freedom lies in the choice of cost functional and constraint value; multinomial counting, Stirling entropy, and the convex duality identities are shared with the present work.

The present work differs in one place: the cost is selected upstream by the RCL together with~(A2), rather than chosen as an external Hamiltonian. Once dimensionless input ratios $\{r_\omega\}$ are specified, $X_\omega=J(r_\omega)$ has the closed form~\eqref{eq:J-closed}. Modeling freedom is relocated from the choice of cost functional to the choice of ratio assignment, not eliminated: specifying $\{r_\omega\}$ is itself a modeling commitment of comparable difficulty to specifying a Hamiltonian, and the Tsallis comparison of Section~\ref{sec:tsallis-compare} shows that even with $\{r_\omega\}$ fixed there is residual freedom in choosing the composition law.

The Shore--Johnson axiomatization~\cite{ShoreJohnson1980,Caticha2012} is complementary. Shore and Johnson fix Shannon entropy as the unique self-consistent inference rule under linear constraints; here Shannon entropy is taken as given, emerging from multinomial counting in Section~\ref{sec:counting}, and the focus is on which cost the entropy is maximized against. In one slogan: Shore--Johnson constrains the entropy; the RCL constrains the cost. Caticha's entropic inference~\cite{Caticha2004,Caticha2012} extends the Shore--Johnson program into a broader epistemic setting on which the RCL is silent. A natural continuation, which we leave open, is to ask whether the Shore--Johnson axioms restated in ratio coordinates would force, or merely permit, the d'Alembert composition law~\eqref{eq:composition}; we return to this in Section~\ref{sec:conclusion}.

From the large-deviation perspective, Sanov's theorem and its conditional formulations describe how empirical measures concentrate under linear constraints in the large-$N$ limit~\cite{Sanov1957,DemboZeitouni2010,CsiszarKorner1981}, with the Gibbs measure as the unique minimizer of the rate function. Theorem~\ref{thm:types} is the finite-$K$ method-of-types instance of this picture~\cite{CoverThomas2006}, with rate of convergence controlled by Proposition~\ref{prop:stirling-rate}; see~\cite{Touchette2009,Ellis1985} for the broader framework. The information-geometric viewpoint of Amari~\cite{Amari1985,AmariNagaoka2000} describes $\{p^{(\beta)}\}_{\beta\in\R}$ as a one-dimensional exponential family with Fisher metric induced by $\log Z$ and Proposition~\ref{thm:fe-kl} as the Legendre--Fenchel duality between natural and expectation parameters; this geometry is the natural language for extensions to higher-dimensional sufficient statistics, but is not pursued here.

\paragraph{\texorpdfstring{Neighboring literatures on the $\cosh$-kernel and on multi-variable functional equations.}{Neighboring literatures on the cosh-kernel and on multi-variable functional equations.}} The closed cost $J(e^t)=\cosh t-1$ is a hyperbolic-cosine kernel in the log-ratio coordinate $t=\log r$, a form that appears in several neighboring information-theoretic and information-geometric literatures. The Itakura--Saito divergence~\cite{ItakuraSaito1968} and its symmetrized variants on the positive cone $\R_{>0}$ are ratio-based contrasts---built directly from $r=q/p$---in which the symmetric (even) component is of the same hyperbolic-cosine type as $J$, while the antisymmetric (odd) component is the log-ratio that also appears in the affinity $\log r$ of Section~\ref{sec:stoch-thermo}; in the symmetric case both Itakura--Saito and the present $J$ vanish quadratically at $r=1$ and grow superquadratically in $|\log r|$. On positive-definite matrices, the symmetrized Bregman or log-determinant Stein divergences~\cite{ChebbiMoakher2012,Sra2012} exhibit a related $\cosh$-of-log-eigenvalue structure as the natural even part of the asymmetric log-determinant divergence. The Gibbs construction here recovers the one-dimensional exponential family with cumulant generator $\log Z(\beta)$; whether the RCL Gibbs measure carries an additional information-geometric interpretation beyond this family---for example a non-trivial dually-flat structure on the positive cone induced by $\DivJ$---is not addressed in this paper. For the multi-variable functional-equation theory used in Theorem~\ref{thm:wz-uniqueness} (the d'Alembert equation on $\R^2$ in log-coordinates) the standard reference beyond Acz\'el~\cite{Aczel1966} and Kuczma--Choczewski--Ger~\cite{Kuczma2009} is Acz\'el and Dhombres~\cite{AczelDhombres1989}.

\subsection{\texorpdfstring{Connection to stochastic thermodynamics: $J(r)$ as one symmetric summary of the edge data}{Connection to stochastic thermodynamics: J(r) as one symmetric summary of the edge data}}
\label{sec:stoch-thermo}
The choice to enter the ratios $\{r_\omega\}$ through the cost $X_\omega=J(r_\omega)$, rather than through the affinity $\log r_\omega$, has a direct interpretation in the language of stochastic thermodynamics~\cite{Seifert2012,EspositoVanDenBroeck2010}. Consider a reversible continuous-time Markov chain on $\Omega$ with positive transition rates $W_{ij}$ ($i\ne j$) satisfying detailed balance $\pi_iW_{ij}=\pi_jW_{ji}$ with respect to a strictly positive stationary law $\pi$~\cite{LevinPeresWilmer2017}. Each edge $i\!\leftrightarrow\!j$ then carries a single dimensionless ratio
\[
  r_{ij}:=W_{ij}/W_{ji}=\pi_j/\pi_i,\qquad r_{ji}=r_{ij}^{-1}.
\]
This scalar admits two natural summaries on the same edge data. The rate-ratio contribution to the stochastic-thermodynamic force is
\begin{equation}
  A^{\mathrm{rate}}_{ij}:=\log r_{ij}=-A^{\mathrm{rate}}_{ji},
  \label{eq:affinity}
\end{equation}
which is antisymmetric under edge reversal. At a nonequilibrium distribution $q$, the corresponding Schnakenberg force is
\[
  \mathcal A_{ij}(q):=\log\frac{q_iW_{ij}}{q_jW_{ji}}
  =\log\frac{q_i}{q_j}+A^{\mathrm{rate}}_{ij}.
\]
The edge entropy-production contribution is
\[
  (q_iW_{ij}-q_jW_{ji})\mathcal A_{ij}(q)\ge 0,
\]
and the full force $\mathcal A_{ij}(q)$, not the bare rate-ratio contribution $A^{\mathrm{rate}}_{ij}$, vanishes at detailed-balance equilibrium $q=\pi$~\cite{Schnakenberg1976,Seifert2012,EspositoVanDenBroeck2010}. One symmetric summary of the same edge-ratio data is the RCL cost
\begin{equation}
  J(r_{ij})=\cosh A^{\mathrm{rate}}_{ij}-1=J(r_{ji})
  \label{eq:J-vs-affinity}
\end{equation}
(\textit{Any even function $f$ with $f(0)=0$ furnishes an alternative symmetric summary of the same edge ratio, for instance via $f(\log r_{ij})$ or $f(|A^{\mathrm{rate}}_{ij}|)$.})
It is one natural choice among them, selected by the RCL axioms (S1)--(S3) and the degree-two closure of Section~\ref{sec:rcl}: it is even under edge reversal, depends only on $|A^{\mathrm{rate}}_{ij}|$, and vanishes precisely when the bare edge ratio is unity, $r_{ij}=1$, i.e.\ when the two oriented rates are equal before weighting by the stationary distribution. This is distinct from detailed-balance equilibrium, for which the full force $\mathcal A_{ij}(q)$ vanishes at $q=\pi$ even when $r_{ij}\neq 1$. Canonicity is not claimed, and proving that $J$ is the unique such functional compatible with detailed-balance entropy-production inequalities or current-fluctuation symmetries would require additional axioms not stated here.

This parity decomposition gives the cost $X_\omega=J(r_\omega)$ a recognizable physical role: it is a symmetric, equilibrium-oriented summary of the same rate-ratio data whose antisymmetric component enters stochastic thermodynamics through forces and currents. The Gibbs construction of Sections~\ref{sec:gibbs}--\ref{sec:fe} uses only this symmetric content $J$, consistent with its restriction to finite-state equilibrium canonical-ensemble objects; the decomposition itself, however, is not unique (any even function of $\log r$ with $f(0)=0$ furnishes a symmetric summary), and the same latitude applies when reading $J$ against the stochastic-thermodynamic literature. Cycle affinities, broken-detailed-balance currents, and trajectory-level fluctuation relations~\cite{PolettiniEsposito2014,RaoEsposito2016,EspositoVanDenBroeck2010} would require retaining the antisymmetric/current component as additional structure, in line with the information-geometric thermodynamic interpretation of Ito~\cite{Ito2018} and the thermodynamic uncertainty relations bounding precision against dissipation~\cite{BaratoSeifert2015}. We do not pursue the corresponding non-equilibrium extension here; the present finite-state framework keeps only the symmetric content of each edge and is therefore restricted to detailed-balance equilibrium statistics.

\subsection{Where the input ratios come from in practice}
\label{sec:scope-and-applications}

The framework treats $\{r_\omega\}$ as primary input. Three classes of finite-state systems supply such ratios directly, without an intervening Hamiltonian:
\begin{enumerate}
\item Reversible chemical equilibria. For a closed reaction network at thermodynamic equilibrium, equilibrium constants $K_{\mathrm{eq}}=[\mathrm{B}]/[\mathrm{A}]$ are dimensionless ratios accessible from concentration measurements (spectroscopic absorbance ratios, mass spectrometry abundance ratios, NMR integration). These ratios are the natural inputs to a finite-state ratio-cost model whose states are reactant/product species and whose Gibbs distribution gives a canonical finite-state population model at the prescribed mean cost.
\item Reversible single-molecule kinetics. Detailed-balance ratios $W_{ab}/W_{ba}$ between transition rates of a reversible discrete-state Markov chain on conformations or binding states can be estimated directly from dwell-time histograms in single-molecule fluorescence or atomic-force experiments, even when the underlying free-energy profile along the chosen reaction coordinate is not directly accessible.
\item Discrete-state inference from log-odds. In binary or multi-class classification settings where one observes posterior odds ratios across categories without an explicit log-likelihood, the ratios $r_\omega$ are the experimentally available primitives, and~\eqref{eq:gibbs} produces a finite-state distribution whose mean cost matches an empirical constraint.
\end{enumerate}
In each case the operative quantity is a ratio, not a Hamiltonian, and the relocation of modeling freedom from a cost functional to $\{r_\omega\}$ together with the d'Alembert composition law is a step toward more transparent inference rather than a thermodynamic re-derivation. The framework does not, on its own, predict the ratios; it converts them, once specified, into a canonical exponential family.

\subsection{Other composition laws and other axiomatic frameworks}
\label{sec:other-axiomatics}
The motivation of~\eqref{eq:composition} from~(S1)--(S3) in Section~\ref{sec:rcl} is not a uniqueness theorem. Several alternative composition laws on the ratios $r$ are also consistent with the multiplicative algebra of detailed balance and equilibrium constants:
\begin{enumerate}
\item[(C1)] Multiplicative Cauchy: $J(xy)=J(x)J(y)$ on $\R_{>0}$. With $J(1)=0$ this forces $J\equiv 0$; the $J(1)\neq 0$ branches yield $J(x)=x^\lambda$ but lose (S2). The corresponding maximum-entropy distribution is a power-law family.
\item[(C2)] Additive Cauchy in log-coordinates: $G(u+v)=G(u)+G(v)$ for $G(u)=J(e^u)$, with continuous solutions $G(u)=cu$, hence $J(x)=c\log x$. Violates the reciprocity condition (S1) for all $c\neq 0$, since then $J(x^{-1})=-J(x)$, not $J(x)$.
\item[(C3)] Squared-log surrogate (additive d'Alembert): $G(u+v)+G(u-v)=2G(u)+2G(v)$, with continuous solutions $J(x)=c(\log x)^2$; the choice $c=\tfrac12$ is the surrogate of Lemma~\ref{lem:ancestor}. Compatible with (S1)--(S2) but strictly weaker far from equilibrium than~\eqref{eq:J-closed}.
\item[(C4)] Tsallis $q$-deformation: replaces $J$ by $J_q(x)=(x^{1-q}-1)/(1-q)$ and $H$ by the $q$-entropy $S_q$, yielding $q$-exponential distributions~\cite{Tsallis1988,Naudts2011}; numerical comparison in Section~\ref{sec:tsallis-compare}.
\item[(C5)] Naudts $\phi$-exponentials: replaces $\exp$ by a deformed exponential $\phi^{-1}$, with Tsallis as the special case $\phi(x)=x^q$~\cite{Naudts2011,Naudts2004}.
\end{enumerate}
Table~\ref{tab:composition-comparison} collects these alternatives, together with the one-parameter bilinear degree-two template in its second row: normalization $\alpha=1$ recovers the RCL~\eqref{eq:composition}, while $\alpha=0$ collapses to the additive (C3) branch and is excluded by the nonzero bilinear coupling adopted in Section~\ref{sec:rcl}. The RCL is one option among them, and its specific status is the following: under (S1)--(S3) and the additional normalized degree-two d'Alembert closure of Section~\ref{sec:rcl}, $J(x)=\tfrac12(x+x^{-1})-1$ is the unique non-trivial continuous solution. Drop that closure and the alternatives (C1)--(C5) become admissible. Drop (S1) and the Tsallis and additive-Cauchy options become admissible. The RCL is therefore privileged only relative to the joint adoption of (S1)--(S3) plus the normalized degree-two d'Alembert closure; whether that joint adoption is the right modeling commitment is the substantive question on which the entire framework rests, and the present paper does not attempt to settle it.

\begin{table}[htbp]
  \centering
  \footnotesize
  \setlength{\tabcolsep}{4pt}
  \begin{tabular}{p{0.18\textwidth} p{0.27\textwidth} p{0.20\textwidth} p{0.27\textwidth}}
    \hline
    Composition law & Cost $J$ on $\R_{>0}$ & Symmetric in $x\leftrightarrow x^{-1}$? & MaxEnt distribution\\\hline
    RCL~\eqref{eq:composition} & $\tfrac12(x+x^{-1})-1$ ($\cosh\log x-1$) & yes & Gibbs $\propto e^{-\beta J}$\\
    Bilinear degree-two (unnormalized $\alpha$) & $J(xy)+J(x/y)=2\alpha J(x)J(y)+2J(x)+2J(y)$; $\alpha=1$ is~\eqref{eq:composition}; $\alpha=0$ is the Cauchy/(C3) degeneracy excluded by nonzero bilinear coupling & yes (nontrivial branch) & Gibbs $\propto e^{-\beta J}$ (same class as RCL under $J\mapsto \alpha J$, $\beta\mapsto \beta/\alpha$)\\
    Squared-log (C3) & $\tfrac12(\log x)^2$ & yes & log-Gaussian $\propto e^{-\beta(\log x)^2/2}$\\
    Multiplicative Cauchy (C1) & $x^\lambda$ (loses (S2)) & no & power-law family\\
    Additive Cauchy (C2) & $c\log x$ (loses (S1)) & no & geometric $\propto x^{-\beta c}$\\
    Tsallis $q$-deformation~\cite{Tsallis1988} & $J_q(x)=(x^{1-q}-1)/(1-q)$ & no in general & $q$-exponential\\
    Naudts $\phi$-exponential~\cite{Naudts2011} & $\phi$-dependent & $\phi$-dependent & $\phi$-exponential family\\\hline
  \end{tabular}
  \caption{Composition laws on positive ratios and their resulting maximum-entropy distributions, on a finite outcome space with the cost $J$ playing the role of energy. The RCL is the entry on the first row; the second row is the bilinear degree-two template before fixing $\alpha=1$, showing how the non-degenerate bilinear coupling excludes the $\alpha=0$ additive reduction to~(C3). The squared-log surrogate is the (S1)-(S2)-respecting Cauchy-additive option in log-coordinates. Tsallis and Naudts deformations replace not only the cost but also the entropy functional, and yield non-Gibbsian distributions with power-law tails.}
  \label{tab:composition-comparison}
\end{table}

\subsection{Scope and limitations}
\label{sec:scope}

The finite-state setting dictates the principal limitations. Theorems~\ref{thm:types}--\ref{thm:fe-kl} are proved on a finite outcome space~$\Omega$ with strictly positive weights wherever needed; without further work they do not address continuum configuration spaces, hard-core or singular interactions, quantum reduced states, spatially extended Gibbs measures, or equivalence of ensembles in the sense of Ruelle~\cite{Ruelle1969} or Lanford~\cite{Lanford1973}. The finite-state Lyapunov statement of Proposition~\ref{prop:KL-markov} and Corollary~\ref{cor:monoFE-markov} should not be conflated with the Boltzmann $H$-theorem for an interacting many-particle gas.

The framework should not be applied directly to spatially extended lattice models. For an $L$-site Ising or Potts system with state count $K=2^L$ (resp.~$K=q^L$), the non-asymptotic Stirling bound of Proposition~\ref{prop:stirling-rate} requires sample size $N\gg K\log N$, equivalently $N\gtrsim L\cdot 2^L$, far beyond what any direct enumeration on $\Omega$ can support; this is the standard curse of dimensionality for method-of-types inequalities (cf.~Remark~\ref{rem:curse}). Theorem~\ref{thm:types} therefore quantitatively controls only models with $K$ fixed, or with the active type count $K^\dagger$ controlled along the relevant subsequence---coarse-grained or effective-state-count models, single-molecule kinetics on a few-state graph, finite reaction networks, and the like. In coarse-grained or multiscale constructions, $K$ should be read as the cardinality of the macrostate partition on which empirical types are accumulated; the Gibbs, entropy, and $\DKL$ identities then refer to that aggregated description, and any induced Markov dynamics between macrocells must be specified separately before the relaxation statements of Section~\ref{sec:fe} apply literally. Sample-size entries such as $N_{2\sigma}$ in Section~\ref{sec:discriminating-observables} inherit the same interpretation: they count effective independent samples at the resolution at which types are measured. Reaching spatially extended models would require either replacing $\Omega$ by a coarse-grained partition with $K$ controlled, or replacing the multinomial counting argument of Section~\ref{sec:counting} by a transfer-matrix or variational large-deviation estimate adapted to the lattice; neither is attempted here.

\subsection{Multi-constraint soft shell}
\label{sec:extensions}
The single-constraint soft-shell multinomial theorem (Theorem~\ref{thm:types}, equation~\eqref{eq:soft-shell-limit}) and the exact-shell corollary for rational data (Corollary~\ref{cor:types-exact}) are proved in Section~\ref{sec:counting}. This subsection records the extension to several commuting affine constraints.

\paragraph{Several commuting linear constraints.}
The same soft-shell construction extends to $L\ge 1$ simultaneous affine constraints. Given real-valued cost vectors $X^{(\ell)}\in\R^K$ and target means $\Ebar^{(\ell)}\in\R$ for $\ell=1,\ldots,L$, set
\begin{align*}
  \Gamma&:=\Bigl\{p\in\Delta_K:\langle X^{(\ell)}\rangle_p=\Ebar^{(\ell)},\ \ell=1,\ldots,L\Bigr\},\\
  \Gamma_\delta&:=\Bigl\{p\in\Delta_K:\max_{1\le\ell\le L}\bigl|\langle X^{(\ell)}\rangle_p-\Ebar^{(\ell)}\bigr|\le\delta\Bigr\},
\end{align*}
and let $\mathcal{F}_N^{\delta_N}:=\{\mathbf{n}\in\mathbb{Z}_{\ge 0}^K:\sum_k n_k=N,\ \mathbf{n}/N\in\Gamma_{\delta_N}\}$.

\begin{proposition}[Multi-constraint soft-shell version]
\label{prop:soft-shell-multi}
Fix $L\ge 1$, $K\ge 2$, vectors $X^{(\ell)}\in\R^K$, and target means $\Ebar^{(\ell)}\in\R$ for $\ell=1,\ldots,L$. Assume $\Gamma$ is nonempty and set $h_\star^{(L)}:=\max_{p\in\Gamma}H(p)$. Let $\delta_N>0$ satisfy $\delta_N\to 0$ and $N\delta_N\to\infty$. Then $\mathcal{F}_N^{\delta_N}$ is nonempty for all sufficiently large $N$, and
\begin{equation}
  \lim_{N\to\infty}\frac{1}{N}\log\max_{\mathbf{n}\in\mathcal{F}_N^{\delta_N}}W(\mathbf{n})=h_\star^{(L)}.
  \label{eq:soft-shell-multi-limit}
\end{equation}
If in addition $\Gamma$ has nonempty relative interior in $\Delta_K$ and $\{\mathbf{1},X^{(1)},\ldots,X^{(L)}\}$ are linearly independent in $\R^K$, the maximizer is unique and takes the multi-parameter Gibbs form
\begin{equation}
  p^\star_\omega=\frac{\exp\bigl(-\sum_\ell\beta_\ell X^{(\ell)}_\omega\bigr)}{Z(\boldsymbol\beta)},
  \qquad Z(\boldsymbol\beta):=\sum_\omega\exp\Bigl(-\sum_\ell\beta_\ell X^{(\ell)}_\omega\Bigr),
  \label{eq:gibbs-multi}
\end{equation}
where $\boldsymbol\beta=(\beta_1,\ldots,\beta_L)\in\R^L$ is the unique solution of the moment-matching equations $\langle X^{(\ell)}\rangle_{p^{(\boldsymbol\beta)}}=\Ebar^{(\ell)}$, $\ell=1,\ldots,L$.
\end{proposition}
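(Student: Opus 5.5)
The limit \eqref{eq:soft-shell-multi-limit} is obtained by running the proof of Theorem~\ref{thm:types} again with the scalar defect $|\langle g\rangle_p-\Ebar|$ replaced by $\max_\ell|\langle X^{(\ell)}\rangle_p-\Ebar^{(\ell)}|$. The uniqueness and Gibbs-form statement then follows from a Lagrange/convex-duality argument modeled on Lemma~\ref{lem:logZ} and Proposition~\ref{thm:maxent}, now in $L$ dimensions.

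\emph{Step 1 (continuity of the constrained maximum).}
The sets $\Gamma_\delta$ are compact, since each is the intersection of $\Delta_K$ with finitely many closed slabs, and they decrease to $\Gamma$ as $\delta\downarrow 0$. The compactness argument from Theorem~\ref{thm:types} transfers verbatim and gives $\sup_{\Gamma_\delta}H\to h_\star^{(L)}$: a sequence of violators has a convergent subsequence whose limit lies in $\Gamma$, because every map $p\mapsto\langle X^{(\ell)}\rangle_p$ is continuous.

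\emph{Step 2 (upper bound).*}
Any $\mathbf{n}\in\mathcal{F}_N^{\delta_N}$ has type $\mathbf{n}/N\in\Gamma_{\delta_N}$. Proposition~\ref{prop:stirling-rate} is uniform in $\mathbf{n}$ at fixed $K$, so
\[
\tfrac1N\log W(\mathbf{n})\le \sup_{\Gamma_{\delta_N}}H+O(K\log N/N).
\]
Together with Step 1 this gives $\limsup\le h_\star^{(L)}$.

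\emph{Step 3 (lower bound).*}
Take a maximizer $p^\star\in\Gamma$ and round it to an integer vector $\mathbf{n}^{(N)}$ with $\|\mathbf{n}^{(N)}/N-p^\star\|_1\le K/N$. Each constraint defect is then at most $K\max_\ell\|X^{(\ell)}\|_\infty/N$. Since $N\delta_N\to\infty$, this is at most $\delta_N$ for large $N$, so the shell is eventually nonempty and contains $\mathbf{n}^{(N)}$. Continuity of $H$ and Proposition~\ref{prop:stirling-rate} then give $\liminf\ge H(p^\star)=h_\star^{(L)}$.

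\emph{Step 4 (Gibbs form).*}
Uniqueness of the maximizer follows from strict concavity of $H$ on the convex set $\Gamma$. For the form \eqref{eq:gibbs-multi}, I would work with $\psi(\boldsymbol\beta):=\log Z(\boldsymbol\beta)+\sum_\ell\beta_\ell\Ebar^{(\ell)}$. Its Hessian is the covariance matrix of $(X^{(1)},\dots,X^{(L)})$ under $p^{(\boldsymbol\beta)}$. Because $p^{(\boldsymbol\beta)}$ has full support and $\{\mathbf{1},X^{(\ell)}\}$ are linearly independent, no nontrivial combination $\sum_\ell c_\ell X^{(\ell)}$ is constant, so this covariance is positive definite. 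Hence $\psi$ is strictly convex and has at most one critical point, and a critical point is exactly a solution of the moment-matching equations.

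\emph{Existence, the main obstacle.*}
Existence of a minimizer needs coercivity of $\psi$. For a direction $u\neq 0$, the growth of $\psi(tu)$ as $t\to\infty$ is governed by $t\bigl(\max_\omega(-\sum_\ell u_\ell X^{(\ell)}_\omega)+\sum_\ell u_\ell\Ebar^{(\ell)}\bigr)$. This slope is positive precisely when the target vector $(\Ebar^{(\ell)})_\ell$ lies in the relative interior of the convex hull of the points $(X^{(\ell)}_\omega)_\ell$. That is exactly the hypothesis that $\Gamma$ meets the relative interior of $\Delta_K$: some $p$ with all $p_\omega>0$ matches the targets, which forces $\sum_\ell u_\ell(\Ebar^{(\ell)}-X^{(\ell)}_\omega)$ to be strictly positive for some $\omega$ unless it vanishes for all $\omega$, and the latter is excluded by linear independence. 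Checking this translation carefully is where I expect the real work.

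\emph{Identification.*}
Once $\boldsymbol\beta$ exists, I would avoid Lagrange boundary issues by using the identity of Proposition~\ref{thm:fe-kl} with cost $\sum_\ell\beta_\ell X^{(\ell)}$. For every $q\in\Gamma$,
\[
H(p^{(\boldsymbol\beta)})-H(q)=\DKL(q\Vert p^{(\boldsymbol\beta)})\ge 0,
\]
with equality only at $q=p^{(\boldsymbol\beta)}$. This identifies $p^{(\boldsymbol\beta)}$ as the unique maximizer $p^\star$, as in Remark~\ref{cor:entropy-max}.
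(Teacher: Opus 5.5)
Your proposal is correct. Steps 1--3 coincide with the paper's proof of the rate statement: compactness of $\Gamma_\delta$, the uniform Stirling upper bound, and rounding $p^\star$ to within $K/N$ in $\ell^1$ together with $N\delta_N\to\infty$.

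For the Gibbs form you take a genuinely different route. The paper argues on the primal side. It gets the maximizer from compactness and invokes the relative-interior hypothesis as a constraint qualification for KKT multipliers. Stationarity of the Lagrangian then gives $p^\star\propto\exp(-\sum_\ell\beta_\ell X^{(\ell)})$, and strict convexity of $\log Z$ gives uniqueness of $\boldsymbol\beta$. You argue on the dual side. You obtain $\boldsymbol\beta$ as the minimizer of the strictly convex function $\psi(\boldsymbol\beta)=\log Z(\boldsymbol\beta)+\sum_\ell\beta_\ell\Ebar^{(\ell)}$, with existence from coercivity. You then certify optimality through $H(p^{(\boldsymbol\beta)})-H(q)=\DKL(q\Vert p^{(\boldsymbol\beta)})$ for $q\in\Gamma$.

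Your coercivity reasoning is right. A strictly positive $p\in\Gamma$ gives the numbers $a_\omega=\sum_\ell u_\ell(\Ebar^{(\ell)}-X^{(\ell)}_\omega)$ zero $p$-average. So some $a_\omega>0$ unless all vanish, and linear independence of $\{\mathbf{1},X^{(\ell)}\}$ forbids that.

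To close the step you flagged, ray-wise positivity must be upgraded to genuine coercivity. Use $\log Z(\boldsymbol\beta)\ge\max_\omega(-\sum_\ell\beta_\ell X^{(\ell)}_\omega)$, which gives $\psi(\boldsymbol\beta)\ge s(\boldsymbol\beta):=\max_\omega\sum_\ell\beta_\ell(\Ebar^{(\ell)}-X^{(\ell)}_\omega)$. The function $s$ is continuous and positively homogeneous. It is positive on the compact unit sphere, hence bounded below there by some $c>0$. Therefore $\psi(\boldsymbol\beta)\ge c\lvert\boldsymbol\beta\rvert$. Alternatively, cite the recession-direction criterion for closed convex functions.

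What your route buys is this. It never needs to show that the entropy maximizer is strictly positive. It also never applies KKT to an entropy that is not differentiable on the boundary of the simplex; the paper's stationarity step passes over both points. And it isolates exactly where the relative-interior hypothesis is used, namely existence of $\boldsymbol\beta$. The paper's route is shorter and parallels Proposition~\ref{thm:maxent}. However, it leans on a cited constraint qualification and implicitly on $p^\star$ lying in the interior of $\Delta_K$.
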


\begin{proof}
The rate statement~\eqref{eq:soft-shell-multi-limit} adapts the proof of Theorem~\ref{thm:types}. Continuity of each $p\mapsto\langle X^{(\ell)}\rangle_p$ on the compact simplex $\Delta_K$ and continuity of $H$ give, by the same convergent-subsequence argument as in Theorem~\ref{thm:types}, that $\sup_{p\in\Gamma_\delta}H(p)\to h_\star^{(L)}$ as $\delta\downarrow 0$.

For the upper bound, any $\mathbf{n}\in\mathcal{F}_N^{\delta_N}$ has empirical type $\hat p:=\mathbf{n}/N\in\Gamma_{\delta_N}$, so Proposition~\ref{prop:stirling-rate} gives
\[
  \frac1N\log W(\mathbf{n})\le H(\hat p)+o(1)\le\sup_{p\in\Gamma_{\delta_N}}H(p)+o(1)=h_\star^{(L)}+o(1).
\]
For the lower bound, choose $p^\star\in\Gamma$ with $H(p^\star)=h_\star^{(L)}$, which exists by compactness of $\Gamma$ and continuity of $H$. Round to $\mathbf{n}^{(N)}\in\mathbb{Z}_{\ge 0}^K$ with $\sum_k n^{(N)}_k=N$ and $\|\mathbf{n}^{(N)}/N-p^\star\|_1\le K/N$. Setting $M:=\max_\ell\|X^{(\ell)}\|_\infty$, for each $\ell$
\[
  \bigl|\langle X^{(\ell)}\rangle_{\mathbf{n}^{(N)}/N}-\Ebar^{(\ell)}\bigr|
  =\bigl|\langle X^{(\ell)}\rangle_{\mathbf{n}^{(N)}/N}-\langle X^{(\ell)}\rangle_{p^\star}\bigr|
  \le \|X^{(\ell)}\|_\infty\cdot\frac{K}{N}\le\frac{MK}{N},
\]
so $\max_\ell|\langle X^{(\ell)}\rangle_{\mathbf{n}^{(N)}/N}-\Ebar^{(\ell)}|\le MK/N\le\delta_N$ for all sufficiently large $N$ since $N\delta_N\to\infty$. Hence $\mathbf{n}^{(N)}\in\mathcal{F}_N^{\delta_N}$ eventually, and Proposition~\ref{prop:stirling-rate} together with continuity of $H$ yields
\[
  \frac1N\log W(\mathbf{n}^{(N)})=H(\mathbf{n}^{(N)}/N)+o(1)\longrightarrow H(p^\star)=h_\star^{(L)},
\]
giving the matching liminf.

For the Gibbs form~\eqref{eq:gibbs-multi}, $H$ is strictly concave on $\Delta_K$ and $\Gamma$ is a nonempty compact convex polytope, so the maximizer $p^\star$ is unique. The hypothesis that $\Gamma$ has nonempty relative interior in $\Delta_K$ is the standard constraint qualification ensuring existence of Karush--Kuhn--Tucker multipliers, and the linear-independence hypothesis on $\{\mathbf{1},X^{(1)},\ldots,X^{(L)}\}$ ensures uniqueness of those multipliers (the constraint matrix has full rank $L+1$ once the normalization $\sum_\omega p_\omega=1$ is appended). Stationarity of the Lagrangian
\[
  \mathcal{L}(p,\alpha,\boldsymbol\beta)
  =-\sum_\omega p_\omega\log p_\omega-\alpha\Bigl(\sum_\omega p_\omega-1\Bigr)
  -\sum_\ell\beta_\ell\Bigl(\sum_\omega p_\omega X^{(\ell)}_\omega-\Ebar^{(\ell)}\Bigr)
\]
with respect to $p_\omega$ yields $-\log p_\omega-1=\alpha+\sum_\ell\beta_\ell X^{(\ell)}_\omega$, and rearranging gives $p^\star_\omega\propto\exp(-\sum_\ell\beta_\ell X^{(\ell)}_\omega)$; the prefactor is fixed by normalization, producing~\eqref{eq:gibbs-multi}. Uniqueness of $\boldsymbol\beta$ follows from strict convexity of the log-partition function $\boldsymbol\beta\mapsto\log Z(\boldsymbol\beta)$, whose Hessian is $\mathrm{Cov}_{p^{(\boldsymbol\beta)}}(X^{(1)},\ldots,X^{(L)})$; this covariance matrix is positive definite because $\{X^{(\ell)}-\langle X^{(\ell)}\rangle_{p^{(\boldsymbol\beta)}}\mathbf{1}\}_\ell$ are linearly independent in $\R^K$ under the hypothesis on $\{\mathbf{1},X^{(1)},\ldots,X^{(L)}\}$.
\end{proof}

\subsection{\texorpdfstring{Numerical illustration with irrational costs}{Numerical illustration with irrational costs}}
\label{sec:soft-shell-numerics}
The engineered ratios of Section~\ref{sec:example} were chosen so that the RCL costs $J(r_\omega)$ are rational and Corollary~\ref{cor:types-exact} applies directly. To see Theorem~\ref{thm:types} in operation in the generic case, replace those ratios by
\[
  r_2=e^{-1},\qquad r_3=e^{-3/2},
\]
so the cost vector becomes
\[
  X=\bigl(0,\,\cosh 1-1,\,\cosh(3/2)-1\bigr)\approx(0,\,0.5431,\,1.3524),
\]
with both nonzero entries irrational. The uniform-cost reference is $\langle X\rangle_{(1/3,1/3,1/3)}=\tfrac13\bigl[\cosh 1+\cosh(3/2)\bigr]-1\approx 0.6318$, so the admissible mean-cost range under positive temperature is $\Ebar\in(0,0.6318)$. Fix $\Ebar=0.50$, an interior value comparable to the rational example. The moment-matching equation $\langle X\rangle_{p^{(\beta)}}=\Ebar$ is solved by $\beta^\star\approx 0.4462$, $\TR=1/\beta^\star\approx 2.241$, and yields the analytic Gibbs measure
\[
  p^\star\approx(0.4289,\,0.3366,\,0.2346),\qquad Z(\beta^\star)\approx 2.3317,\qquad H(p^\star)\approx 1.0697.
\]
The constraint $\sum_k n_k X_k=N\Ebar$ admits no integer solution for any $N$ because the entries of $X$ are irrationally related, so the exact shell $\mathcal{F}_N$ is always empty. Theorem~\ref{thm:types} still applies: take the tolerance $\delta_N=N^{-1/2}$, which satisfies $\delta_N\to 0$ and $N\delta_N\to\infty$, and round $Np^\star$ to the nearest integer composition. At $N=10^3$ this gives the type $\mathbf{n}^{(N)}=(429,337,234)$ with
\[
  \bigl|\langle X\rangle_{\mathbf{n}^{(N)}/N}-\Ebar\bigr|\approx 5.18\times 10^{-4}
  \;\ll\;\delta_{10^3}\approx 3.16\times 10^{-2},
\]
well within the bound $K\|X\|_\infty/N\approx 4.06\times 10^{-3}$ from the proof of the proposition. The empirical entropy $H(\mathbf{n}^{(N)}/N)\approx 1.0695$ differs from $h_\star=H(p^\star)$ by $2.32\times 10^{-4}$, consistent with the $N^{-1}\log W(\mathbf{n})\to h_\star$ statement~\eqref{eq:soft-shell-limit} at this sample size. The same construction supplies a feasible soft-shell type for every $N\ge\lceil K\|X\|_\infty/\delta_N\rceil=\lceil 4.06\,N^{1/2}\rceil$, i.e.\ for all $N\ge 17$ at this tolerance schedule. Multi-constraint examples are handled identically by replacing $X$ by the constraint matrix $(X^{(\ell)})_{\ell=1}^L$ and applying Proposition~\ref{prop:soft-shell-multi}.

\section{Conclusion}
\label{sec:conclusion}
Under axioms (A1)--(A3), the cost $J$ in~\eqref{eq:J-closed} is determined (Theorem~\ref{thm:wz-uniqueness}), multinomial counting yields Shannon entropy as the rate functional for empirical types (Theorem~\ref{thm:types} and Corollary~\ref{cor:types-exact}, with explicit non-asymptotic Stirling bound, Proposition~\ref{prop:stirling-rate}, and the multi-constraint extension, Proposition~\ref{prop:soft-shell-multi}), and convex maximization at fixed mean cost gives the finite-state Gibbs weights (Proposition~\ref{thm:maxent}) together with the free-energy--Kullback--Leibler identity (Proposition~\ref{thm:fe-kl}). Lemma~\ref{lem:ancestor} and Proposition~\ref{prop:divj-basic} supply the companion ratio-level and divergence-level bounds; Proposition~\ref{prop:KL-markov} and Corollary~\ref{cor:monoFE-markov} extend the static identities to detailed-balance relaxation.

The contribution is a finite-state, explicitly delimited construction linking a ratio-cost axiom to the classical canonical-ensemble machinery through the thread $\{r_\omega\}\to J\to X\to(\text{Gibbs}, F_{\mathrm{R}}, \DKL)$; Steps 4--9 of Table~\ref{tab:chain} are standard, and the RCL itself remains a modeling axiom rather than a consequence of detailed balance alone. The Tsallis comparison of Section~\ref{sec:tsallis-compare} and the alternative-composition-law tabulation of Section~\ref{sec:other-axiomatics} show that the d'Alembert law~\eqref{eq:composition} is one choice among several, motivated by the bookkeeping symmetries~(S1)--(S3) but not forced by them.

Natural continuations include spatially extended models and equivalence of ensembles, continuum configuration spaces with singular interactions, dynamical analogs of the $J$-cost bounds beyond the finite-state Lyapunov statements, and tighter control of the curse-of-dimensionality regime where the state count $K$ grows with system size (cf.~Remark~\ref{rem:curse}). The axiomatic question raised in Section~\ref{subsec:jaynes}---whether the Shore--Johnson axioms restated in ratio coordinates force, or merely permit, the d'Alembert composition law~\eqref{eq:composition}---is also left open here.

The methodological contribution is the organization of these ingredients into a self-contained route from dimensionless ratios to a canonical exponential family on a finite outcome space. The cost-classification step (N1) follows Theorem~\ref{thm:wz-uniqueness} under measurability and local integrability as in Washburn--Zlatanovi\'c~\cite{WashburnZlatanovic2026}; the Stirling bound (N2) is a direct application of Robbins' 1955 inequalities to the finite-state method of types; Theorem~\ref{thm:types} and Corollary~\ref{cor:types-exact} organize the soft-shell and exact-shell counting statements; and the unified notation (N3), together with the alternative-composition tabulation and Tsallis comparison, makes explicit how the chain $\{r_\omega\}\to J\to X\to(\text{Gibbs},F_{\mathrm{R}},\DKL)$ depends on the chosen ratio-cost postulates. In this scope, the paper supplies a finite-state construction for systems whose primitives are dimensionless ratios, with approximation errors quantified and alternative composition laws tabulated. The sample-size power calculation of Section~\ref{sec:discriminating-observables} quantifies how distinguishable the RCL prediction is from the natural ``affinity-as-energy'' and Tsallis alternatives at fixed mean cost, conditional on the RCL Gibbs law as generating distribution; comparison against measured data would require an experimental three-state system with independently measured rate ratios and stationary populations. The construction does not attempt to derive the axioms (S1)--(S3) or the analytic-degree-2 closure from a more primitive principle: it converts those modeling commitments, together with the input ratios, into a closed-form Gibbs-like measure and the associated free-energy / Kullback--Leibler structure on a finite outcome space.

\section*{Author contributions}
J.W.: conceptualization of the project, original draft of the manuscript, analysis, and methodology (CRediT: Conceptualization, Formal Analysis, Investigation, Methodology, Writing -- Original Draft). M.S.: formal analysis, manuscript revision, validation of derivations and numerical examples, project administration, and writing--review and editing (CRediT: Formal Analysis, Validation, Project Administration, Writing -- Review \& Editing).

\section*{Funding}
This work received no external financial support.

\section*{Conflict of interest}
The authors declare that they have no known competing financial interests or personal relationships that could have appeared to influence the work reported in this paper.

\section*{Data and code availability}
No experimental data were generated or analyzed. Short \texttt{Python}/\texttt{NumPy}/\texttt{SciPy} scripts that reproduce the numerical entries in Tables~\ref{tab:three}--\ref{tab:discriminate} and Figures~\ref{fig:ancestor}--\ref{fig:gibbs_weights} accompany this manuscript as separate files together with the \texttt{.tex} source: they solve the moment-matching equation $\langle X\rangle_{p^{(\beta)}}=\Ebar$ numerically and evaluate the Gibbs weights, Tsallis $q_T$-exponential weights, free energy, divergences, and sample-size power-calculation predictions in closed form. The scripts are included in the arXiv ancillary source bundle.

\ack
The authors thank colleagues at the Recognition Physics Institute for helpful discussions during the development of this work.

\clearpage
\section*{References}
\bibliographystyle{unsrt}
\bibliography{Gibbs_Distribution_paper_v14}

\end{document}